\def\@tocline#1#2#3#4#5#6#7{\relax
  \ifnum #1>\c@tocdepth \else
    \par \addpenalty\@secpenalty\addvspace{#2}\begingroup \hyphenpenalty\@M
    \@ifempty{#4}{\@tempdima\csname r@tocindent\number#1\endcsname\relax
    }{\@tempdima#4\relax
    }\parindent\z@ \leftskip#3\relax \advance\leftskip\@tempdima\relax
    \rightskip\@pnumwidth plus4em \parfillskip-\@pnumwidth
    #5\leavevmode\hskip-\@tempdima #6\nobreak\relax
    \ifnum#1<0\hfill\else\dotfill\fi\hbox to\@pnumwidth{\@tocpagenum{#7}}\par
    \nobreak
    \endgroup
  \fi}
 \renewenvironment{proof}{{\noindent \itshape
    Proof.}}{\qed\vspace{\baselineskip}}
  \theoremstyle{plain}
  \newtheorem{definition}{Definition}[section]
  \newtheorem{problem}[definition]{Problem}
  \newtheorem{theorem}[definition]{Theorem}
  \newtheorem{example}[definition]{Example}
  \newtheorem{proposition}[definition]{Proposition}
  \newtheorem{lemma}[definition]{Lemma}
  \theoremstyle{remark}
  \newtheorem{remark}[definition]{Remark}
\renewcommand{\vec}[1]{\mathbf{#1}}
\newcommand{\code}[1]{\mathscr{#1}}
\newcommand{\word}[1]{\ensuremath{\boldsymbol{\rm #1}}}
\newcommand{\bfa}{\word{a}}
\newcommand{\bfb}{\word{b}}
\newcommand{\bfc}{\word{c}}
\newcommand{\bfe}{\word{e}}
\newcommand{\bfg}{\word{g}}
\newcommand{\bfh}{\word{h}}
\newcommand{\bfr}{\word{r}}
\newcommand{\bfs}{\word{s}}
\newcommand{\bfu}{\word{u}}
\newcommand{\bfv}{\word{v}}
\newcommand{\mat}[1]{\ensuremath{\boldsymbol{\rm #1}}}
\newcommand{\bfA}{\mat{A}}
\newcommand{\bfG}{\mat{G}}
\newcommand{\Enc}{\mathsf{Enc}}
\newcommand{\goth}[1]{\ensuremath{\mathfrak{#1}}}
\newcommand{\gothp}{\goth{p}}
\newcommand{\gothP}{\goth{P}}
\newcommand{\distrib}{\psi}
\newcommand{\sample}{\leftarrow}
\newcommand{\ffd}[1]{\mathcal{F}_{#1}}
\newcommand{\DLPN}[2]{\code{D}^{\mathsf{LPN}}_{#1,#2}}
\newcommand{\DRLPN}[2]{\code{D}^{\mathsf{RLPN}}_{#1,#2}}
\newcommand{\DMLPN}[2]{\code{D}^{\mathsf{MLPN}}_{#1,#2}}
\DeclareMathOperator{\rot}{\ensuremath{\boldsymbol{\rm rot}}}
\newcommand{\inner}[1]{\langle #1 \rangle}
\newcommand{\OO}{\mathcal{O}}
\newcommand{\F}[1]{\mathbb{F}_{#1}}
\newcommand{\Fq}{\mathbb{F}_q}
\newcommand{\Fqm}{\mathbb{F}_{q^m}}
\newcommand{\CC}{\mathbb{C}}
\newcommand{\QQ}{\mathbb{Q}}
\newcommand{\RR}{\mathbb{R}}
\newcommand{\ZZ}{\mathbb{Z}}
\newcommand{\ideal}[1]{\langle #1 \rangle}
\newcommand{\Gal}{{\rm Gal}}
\newcommand{\Aut}{{\rm Aut}}
\newcommand{\lapin}{\textsc{Lapin}}
\newcommand{\eqdef}{\stackrel{\textrm{def}}{=}}
\renewcommand{\leq}{\leqslant}
\renewcommand{\le}{\leqslant}
\renewcommand{\geq}{\geqslant}
\renewcommand{\ge}{\geqslant}
\newcommand{\map}[4]{\left\{
    \begin{array}{ccc}
      #1 & \longrightarrow & #2\\
      #3 & \longmapsto & #4
    \end{array}\right.
}
\newcommand{\ie}{{\em i.e. }}
\newcommand{\eps}{\varepsilon}
\newcommand{\Prob}{\mathbb{P}}
\newcommand{\adv}{{\rm Adv}}
\newcommand{\fract}[2]{\hbox{\leavevmode
\kern.1em \raise .5ex \hbox{\the\scriptfont0 $#1$}\kern-.1em }/
\hbox{\kern-.15em \lower .25ex \hbox{\the\scriptfont0 $#2$}}
}
\newcommand{\FFDP}{$\mathsf{FF}$--$\mathsf{DP}$}
\newcommand{\MFFDP}{$\mathsf{MFF}$--$\mathsf{DP}$}
\newcommand{\LPN}{$\mathsf{LPN}$}
\newcommand{\RLPN}{$\mathsf{RLPN}$}
\newcommand{\LWE}{$\mathsf{LWE}$}
\newcommand{\MLPN}{$\mathsf{MLPN}$}
\newcommand{\BIKE}{{\sc Bike}}
\newcommand{\HQC}{{\sc HQC}}
\author{Maxime Bombar\inst{1,2}\and Alain Couvreur \inst{2,1}\and Thomas Debris-Alazard \inst{2,1}\thanks{This work was funded by the French Agence Nationale de la Recherche through ANR JCJC COLA (ANR-21-CE39-0011).}}
\institute{LIX, CNRS UMR 7161, \'Ecole Polytechnique,\\
  Institut Polytechnique de Paris,\\
  1 rue Honor\'e d'Estienne d'Orves\\
  91120 {\sc Palaiseau Cedex} \and
  Inria\\
  \email{\{maxime.bombar, alain.couvreur, thomas.debris\}@inria.fr} }
  \author{Alain Couvreur $^{2, 1}$} \email{alain.couvreur@inria.fr}
  \author{Thomas Debris--Alazard $^{2, 1}$}
  \email{thomas.debris@inria.fr}
  \address{$^{1}$ Laboratoire LIX, \'Ecole Polytechnique, Institut
    Polytechnique de Paris, 1 rue Honor\'e d'Estienne d'Orves, 91120
    Palaiseau Cedex}
  \address{$^{2}$ Inria}
\thanks{This work was funded by the French Agence Nationale de la
  Recherche through ANR JCJC COLA (ANR-21-CE39-0011) and ANR BARRACUDA
  (ANR-21-CE39-0009-BARRACUDA)} }
 \title{On Codes and Learning with Errors over Function Fields}
\begin{document}
\maketitle

 \iftoggle{llncs}{
  \begin{abstract}
  It is a long standing open problem to find search to decision
  reductions for structured versions of the decoding problem of linear
  codes. Such results in the lattice-based setting have been carried
  out using number fields: Polynomial--LWE{}, Ring--\LWE{},
  Module--\LWE{} and so on. We propose a function field version of the
  \LWE{} problem. This new framework leads to another point of view on
  structured codes, {\em e.g.} quasi-cyclic codes, strengthening the
  connection between lattice-based and code-based cryptography. In
  particular, we obtain the first search to decision reduction for
  structured codes. Following the historical constructions in
  lattice--based cryptography, we instantiate our construction with
  function fields analogues of cyclotomic fields, namely {\em Carlitz}
  extensions, leading to search to decision reductions on various
  versions of Ring-\LPN{}, which have applications to secure multi
  party computation and to an authentication protocol.

  \iftoggle{llncs}{
    \keywords{Code-based cryptography \and Search to decision reductions \and \LWE{} \and Function fields \and Carlitz modules}
  }
  {
  }
\end{abstract}

   }
   {
    \begin{abstract}
  It is a long standing open problem to find search to decision
  reductions for structured versions of the decoding problem of linear
  codes. Such results in the lattice-based setting have been carried
  out using number fields: Polynomial--LWE{}, Ring--\LWE{},
  Module--\LWE{} and so on. We propose a function field version of the
  \LWE{} problem. This new framework leads to another point of view on
  structured codes, {\em e.g.} quasi-cyclic codes, strengthening the
  connection between lattice-based and code-based cryptography. In
  particular, we obtain the first search to decision reduction for
  structured codes. Following the historical constructions in
  lattice--based cryptography, we instantiate our construction with
  function fields analogues of cyclotomic fields, namely {\em Carlitz}
  extensions, leading to search to decision reductions on various
  versions of Ring-\LPN{}, which have applications to secure multi
  party computation and to an authentication protocol.

  \iftoggle{llncs}{
    \keywords{Code-based cryptography \and Search to decision reductions \and \LWE{} \and Function fields \and Carlitz modules}
  }
  {
  }
\end{abstract}

     \bigskip
    \noindent {\bf Key words: }
    Code-based cryptography $\cdot$ Search to decision reductions
    $\cdot$ \LWE{} $\cdot$ Function fields $\cdot$ Carlitz modules
    \setcounter{tocdepth}{1}
    \tableofcontents
   }

\section{Introduction}
\label{sec:intro}

{\bf \noindent Code-based cryptography.} Error correcting codes are
well known to provide quantum resistant cryptographic primitives such
as authentication protocols \cite{S93,HKLPP12}, signatures
\cite{CFS01,DST19a} or encryption schemes such as McEliece \cite{M78}.
These code-based cryptosystems were built to rely on the following
hard problem: finding a close (or far away) codeword to a given word,
a task called {\em decoding}. In the case of random linear codes of
length $n$, which is the standard case, this problem can be expressed
as follows. First, we are given a vector space $\code{C}$ ({\em i.e.}
the code) of $\Fq^{n}$ generated by the rows of some random matrix
$\vec{G}\in\Fq^{k\times n}$, namely:
\begin{equation}\label{eq:def_code}
	\code{C} \eqdef \{\vec{m}\vec{G} \mid \vec{m} \in \Fq^k\}.
\end{equation}
The decoding problem corresponds, given $\vec{G}$ (in other words
$\code{C}$) and some noisy codeword $\vec{m}\vec{G} + \vec{e}$ where
the number of non-zero coordinates of $\vec{e}$ is equal to $t$ (its
Hamming weight is $|\vec{e}| = t$), to find the error $\vec{e}$ or
what amounts to the same, the original codeword $\vec{m}\vec{G}$.

Usually this decoding problem is considered in the regime where the
code rate $R \eqdef \frac{k}{n}$ is fixed, but there are also other
interesting parameters for cryptographic applications. For instance,
the Learning Parity with Noise problem (\LPN{}) corresponds to the
decoding problem where $n$ is the number of samples, $k$ the length of
the secret while the error is sampled according to a Bernoulli
distribution of fixed rate $t/n$. As the number of samples in \LPN{} is
unlimited, this problem actually corresponds to decoding a random code
of rate arbitrarily close to $0$.

\iftoggle{llncs}{}{While the security of many code-based cryptosystems relies on the
hardness of the decoding problem, it can also be based on finding a
``short'' codeword for the Hamming metric (as in \cite{MTSB12} or in
\cite{AHIKV17,BLVW19,YZWGL19} to build collision resistant hash
functions). It turns out that decoding and finding short codewords are
closely related. It has been shown in \cite{DRT21}
(following the original quantum reduction of Regev in the lattice/\LWE{}
case \cite{R05}) that decoding some code $\code{C}$ is
quantumly-harder than finding a short codeword in its dual
$\code{C}^{\perp}$ (for the standard inner product in $\Fq^{n}$).
A reduction from decoding to the problem of finding short
codewords is also known but in an \LPN{} context
\cite{AHIKV17,BLVW19,YZWGL19}.}

Despite the promising approach of McEliece, there are two drawbacks if one
follows it to design a cryptosystem.
First, the public data in McEliece is a representation of a code which
has to look like random. Assuming this pseudo-randomness property, the security relies on the hardness of the
decoding problem. In that case one needs to publish $\Omega(n^{2})$
bits but at the same time, best generic decoding algorithms
have a complexity exponential in the number $t$ of errors to correct. Therefore, to reach
a security level of $2^{\lambda}$, the public data are of order $\Theta(\lambda^{2})$ if
$t = \Theta(n)$ or even worse of the order $\Theta(\lambda^{4})$ if
$t = \Theta\left(\sqrt{n}\right)$. On the other hand, in
McEliece-like cryptosystems,
the owner of the secret key has to know an efficient decoding
algorithm for the public code.
It turns out that codes for which we
know an efficient decoding algorithm are obtained via polynomial evaluations ({\em e.g.} Goppa codes) or short vectors ({\em e.g.} MDPC codes).
Thus, the
owner of the secret key has to hide the peculiar description of the code he
publishes. It leads to the fact that in McEliece-like cryptosystems,
the security also relies on the difficulty to distinguish the code that
is made public from a random one. This is a second assumption to make in addition to
the hardness of the decoding problem.
\newline

{\bf \noindent Alekhnovich cryptosystem.} In 2003, Alekhnovich \cite{A03}
introduced a new approach to design an encryption scheme
based on error correcting codes. Unlike McEliece cryptosystem,
Alekhnovich truly
relies on the hardness of decoding random codes.
It starts from a random code
$\code{C}$ and proceeds as follows:

\begin{itemize}
  \item \textit{Key Generation.} Let $\vec{e}_{\mathsf{sk}}\in\F2^{n}$
        of small Hamming weight. The public key is
        $(\code{C}, \vec{c} + \vec{e}_{\mathsf{sk}})$ where
        $\vec{c}\in \code{C}$ and the secret key is
        $\bfe_{\mathsf{sk}}$.

        \vspace{\baselineskip}

  \item \textit{Encryption.} To encrypt one bit $\beta\in\{0, 1\}$
        set:

        \begin{itemize}
          \item $\Enc(1) \eqdef \bfu$ where $\bfu\in \F2^{n}$ is a uniformly
                random vector.
          \item $\Enc(0) \eqdef \vec{c}^{\ast}+\vec{e}$ where
                $\vec{e}$ is of small Hamming weight and
                $\vec{c}^{\ast}$ lies in the dual of the code
                $\code{C}_{\mathsf{pub}}$ spanned by $\code{C}$ and
                $\vec{c}+\vec{e}_{\mathsf{sk}}$.
        \end{itemize}

        \vspace{\baselineskip}

      \item \textit{Decryption.} The decryption of $\Enc(\beta)$ is
        $\inner{\Enc(\beta), \vec{e}_{\mathsf{sk}}}$, where
        $\inner{\cdot,\cdot}$ is the usual inner product on $\F2^{n}$.
\end{itemize}
The correction of this procedure relies on the fact that
\[
  \inner{\Enc(0), \vec{e}_{\mathsf{sk}}}
  = \inner{\vec{c}^{\ast} + \vec{e}, \vec{e}_{\mathsf{sk}}}
  = \inner{\vec{e}, \vec{e}_{\mathsf{sk}}},
\]
where we used that
$\vec{e}_{\mathsf{sk}}\in \code{C}_{\mathsf{pub}}$ while
$\vec{c}^{\ast}$ lies in its dual. Now, this inner product is equal to
$0$ with overwhelming probability as $\vec{e}_{\mathsf{sk}}$ and
$\vec{e}$ are of small Hamming weight. On the other hand,
$\inner{\Enc(1), \vec{e}_{\mathsf{sk}}}$ is a uniformly random bit.

Therefore, contrary to McEliece cryptosystem, the security of
Alekhnovich scheme does not depend on hiding the description of a code:

\begin{itemize}[label=\textbullet]
  \item \textit{Key security.} Recovering the private key from public
        data amounts to decoding the random code $\code{C}$\iftoggle{llncs}{.}{, or
        finding a short vector in the code spanned by $\code{C}$ and
        $\vec{c} + \vec{e}_{\mathsf{sk}}$.}
  \item \textit{Message security.} Recovering the plaintext from the
        ciphertext is tantamount to \textit{distinguishing} a noisy
        codeword from a uniformly random vector.
\end{itemize}

\noindent The message security relies on the \textit{decision}
version of the decoding problem. Search and decision versions of the
decoding problem are known to be computationally equivalent using Goldreich-Levin theorem
\cite{FS96}. However, Alekhnovich cryptosystem suffers from major
drawbacks:
\begin{enumerate}
  \item Encrypting one bit amounts to sending $n$ bits;
  \item The public key size is quadratic in the length of ciphertexts.
\end{enumerate}

\noindent While the first issue can easily be addressed, the second flaw needs
more work, and as is, Alekhnovich cryptosystem is not practical.
However, the approach itself was a major breakthrough in code-based
cryptography. It was inspired by the work of Ajtai and Dwork
\cite{AD97} whose cryptosystem is based on solving hard lattice
problems. The latter reference from Ajtai and Dwork is also the
inspiration of Regev famous Learning With Errors (\LWE{}) problem
\cite{R05}, which is at the origin of an impressive line of work. As
Alekhnovich cryptosystem, the original \LWE{} cryptosystem was not
practical either and, to address this issue, structured versions
were proposed, for instance Polynomial-\LWE{} \cite{SSTX09}, Ring--\LWE{}
\cite{LPR10}, Module--\LWE{} \cite{LS15}.
\newline

{\bf \noindent Structured decoding problem.} In the same fashion, for
code--based public key encryptions, it has been proposed to restrict
to codes that can be represented more compactly to reduce the key
sizes.  In McEliece setting, the story begins in 2005 with the results
of \cite{G05} that suggest to use $\ell$--quasi-cyclic codes, \ie
codes
that are generated by a matrix $\vec{G}$ formed out of $\ell$
blocks:
\begin{equation}\label{eq:qc-code}
  \bfG = \begin{pmatrix}
    \rot\left(\bfa^{(1)}\right) & \cdots & \rot\left(\bfa^{(\ell)}\right)
\end{pmatrix},
\end{equation}
each block being a circulant matrix, \ie of the form
\[
  \rot(\bfa) \eqdef \begin{pmatrix}
    a_{0} & a_{1} & \dots & \dots & a_{k-1} \\
    a_{k-1} & a_{0} & \dots & \dots & a_{k-2} \\
    \vdots & \ddots & \ddots & & \vdots \\
    \vdots & & \ddots & \ddots & \vdots \\
    a_{1} & a_{2} & \dots & a_{k-1} & a_{0}
  \end{pmatrix} \text{ with }\bfa \in \Fq^{k}.
\]
The key point is that such codes have a large automorphism group $G$,
and instead of publishing a whole basis, one can only publish a
generating set for the $\Fq[G]$--module structure of the code. That is
to say, a family of vectors whose orbit under the action of $G$ spans
the code. For instance, in the case of quasi-cyclic codes
(\ref{eq:qc-code}), one can publish only the first row of the
$\ell$-circulant generator matrix. It can be argued that the
  quasi--cyclicity could be used to improve the speed-up of generic
  decoding, but the best known approach in the generic case uses
DOOM \cite{S11} which allows to divide
the complexity of decoding by at most $\sqrt{\#G}$, the latter complexity
remaining exponential with the same exponent. Hence, one can keep the
same security parameter, while the size of the public key can be
divided by a factor $O(\#G)$.

This idea leads to very efficient encryption schemes such as \BIKE{}
\cite{AABBBBDGGGMMPSTZ21}, in the McEliece fashion, or \HQC{}
\cite{AABBBDGPZ21a} which is closer to Ring--\LWE{}. Both proposals use
2-quasi-cyclic codes and have been
selected to the third round of NIST competition as alternate
candidates. Other structured variants of the decoding problem
(referred to as Ring--\LPN{}) were also proposed with
  applications to authentication \cite{HKLPP12} or secure MPC
  \cite{BCGIKS20}.\iftoggle{llncs}{ }{ Note that the idea to use codes
  equipped with a non trivial ring action has also been used in rank
  metric \cite{ABDGHRTZABBBO19,AABBBDGZCH19}.}

In other words, the security of those cryptosystems now rely on some
structured variant of the decoding problem.
\newline

{\bf \noindent A Polynomial representation.}
It turns out that a convenient way of seeing $\ell$-quasi-cyclic
codes, is to represent blocks of their generator matrix as elements of
the quotient ring ${\Fq[X]}/{(X^{n}-1)}$, via the $\Fq$--isomorphism:
\[
  \map{\Fq^n}{\fract{\Fq[X]}{(X^{n}-1)}}{\bfa \eqdef (a_{0}, \dots, a_{n-1})}{
    \bfa(X) \eqdef \displaystyle\sum_{i=0}^{n-1}a_{i}X^{i}.}
\]
A simple computation shows that the product of two elements of
${\Fq [X]}/{(X^{n}-1)}$ can be represented with the operator $\rot(\cdot)$:
\[
  \bfu(X)\bfv(X) \mod (X^{n}-1)
  = \bfu\cdot\rot(\bfv)
  = \bfv \cdot  \rot(\bfu)
  = \bfv(X) \bfu(X) \mod (X^{n}-1).
\]
From now on, $\bfu$ can denote either a vector of $\Fq^{n}$ or a
polynomial in ${\Fq[X]}/{(X^{n}-1)}$, and the product of two elements
$\bfu\bfv$ is defined as above.

Consider an $\ell$-quasi-cyclic code with a generator matrix $\bfG$ in
$\ell$-circulant form. Let $\bfs\in\Fq^{n}$ be a secret word of the
ambient space and let $\bfe\in\Fq^{\ell n}$ be an error vector. Under
the above map, the noisy codeword $\bfs\bfG + \bfe$ is
represented by $\ell$ samples of the form
$\bfs\bfa^{(j)} + \bfe^{(j)} \in {\Fq[X]}/{(X^{n}-1)}$ and the
decoding problem of $\ell$-circulant codes corresponds to recovering
the secret $\bfs$ given $\ell$ samples. This can be seen as a code
analogue of the Ring--\LWE{} problem, with access to a fixed number of
samples $\ell$. The rate of the code is $\frac{1}{\ell}$, so
increasing the number of samples corresponds to decode a code
whose rate goes to $0$.

A natural generalization would be to consider multiple rows of
circulant blocks. In this situation, the generator matrix $\bfG$ is of the form
\[
  \bfG = \begin{pmatrix}
    \rot(\bfa^{(1, 1)}) & \cdots & \rot(\bfa^{(1,\ell)})\\
    \vdots & & \vdots\\
    \rot(\bfa^{(m,1)}) & \cdots & \rot(\bfa^{(m,\ell)})\\
\end{pmatrix}
\]
and a noisy codeword $\bfs\bfG + \bfe$ is now represented by
$\ell$ samples of the form
\[
  \sum_{i=1}^{m}\bfs_{i}\bfa^{(i, j)} + \bfe_{j} \in \fract{\Fq[X]}{(X^{n}-1)}
\]
where $\bfs$ can be considered as a collection of $m$ secrets
$\bfs_{1}, \dots, \bfs_{m}$. This would be the code analogue of
Module--\LWE{}, with a rank $m$ module and $\ell$ samples, introduced in
\cite{LS15}.

Contrary to structured lattice cryptosystems, up to now, no reduction
from the search to the decision version of the structured
decoding problem was known. This was pointed out by
NIST \cite{NIST20}, and was a reason for those code-based
cryptosystems to be only considered as alternate candidates for the
third round.  Actually even before NIST standardization process, this
lack of search to decision reduction was already pointed out by the
authors of the Ring--\LPN{} based authentication scheme \lapin{}
\cite{HKLPP12}.  \newline

{\bf \noindent Our contribution.} To handle this lack of
search to decision reduction in the code setting, we propose in this
article a new generic problem called \FFDP{}, for {\em Function Field
  Decoding Problem}, in the Ring--\LWE{} fashion. One of the key ideas
consists in using function fields instead of number fields, the latter
being used in the lattice case. This framework enables us to adapt
directly the search to decision reduction of \cite{LPR10} in the case
of codes.
Frequently in the literature on Ring--\LWE{}, the search to decision
reduction is instantiated with cyclotomic number fields. In the same
spirit we present an instantiation with function fields analogues of
cyclotomic fields, namely the so-called {\em Carlitz extensions}. As
we show, this framework is for instance enough to provide a search to
decision reduction useful in the context of \lapin{} \cite{HKLPP12} or
for a $q$--ary analogue of Ring--\LPN{} used for secure multiparty
computation \cite{BCGIKS20}.
If our reduction does not work for every schemes based on structured
codes such as \HQC{}, we believe that our work paves the way towards a
full reduction.

\begin{remark}
	Note that the use of function fields in coding theory is far from
  being new. Since the early 80's and the seminal work of Goppa
  \cite{G81}, it is well--known that codes called {\em Algebraic
    Geometry} (AG) codes can be constructed from algebraic curves or
  equivalently from function fields and that some of these codes have
  better asymptotic parameters than random ones \cite{TVZ82}. However,
  the way they are used in the present work is completely different.
  Indeed, AG codes are a natural generalization of Reed--Solomon and,
  in particular, are codes benefiting from efficient decoding
  algorithms (see for instance surveys \cite{HP95,BP08,CR21}). In the
  present article, the approach is somehow orthogonal to the AG codes
  setting since we use function fields in order to introduce generic
  problems related to structured codes for which the decoding problem
  is supposed to be hard.
\end{remark}

 {\bf \noindent A function field approach.}
Lattice-based cryptography has a long standing history of using number
fields and their rings of integers to add some structure and reduce
the key sizes. Recall that number fields are algebraic extensions of
$\QQ$ of the form
\[
  K \eqdef \fract{\QQ[X]}{(f(X))},
\]
where $f$ is an irreducible polynomial, and the ring of integers
$\OO_{K}$ is the integral closure of $\ZZ$ in $K$, \ie it is the
subring of $K$ composed of elements which are roots of monic
polynomials with coefficients in $\ZZ$. For instance, cyclotomic
extensions are of the form
$K = \QQ(\zeta_{m}) = {\QQ[X]}/{(\Phi_{m}(X))}$ where $\zeta_{m}$
is a primitive $m$-th root of unity and $\Phi_{m}$ is the $m$-th
cyclotomic polynomial. The ring of cyclotomic integers has a very
specific form, namely $\OO_{K} = \ZZ[\zeta_{m}]$. One of the most used
case is when $m$ is a power of $2$. In this case, setting $m = 2n$, we
have $\Phi_{m} = \Phi_{2n} = X^{n}+1$ and
$\OO_{K} = {\ZZ[X]}/{(X^{n}+1)}$. Such rings have been widely
used since they benefit from a very fast arithmetic thanks to the fast
Fourier transform. In the Ring--\LWE{} setting, one reduces all the
samples modulo a large prime element $q\in\ZZ$ called the {\em
  modulus} and hence considers the ring
${(\ZZ/q\ZZ)[X]}/{(X^{n}+1)}$.\iftoggle{llncs}{}{ Due to inherent considerations of
the Euclidean metric, errors are drawn according to a {\em continuous}
distribution (e.g a Gaussian distribution) $\chi$ over the Euclidean space
$K\otimes_{\QQ}\RR = {\RR[X]}/{(X^{n}+1)}$
and one has to
introduce a technical tool called {\em smoothing parameter} to handle
the {\em discrete} error distributions used in practice. It should be
noted that an equivalent of the smoothing parameter will not be
necessary in our case because our error model will remain discrete.}

When moving from structured lattices to structured codes, it would be
tantalizing to consider the ring ${\Fq[X]}/{(X^{n}-1)}$ as the analogue of
${\ZZ[X]}/{(X^{n}+1)}$.
However, if the two rings have a similar expression they have a
fundamental difference. Note for instance that the former is finite
while the latter is infinite. From a more algebraic point of view,
${\Fq[X]}/{(X^{n}-1)}$ is said to have {\em Krull dimension} $0$
while ${\ZZ[X]}/{(X^{n}+1)}$ has {\em Krull dimension $1$}. In
particular, the former has only a finite number of ideals while the
latter has infinitely many prime ideals.
The main idea of the present article is to lift the decoding problem
and to see ${\Fq[X]}/{(X^{n}-1)}$ as a quotient ${R}/{I}$ of
some ring $R$ of Krull dimension $1$. The ideal $I$ will be the
analogue of the {\em modulus}. This setting can be achieved using so-called
{\em function fields}. It could be argued
that the results of this article could have been obtained without
introducing function fields. However, we claim that
function fields are crucial for at least three
reasons:
\begin{enumerate}
  \item Introducing function fields permits to establish a strong
        connection between cryptography based on structured lattices
        involving number fields on the one hand and cryptography based
        on structured codes on the other hand.
      \item Number theory has a rich history with almost one
          hundred years of development of the theory of function
          fields. We expect that, as number fields did for
        structured lattices, function fields will yield a remarkable
        toolbox to study structured codes and cryptographic questions
        related to them.
      \item A third and more technical evidence is that a crucial part
        of the search to decision reduction involves some Galois
        action. We claim that, even if for a specific instantiation,
        this group action could have been described in a pedestrian
        way on the finite ring ${\Fq [X]}/{(X^n - 1)}$, without
        knowing the context of function fields, such a group action
        would really look like ``a rabbit pulled out of a hat''. In
        short, this group action, which is crucial to conclude the
        search to decision reduction, cannot appear to be something
        natural without considering function fields.
\end{enumerate}

It is well--known for a long time that there is a noticeable analogy
between the theory of number fields and that of function fields.
Starting from the ground, the rings $\ZZ$ and $\Fq[T]$ share a lot of
common features. For instance, they both have an Euclidean division.
Now if one considers their respective fraction fields $\QQ$ and
$\Fq(T)$, finite extensions of $\QQ$ yield the number fields while
finite separable extensions of $\Fq(T)$ are called {\em function
  fields} because they are also the fields of rational functions on
curves over finite fields. Now, a similar arithmetic theory can be
developed for both with rings of integers, orders, places and so on.
Both rings of integers are {\em Dedekind domains}. In particular,
every ideal factorizes uniquely into a product of prime ideals, and
the quotient by any non-zero ideal is always finite. A dictionary
summarizing this analogy between number fields and function fields is
represented in Table \ref{fig:ff_analogy}. Note that actually, many
properties that are known for function fields are only conjectures for
number fields. The best example is probably the Riemann hypothesis
which has been proved by Weil in the early 1940s in the function field
case.

\begin{table}[ht]
  \centering
\[
  \begin{array}{|c|c|}
    \hline
    \text{ Number fields } & \text{ Function fields } \\
    \hline
    \QQ & \Fq(T) \\
    \ZZ & \Fq[T] \\
    \text{Prime numbers } q \in \ZZ & \text{Irreducible polynomials } Q \in \Fq[T]\\
        & \\
    K = \fract{\QQ[X]}{(f(X))} & K = \fract{\Fq(T)[X]}{(f(T, X))}\\
        & \\
    \makecell{\OO_{K} \\ = \text{Integral closure of $\ZZ$} \\ \text{\emph{Dedekind} domain}}  & \makecell{\OO_{K} \\ = \text{Integral closure of $\Fq[T]$} \\ \text{\emph{Dedekind} domain}} \\
        & \\
    { \textbf{characteristic 0}} & {\bf \textbf{characteristic} >0}\\
    \hline
  \end{array}
\]
\caption{A Number-Function fields analogy}
\label{fig:ff_analogy}
\end{table}

With this analogy in hand, the idea is to find a nice function field
$K$ with ring of integers $\OO_{K}$ and an irreducible polynomial
$Q\in\Fq[T]$, called the {\em modulus}, such that
${\OO_{K}}/{Q\OO_{K}} = {\Fq[X]}/{(X^{n}-1)}$. Following the path of
\cite{LPR10}, we are able to provide a search to decision reduction
for our generic problem \FFDP{} when three conditions hold:

\begin{enumerate}
  \item\label{hyp:galois} The function field $K$ is Galois.
  \item\label{hyp:splitting_modulus} The modulus $Q$ does not ramify
        in $\OO_{K}$, meaning that the ideal $Q\OO_{K}$
        factorizes in product of distinct prime ideals.
  \item\label{hyp:galois_distribution} The distribution of errors is
        invariant under the action of the Galois group.
\end{enumerate}

This framework is enough to provide a search to decision reduction
useful in the context of \lapin{} \cite{HKLPP12} or for a $q$--ary
analogue of Ring--\LPN{} used for secure MPC \cite{BCGIKS20}. It should be emphasized that, in the case of
\lapin{}, the search to decision reduction requires to adapt the
definition of the noise which will remain built by applying
independent Bernouilli variables but with a peculiar choice of
$\F{2}$--basis of the underlying ring ${\F{2}[X]}/{(f(X))}$. The
chosen basis is a {\em normal} basis, \ie is globally invariant with
respect to the Galois action. This change of basis is very similar to
the one performed in lattice based-cryptography when, instead of
considering the monomial basis $1, X, \dots, X^{n-1}$ in an order
${\ZZ[X]}/{(f(X))}$, one considers the canonical basis after
applying the Minkowski embedding.  Indeed, the latter is Galois invariant.
We emphasize that, here again, the
function field point of view brings in a Galois action which cannot
appear when only considering a ring such as
${\F{2}[X]}/{(f(X))}$. This is another evidence of the need for
introducing function fields.
\newline

{\noindent \bf Outline of the article.} The present article is organised as follows. Section \ref{sec:prereqFF}  recalls the necessary background about function fields (definitions and important properties). In Section \ref{sec:FFDP} we present the \FFDP{} problem (search and decision versions) as well as our main theorem (Theorem \ref{thm:main}) which states the search to decision reduction in the function field setting.\iftoggle{llncs}{ A proof of this theorem is given in Supplementary Materials for sake of completeness.}{ A proof of this theorem is given in Section \ref{sec:StD}. A reader only interested about the framework of functions fields and our instantiations can safely skip this section.} In Section \ref{sec:carlitz} we give a self contain presentation of Carlitz extensions. They will be used to instantiate our search to decision reduction in Section \ref{sec:applications}, which provides our applications.

 \section{Prerequisites on function fields}\label{sec:prereqFF}
In this section, we list the minimal basic notions on the arithmetic of
function fields that are needed in the sequel. A dictionary drawing
the analogies has been given in Table~\ref{fig:ff_analogy}. For
further references on the arithmetic of function fields, we refer the
reader to \cite{S09,R02}.

Starting from a finite field $\Fq$, {\em a function field} is
a finite extension $K$ of $\Fq(T)$ of degree $n>0$ of the form
\[
  K = \fract{\Fq(T)[X]}{(P(T, X))}
\]
where $P(T,X) \in \Fq(T)[X]$ is irreducible of degree $n$. The field
$K \cap \overline{\F{}}_{q}$ is referred to as {\em the field of
  constants} or {\em constant field} of $K$, where
$\overline{\F{}}_q$ is the algebraic closure of $\Fq$.
In the sequel, we will assume that
$\Fq$ is the full field of constants of $K$,
which is equivalent for $P(T, X)$ to be
irreducible even regarded as a an element of
$\overline{\F{}}_q(T)[X]$ (\cite[Cor.~3.6.8]{S09}).

Similarly to the number field case, one can define the ring of
integers $\OO_K$ as the the ring of elements of $K$ which are the roots
of a monic polynomial in $\Fq[T][X]$. This ring is a {\em Dedekind domain}.
In particular, any ideal $\mathfrak{P}$ has a unique decomposition
$\mathfrak{P}_1^{e_1}\cdots \mathfrak{P}_r^{e_r}$ where the
$\mathfrak{P}_i$'s are prime ideals.

In the sequel, we frequently focus on the following
setting represented in the diagram below: starting from a prime ideal
$\mathfrak{p}$ of $\Fq[T]$ (which is nothing but the ideal generated
by an irreducible polynomial $Q(T)$ of $\Fq[T]$), we consider the ideal
$\mathfrak P \eqdef \mathfrak p \OO_K$ and its decomposition:
\[
  \mathfrak{P} = \mathfrak{P}_1^{e_1} \cdots \mathfrak{P}_r^{e_r}.
\]
\begin{center}
  \begin{tikzpicture} \matrix (m) [matrix of math nodes,row
    sep=3em,column sep=4em,minimum width=2em] {
      \mathfrak{P} \subset \OO_K & K \\
      \mathfrak{p} \subset \Fq[T] & \Fq(T) \\};

    \path[-] (m-1-1) edge (m-2-1) edge (m-1-2) (m-2-1) edge (m-2-2)
    (m-1-2) edge (m-2-2);
\end{tikzpicture}
\end{center}
The prime ideals $\mathfrak{P}_{i}$'s are said to {\em lie above}
$\mathfrak{p}$. The exponents $e_i$'s are referred to as the {\em
  ramification indexes}, and the extension is said to be {\em
  unramified} at $\mathfrak{P}$ when all the $e_i$'s are equal to $1$.
Another important constant related to a $\mathfrak{P}_i$ is its {\em
  inertia degree}, which is defined as the extension degree
$f_i \eqdef [\OO_K/\mathfrak{P}_i:\Fq[T]/\mathfrak p]$ (one can prove
that $\OO_K/\mathfrak{P}_i$ and $\Fq[T]/\mathfrak p$ are both
finite fields). The Chinese Remainder Theorem (CRT) induces a ring
isomorphism between $\OO_{K}/\mathfrak{P}$ and
$\prod_{i=1}^{r}\OO_{K}/\mathfrak{P}_{i}^{e_{i}}$. In particular, when
the extension is unramified at $\mathfrak{P}$, the quotient
$\OO_{K}/\mathfrak{P}$ is a product of finite fields.
Finally, a well-known result asserts that
\begin{equation}\label{eq:fundamental_equality}
  n = [K:\Fq(T)] = \sum_{i=1}^r e_i f_i.
\end{equation}

\noindent {\bf Finite Galois extensions.} \iftoggle{llncs}{}{Recall that a finite
algebraic field extension $L/K$ is said to be a {\em Galois extension}
when the automorphism group
\[
  \Aut(L/K) \eqdef \{\sigma \colon L \rightarrow L \mid \sigma \text{
    is an isomorphism with } \sigma(a) = a \text{ for all } a\in K\}
\]
has cardinality $[L:K]$. In that case, we refer to $\Aut(L/K)$ as the
{\em Galois group} of $L/K$ and write $\Gal(L/K) \eqdef \Aut(L/K)$.
Galois extensions whose Galois group is abelian are called {\em
  abelian extensions}. Galois extensions have many properties that do
not hold in general field extensions.

When $L/K$ is a Galois extension, and if $H$ is a subgroup of
$G\eqdef \Gal(L/K)$, then the set
\[
  L^{H}\eqdef \{ a \in L \mid \sigma(a) = a \text{ for all } \sigma\in H \}
\]
is a field called the {\em fixed field} of $H$. By definition
$L^{G} = K$. Furthermore, the extension $L/L^{H}$ is Galois with Galois
group $H$. On the other hand, the extension $L^{H}/K$ may not be
Galois in general, but it is the case when $H$ is a normal subgroup of
$G$, and $\Gal(L^{H}/L) = G/H$. This is particularly true when $L$ is
an abelian extension.} Consider $K/\Fq(T)$ a Galois function field (\ie a function field $K$
which is a Galois extension of $\Fq(T)$), with Galois group
$G\eqdef \Gal(K/\Fq(T))$. Then, $G$ keeps $\OO_{K}$ globally
invariant.  Furthermore, given $\mathfrak{p}$ a prime ideal of
$\Fq[T]$, the group $G$ acts transitively on the set
$\{\gothP_{1},\dots, \gothP_{r}\}$ of prime ideals of $\OO_{K}$ lying
above $\mathfrak{p}$: for any $i\neq j$ there exists
$\sigma\in\Gal(K/\Fq(T))$ such that
$\sigma(\gothP_{i}) = \gothP_{j}$. In particular, all the
ramification indexes $e_{i}$ ({\em resp.} the inertia degrees $f_{i}$)
are equal and denoted by $e$ ({\em resp.} $f$):
$\gothP \eqdef \mathfrak{p}\OO_{K} = (\gothP_{1}\dots\gothP_{r})^{e}$
and \eqref{eq:fundamental_equality} becomes $n = efr$.  Another
consequence which will be crucial for the applications, is that the
action of $G$ on $\OO_{K}$ is well--defined on
$\OO_{K}/\gothP$ and simply permutes factors $\OO_{K}/\gothP_{i}^{e}$.
The {\em decomposition group} of $\gothP_{i}$ over $\gothp$ is
\[
  D_{\gothP_{i}/\mathfrak{p}}\eqdef \{\sigma \in G \mid \sigma
  \left(\gothP_{i}\right) = \gothP_{i}\}.
\]
  It has cardinality
$e\times f$. In particular, when $K$ is unramified at $\gothP$, the
field $\OO_{K}/\gothP_{i}$ is $\mathbb{F}_{q^{f}}$ and the action of
$D_{\gothP_{i}/\mathfrak{p}}$ on it is the Frobenius automorphism: the
reduction modulo $\gothP_{i}$ yields
an isomorphism
\begin{equation}\label{eq:decomp_isom}
  D_{\gothP_{i}/\mathfrak{p}} \simeq \Gal(\mathbb{F}_{q^{f}}/\Fq).
\end{equation}
Finally, all the decomposition groups of primes above $\gothp$ are
conjugate: for any $i\neq j$ there exists $\sigma\in G$ such that
$D_{\gothP_{i}/\mathfrak{p}} = \sigma D_{\gothP_{j}/\mathfrak{p}}\sigma^{-1}$.

 \section{A function field approach for search to decision reduction}
\label{sec:FFDP}

{\bf \noindent Search and decision problems.}  In this section, we
introduce a new generic problem that we call \FFDP{}, which is the
analogue of Ring--\LWE{} in the context of function fields. Then, we
give our main theorem which states the search-to-decision reduction of
\FFDP{}. Since function fields and number fields share many
properties, the present search to decision reduction, that is proven
in Section \ref{sec:StD}, will work similarly as in \cite{LPR10}.

Consider a function field ${K}/{\Fq(T)}$ with constant field
$\Fq$ and ring of integers $\OO_{K}$ and let $Q(T)\in\Fq[T]$. Let
$\gothP \eqdef Q\OO_{K}$ be the ideal of $\OO_{K}$ generated by $Q$.
Recall that ${\OO_{K}}/{\gothP}$ is a finite set. \FFDP{} is parameterized
by an element $\bfs\in{\OO_{K}}/{\gothP}$ called the {\em secret} and
$\distrib$ be a probability distribution over ${\OO_{K}}/{\gothP}$ called the
{\em error distribution}.

\begin{definition}
[\FFDP{} Distribution]\label{def:FFDP_distribution}
  A sample $(\vec{a}, \vec{b}) \in {\OO_{K}}/{\gothP}\times {\OO_{K}}/{\gothP}$ is
  distributed according to the \FFDP{} distribution modulo $\gothP$
  with secret $\bfs$ and error distribution $\distrib$ if
  \begin{itemize}[label=\textbullet]
\item $\bfa$ is uniformly distributed over ${\OO_{K}}/{\gothP}$,
  \item $\bfb = \bfa \bfs + \bfe \in {\OO_{K}}/{\gothP}$ where $\vec{e}$ is distributed according to $\distrib$.
\end{itemize}
A sample drawn according to this distribution will be denoted by
$(\bfa, \bfb)\sample \ffd{\bfs, \distrib}$.
\end{definition}

The aim of the search version of the \FFDP{} problem is to recover the secret
$\bfs$ given samples drawn from $\ffd{\bfs, \distrib}$. This is
formalized in the following problem.

\iftoggle{llncs}{\begin{definition}}{\begin{problem}}
[\FFDP{}, Search version]
  \label{pb:FFDP_search} Let $\bfs\in {\OO_{K}}/{\gothP}$, and let
  $\distrib$ be a probability distribution over ${\OO_{K}}/{\gothP}$. An
  instance of \FFDP{} problem consists in an oracle giving
  access to independent samples
  $(\bfa, \bfb) \sample \ffd{\vec{s}, \distrib}$. The goal is to recover
  $\bfs$.
\iftoggle{llncs}{\end{definition}}{\end{problem}}

\begin{remark}
  This problem should be related to structured versions of the
  decoding problem. Indeed, recall from the discussion in the
  introduction that, using the polynomial representation, the decoding
  problem of random quasi-cyclic codes corresponds to recovering a
  secret polynomial $\bfs(X)\in {\Fq[X]}/{(X^{n}-1)}$ given access
  to samples of the form
  $\bfa\bfs + \bfe \in {\Fq[X]}/{(X^{n}-1)}$ where $\bfa$ is
  uniformly distributed in ${\Fq[X]}/{(X^{n}-1)}$. This can be
  rephrased within the \FFDP{} framework as follows. Consider the
  polynomial $f(T, X) \eqdef X^{n}+T-1 \in \Fq(T)[X]$. When $n$ is not
  divisible by the characteristic of $\Fq$, $f$ is a separable
  polynomial. Moreover, by Eisenstein criterion $f$ is
  irreducible. Define the function field $K$ generated by $f$, namely
  the extension $K\eqdef {\Fq(T)[X]}/{(f(T, X))}$. One
  can prove that $\OO_{K}$ is exactly ${\Fq[T][X]}/{(f(T, X))}$.
Now, let $\mathfrak{p}$ be the ideal of
$\Fq[T]$ defined by the irreducible polynomial $T$, and let
$\mathfrak{P}\eqdef \mathfrak{p}\OO_{K} = T\OO_{K}$ be the
corresponding ideal of $\OO_{K}$. Then the following isomorphisms hold
\[
  \fract{\OO_{K}}{\mathfrak{P}} \simeq \fract{\Fq[T, X]}{(T, X^{n}+T-1)} \simeq
  \fract{\Fq[X]}{(X^{n}-1)}.
\]
With this particular instantiation, ${\OO_{K}}/{\mathfrak{P}}$ is
exactly the ambient space from which the samples are defined in the
structured versions of the decoding problem.
As
a consequence, \FFDP{} is a generalization of structured versions of
the decoding problem, when
considering arbitrary function fields and ideals.
\end{remark}

For cryptographic applications, we are also interested in the {\em
  decision} version of this problem. The goal is now to distinguish
between the \FFDP{} distribution and the uniform distribution over
${\OO_{K}}/{\gothP} \times {\OO_{K}}/{\gothP}$.

\iftoggle{llncs}{\begin{definition}}{\begin{problem}}
		[\FFDP{}, Decision version]
    \label{pb:FFDP_decision}
    Let $\bfs$ be drawn uniformly at random in ${\OO_{K}}/{\gothP}$ and let
    $\distrib$ be a probability distribution over ${\OO_{K}}/{\gothP}$.
    Define $\code{D}_{0}$ to be the uniform distribution over
    ${\OO_{K}}/{\gothP} \times {\OO_{K}}/{\gothP}$, and $\code{D}_{1}$ to be the
    \FFDP{} distribution with secret $\bfs$ and error distribution
    $\distrib$. Furthermore, let $b$ be a uniform element of $\{0,1\}$.
    Given access to an oracle $\code{O}_{b}$ providing
    samples from distribution $\code{D}_{b}$, the goal of the decision \FFDP{}
    is to recover $b$.
\iftoggle{llncs}{\end{definition}}{\end{problem}}

\begin{remark}
    For some applications, for instance to MPC, it is more convenient
    to have the secret $\bfs$ drawn from the error distribution $\psi$
    instead of the uniform distribution over $\OO_{K}/\gothP$. In the
    lattice-based setting, this version is sometimes called \LWE{}
    with {\em short secret} or \LWE{} in {\em Hermite normal form}.
    However, both decision problems are easily proved to be
    computationally equivalent, see \cite[Lemma 3]{L11a}. The proof
    applies directly to \FFDP{}.
  \end{remark}

A {\em distinguisher} between two distributions $\code{D}_{0}$ and
$\code{D}_{1}$ is a probabilistic polynomial time (PPT) algorithm
$\code{A}$ that takes as input an oracle $\code{O}_{b}$ corresponding
to a distribution $\code{D}_{b}$ with $b\in \{0, 1\}$ and outputs an
element $\code{A}(\code{O}_{b})\in\{0, 1\}$.
\iftoggle{llncs}{\ \newline}{Consider the following approach for solving a decision problem
between two distributions $\code{D}_{0}$ and $\code{D}_{1}$, pick
$b\sample \{0, 1\}$ and answer $b$ regardless of the input. This
algorithm solves this problem with probability $1/2$ which is not
interesting. The efficiency of an algorithm $\code{A}$ solving a
decision problem is measured by the difference between its probability
of success and $1/2$. The relevant quantity to consider is the
{\em advantage} defined as:

\[
  \adv_{\code{A}}(\code{D}_{0}, \code{D}_{1}) \eqdef
  \dfrac{1}{2} \left( \Prob(\code{A}(\code{O}_{b}) =
    1 \mid b = 1) - \Prob(\code{A}(\code{O}_{b}) = 1 \mid b = 0) \right)
\]
where the probabilities are computed over the internal randomness of
$\code{A}$, a uniform $b \in \{0,1\}$ and inputs according to a distribution
$\code{D}_{b}$. The advantage of a distinguisher
$\code{A}$ measures how good it is to solve a distinguishing problem.
Indeed, it is classical fact that:
\[
  \Prob(\code{A}(\code{O}_{b}) = b) = \frac{1}{2} + \adv_{\code{A}}(\code{D}_{0}, \code{D}_{1}).
\]
\begin{remark}
  Even if it means answering $1-\code{A}(\code{O}_{b})$ instead of
  $\code{A}(\code{O}_{b})$, the advantage can always be assumed to be
  a positive quantity.
\end{remark}

\noindent {\bf A module version.} Instead of considering one secret
$\bfs\in\OO_{K}/\gothP$, we could use multiple secrets
$(\bfs_{1},\dots,\bfs_{d})\in \left( \OO_{K}/\gothP\right )^{d}$. This
generalization has been considered in lattice-based cryptography under
the terminology Module-\LWE{} \cite{LS15}, where the secret can be
thought as an element of $\OO_{K}^{d}$ which is a free
$\OO_{K}$-module of rank $d$, before a reduction modulo $\gothP$ on
each component. This would yield the following definition.

\begin{definition}
[\MFFDP{} Distribution]\label{def:MFFDP_distribution}
Let $d\ge 1$ be an integer. A sample
$(\bfa, \bfb) \in \left(\OO_{K}/\gothP\right)^{d}\times \OO_{K}/\gothP$
is distributed according to the \MFFDP{} distribution modulo $\gothP$
with secret
$\bfs \eqdef (\bfs_{1},\dots, \bfs_{d})\in \left(\OO_{K}/\gothP\right)^{d}$
and error distribution $\distrib$ over $\OO_{K}/\gothP$ if

  \begin{itemize}
\item $\bfa$ is uniformly distributed over
          $\left(\OO_{K}/\gothP\right)^{d}$,
  \item $\bfb = \sum_{i=1}^{d}\bfa_{i}\bfs_{i}+\bfe\in \OO_{K}/\gothP$ where $\vec{e}$ is distributed according to $\distrib$.
\end{itemize}
\end{definition}

The search and decision problems associated to \MFFDP{} can be defined
as a natural generalization of Problems \ref{pb:FFDP_search} and
\ref{pb:FFDP_decision}.

\iftoggle{llncs}{\begin{definition}}{\begin{problem}}
[\MFFDP{}, Search version]
\label{pb:MFFDP_search} Let $\bfs\in(\OO_{K}/\gothP)^{d}$ be a
collection of elements of $\OO_{K}/\gothP$ called the secrets, and let
$\distrib$ be a probability distribution over $\OO_{K}/\gothP$. An
instance of the \MFFDP{} problem consists in an oracle giving access
to independent samples $(\bfa, \bfb)$ from the \MFFDP{} distribution
with secrets $\bfs$ and error distribution $\distrib$. The goal is to
recover $\bfs$.  \iftoggle{llncs}{\end{definition}}{\end{problem}}

\iftoggle{llncs}{\begin{definition}}{\begin{problem}}
		[\MFFDP{}, Decision version]
        \label{pb:MFFDP_decision}
        Let $\bfs$ be drawn uniformly at random in
        $(\OO_{K}/\gothP)^{d}$ and let $\distrib$ be a probability
        distribution over $\OO_{K}/\gothP$.  Define $\code{D}_{0}$ to
        be the uniform distribution over
        $(\OO_{K}/\gothP)^{d}\times\OO_{K}/\gothP$, and $\code{D}_{1}$
        to be the \MFFDP{} distribution with secrets $\bfs$ and error
        distribution $\distrib$. Furthermore, let $b$ be a uniform
        element of $\{0,1\}$.
    Given access to an oracle $\code{O}_{b}$ providing samples from
     distribution $\code{D}_{b}$, the goal of the decision \MFFDP{}
    is to recover $b$.
    \iftoggle{llncs}{\end{definition}}{\end{problem}}
}

{\bf \noindent Search to decision reduction.} \iftoggle{llncs}{We are now ready to present our main theorem.}{There is an obvious
reduction from the decision to the search version of \FFDP{}. Indeed,
if there exists an algorithm $\code{A}$ that given access to the
$\ffd{\bfs, \distrib}$ distribution is able to recover the secret
$\bfs$, then it yields to a distinguisher between
$\ffd{\bfs, \distrib}$ and the uniform distribution. The converse
reduction needs more work. However, due to the strong analogy between
function and number fields, our proof is in fact essentially the same
as in \cite{LPR10,L11a}. More precisely, we have the following
theorem.}

\begin{theorem}[Search to decision reduction for \FFDP{}]\label{thm:main}
  Let ${K}/{\Fq(T)}$ be a Galois function field of degree $n$ with
  field of constants $\Fq$, and denote by $\OO_{K}$ its ring of
  integers.  Let $Q(T)\in\Fq[T]$ be an irreducible
  polynomial. Consider the ideal $\mathfrak{P}\eqdef Q\OO_{K}$. Assume
  that $\mathfrak{P}$ does not ramify in $\OO_K$, and denote by $f$
  its inertia degree.  Let $\distrib$ be a probability distribution
  over ${\OO_{K}}/{\gothP}$, closed under the action of
  $\Gal({K}/{\Fq(T)})$, meaning that if $\vec{e}\sample\distrib$, then for
  any $\sigma \in \Gal(K/\Fq(T))$, we have
  $\sigma(\vec{e})\sample\distrib$. Let $\bfs\in {\OO_{K}}/{\gothP}$.

  Suppose that we have an access to
  $\ffd{\vec{s}, \distrib}$ and there exists a distinguisher between
  the uniform distribution over ${\OO_{K}}/{\goth{P}}$ and the \FFDP{}
  distribution with uniform secret and error distribution $\distrib$,
  running in time $t$ and having an advantage $\eps$. Then there
  exists an algorithm that recovers $\bfs\in {\OO_{K}}/{\gothP}$ (with
  an overwhelming probability in $n$) in time
  \[
    O\left( \frac{n^{4}}{f^{3}}\times \frac{1}{\varepsilon^{2}} \times
      q^{f \deg(Q)}\times t\right).
  \]
\end{theorem}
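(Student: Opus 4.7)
The plan is to adapt the Lyubashevsky--Peikert--Regev search-to-decision reduction for Ring-\LWE{} \cite{LPR10} to the function field setting. The crucial structural inputs are: the unramified assumption forces $\gothP = \gothP_{1}\cdots\gothP_{r}$ with $n = rf$; the CRT gives $\OO_{K}/\gothP \simeq \prod_{i=1}^{r} \OO_{K}/\gothP_{i}$ with each $\OO_{K}/\gothP_{i} \simeq \F{q^{f \deg Q}}$; the Galois group $G \eqdef \Gal(K/\Fq(T))$ acts transitively on $\{\gothP_{1}, \dots, \gothP_{r}\}$; and the error distribution $\distrib$ is $G$-invariant. I would proceed in three stages.

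First, I would construct from the decision distinguisher $\code{A}$ an oracle $\code{O}_{i_{0}}$ that, from samples $(\bfa, \bfb) \sample \ffd{\bfs,\distrib}$, recovers $\bfs \bmod \gothP_{i_{0}}$ for some index $i_{0}$. For $i \in \{0, \dots, r\}$, introduce the hybrid distributions $\code{H}_{i}$ which agree with $\ffd{\bfs,\distrib}$ on the CRT-components along $\gothP_{i+1}, \dots, \gothP_{r}$ of $\bfb$, but replace the components along $\gothP_{1}, \dots, \gothP_{i}$ by fresh uniform values. Then $\code{H}_{0} = \ffd{\bfs,\distrib}$ while $\code{H}_{r}$ is uniform, so by a hybrid argument $\code{A}$ distinguishes some pair $(\code{H}_{i_{0}-1}, \code{H}_{i_{0}})$ with advantage at least $\eps/r$. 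For a candidate index $i$ and a guess $g \in \OO_{K}/\gothP_{i}$ of $\bfs \bmod \gothP_{i}$, a randomization step transforms samples so that if the guess is correct (and $i = i_{0}$) the transformed samples follow $\code{H}_{i-1}$, otherwise they follow $\code{H}_{i}$: pick $u$ uniform in $\OO_{K}/\gothP_{i}$, let $\tilde{u}$ be the CRT-lift equal to $u$ modulo $\gothP_{i}$ and $0$ modulo every other $\gothP_{j}$, and map $(\bfa, \bfb) \mapsto (\bfa, \bfb + \bfa\tilde{u})$ after an offset depending on $g$. Running $\code{A}$ on $O(r^{2}/\eps^{2})$ such transformed samples and applying Chernoff amplification identifies the correct guess with overwhelming probability; looping over the $q^{f \deg Q}$ guesses and the $r$ candidate indices then returns $\bfs \bmod \gothP_{i_{0}}$.

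Second, I would use Galois transitivity to transfer this oracle to every prime above $\gothp$. For each $j$, pick $\sigma_{j} \in G$ with $\sigma_{j}(\gothP_{j}) = \gothP_{i_{0}}$. Applying $\sigma_{j}$ coordinatewise to samples of $\ffd{\bfs,\distrib}$ yields samples of $\ffd{\sigma_{j}(\bfs), \distrib}$---here the $G$-invariance of $\distrib$ is indispensable---so feeding them to $\code{O}_{i_{0}}$ produces $\sigma_{j}(\bfs) \bmod \gothP_{i_{0}}$, from which $\bfs \bmod \gothP_{j}$ is recovered via the isomorphism $\OO_{K}/\gothP_{j} \simeq \OO_{K}/\gothP_{i_{0}}$ induced by $\sigma_{j}$. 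Aggregating these residues over all $j$ and applying the CRT reconstructs $\bfs \in \OO_{K}/\gothP$.

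The main obstacle is the guess-and-check transformation of the first stage: one must verify that for the correct guess the randomization produces exactly $\code{H}_{i-1}$, for an incorrect one exactly $\code{H}_{i}$, and that the distinguisher's advantage is preserved through the reduction. The running-time bound is then essentially bookkeeping: the hybrid contributes the factor $r = n/f$, the Chernoff amplification contributes $O(r^{2}/\eps^{2})$ invocations of $\code{A}$, the guess enumeration contributes $q^{f \deg Q}$, and the Galois transfer contributes another factor $r$, yielding the claimed $O(n^{4}/f^{3} \cdot \eps^{-2} \cdot q^{f \deg Q} \cdot t)$.
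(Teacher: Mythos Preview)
Your overall strategy is the paper's: the LPR-style hybrid over the CRT factors of $\gothP$, a guess-and-search to extract $\bfs \bmod \gothP_{i_0}$, Galois transport to the remaining primes, and CRT recombination. Two points need fixing.

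First, a small omission: the distinguisher $\code A$ is only assumed to have advantage $\eps$ against \FFDP{} with \emph{uniform} secret, whereas your hybrids are built from the fixed secret $\bfs$. The paper handles this with a preliminary worst-to-average-case rerandomization (replace each sample $(\bfa,\bfb)$ by $(\bfa,\bfb+\bfa\bfs')$ for a single fresh uniform $\bfs'$), which you should insert before the hybrid argument.

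Second, and this is the real gap, the guess-and-check transformation you sketch does not do what you claim. If you keep $\bfa$ and send $\bfb\mapsto \bfb+\bfa\tilde u + (\text{offset in }g)$, the output is $\bfa(\bfs+\tilde u)+\bfe+(\text{offset})$: if $\tilde u$ is fresh per sample, the ``secret'' changes from sample to sample and you no longer have a hybrid sample at all; if $\tilde u$ is chosen once, then nothing in the $\gothP_{i_0}$-component detects whether $g=\bfs\bmod\gothP_{i_0}$. The correct move (and the paper's) is to perturb \emph{both} coordinates: set $\bfa'\eqdef\bfa+\bfv$ and $\bfb'\eqdef\bfb+\bfh+\bfv\bfg$, with $\bfv$ uniform modulo $\gothP_{i_0}$ and $0$ elsewhere, $\bfh$ uniform modulo $\gothP_j$ for $j<i_0$ and $0$ elsewhere, and $\bfg$ the CRT-lift of the guess. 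Then $\bfb'=\bfa'\bfs+\bfe+\bfh'$ with $\bfh'=\bfh+\bfv(\bfg-\bfs)$, and the $\gothP_{i_0}$-component of $\bfh'$ is $\bfv_{i_0}\,(g-\bfs\bmod\gothP_{i_0})$: exactly $0$ when the guess is correct and uniform (since $\OO_K/\gothP_{i_0}$ is a field) otherwise---precisely the difference between $\code H_{i_0-1}$ and $\code H_{i_0}$. With this correction, your Chernoff amplification, guess enumeration, Galois transfer, and running-time bookkeeping go through as in the paper.
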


\begin{remark}
	We have assumed implicitly in the statement of the theorem that we
    have an efficient access to the Galois group of ${K}/{\Fq(T)}$
    and its action can be computed in polynomial time.
\end{remark}

\begin{remark}
  There are many degrees of freedom in the previous statement: choice
  of the function field $K$ (and on the degree $n$), choice of the
  polynomial $Q$ (and on $f$ and $\deg(Q)$). For our instantiations,
  we will often choose the ``modulus'' $Q$ to be a linear polynomial
  ($\deg(Q) = 1$) and $K$ will be a (subfield of) a cyclotomic
  function field.
\end{remark}

\begin{remark} Due to the continuity of error distributions used in
  lattice-based cryptography, a technical tool called the {\em
    smoothing parameter} was introduced by Micciancio and Regev in
  \cite{MR04}. It characterizes how a Gaussian distribution
  is close to uniform, both modulo the lattice,
  and is ubiquitously used in reductions. However, in the function
  field setting, we do not need to introduce such a tool because the
  error distribution is discrete and already defined on the quotient
  ${\OO_{K}}/{\gothP}$.
\end{remark}
\iftoggle{llncs}{ }{
\begin{remark} In \cite{LS15}, Langlois and Stehlé proved a search to
  decision reduction for the module version of \LWE{}. The idea is to
  use the distinguisher in order to retrieve the secrets one by one.
  Their proof applies {\em mutatis mutandis} to \MFFDP{}, resulting in
  a time overhead of $d$, where $d$ denotes the rank of the underlying
  module, {\em i.e.} the number of secrets. The main change is in the {\em
    guess and search} step (Step 3 in the proof presented in Section
  \ref{sec:StD}) where the randomization is applied on only one
  component of $\bfa$ to recover one secret, and repeating the process
  $d$ times (one for each secret). More precisely, for \MFFDP{}, the
  running time claimed in Theorem \ref{thm:main} should be replaced
  with
  \[
    O\left( d\times \frac{n^{4}}{f^{3}} \times
      \frac{1}{\varepsilon^{2}} \times q^{f \deg(Q)} \times t \right).
  \]
\end{remark}
}

\iftoggle{llncs}{
  \begin{remark}[\MFFDP{}]
    Instead of considering one secret $\bfs\in\OO_{K}/\gothP$, we
    could use multiple secrets
    $(\bfs_{1},\dots,\bfs_{d})\in \left( \OO_{K}/\gothP\right )^{d}$.
    The goal is now to recover the secrets from samples $(\bfa, \bfb)$
    with $\bfa = (\bfa_{1},\dots,\bfa_{d})$ uniformly distributed over
    $(\OO_{K}/\gothP)^{d}$ and
    $\bfb = \inner{\bfa, \bfs} + \bfe = \sum_{i=1}^{d}\bfa_{i}\bfs_{i}+\bfe$
    with $\bfe \sample \distrib$. This generalization has been
    considered in lattice-based cryptography under the terminology
    Module-\LWE{} \cite{LS15}, because the secret can be thought as an
    element of $\OO_{K}^{d}$ which is a free $\OO_{K}$-module or rank
    $d$, before a reduction modulo $\gothP$ on each component.

    Following \cite[Section 4.3]{LS15}, it is possible to adapt Theorem
    \ref{thm:main} ; the search to decision only yielding an overhead
    of $d$ (the number of secrets). The running time would now be
    \[
      O\left( d\times \frac{n^{4}}{f^{3}} \times \frac{1}{\varepsilon^{2}} \times q^{f \deg(Q)} \times t \right).
    \]
  \end{remark}
}{}

\iftoggle{llncs}{ {\noindent \em Sketch of Proof of Theorem \ref{thm:main}.}  The proof of this
  Theorem is very similar to the one for Ring--\LWE{}
  and lattices \cite{LPR10}. It uses four steps that we quickly
  describe. Let $\gothP = \gothP_{1}\dots\gothP_{r}$, where $r = n/f$,
  be the factorisation of $\gothP$ in prime ideals.
		\begin{description}[align = left,leftmargin=*]
        \item[Step 1. Worst to Average Case.\label{desc:step1}] In the
          definition of Problem~\ref{pb:FFDP_decision} the secret
          $\bfs$ is supposed to be {\em uniformly} distributed over
          ${\OO_{K}}/{\gothP}$, while in the search version the secret
          is {\em fixed}. This can easily be addressed, for any sample
          $(\bfa, \bfb) \sample \ffd{\bfs, \distrib}$ with fixed
          secret $\bfs$, it is enough to pick
          $\bfs'\sample {\OO_{K}}/{\gothP}$ and output
          $(\bfa, \bfb + \bfa \bfs')$.  \newline

          \item[Step 2. Hybrid argument.\label{desc:step2}] sample
            $(\bfa, \bfb)$ is said to be distributed according to the
            hybrid distribution $\code{H}_{i}$ if it is of the form
            $(\bfa', \bfb' + \bfh)$ where
            $(\bfa', \bfb') \sample \ffd{\bfs, \distrib}$ and
            $\bfh\in {\OO_{K}}/{\gothP}$ is uniformly distributed
            modulo $\gothP_{j}$ for $j\le i$ and $\vec{0}$ modulo the
            other factors. Such an $\bfh$ can easily be constructed
            using the Chinese Remainder Theorem. In particular, for
            $i=0$, $\vec{h}$ is $\vec{0}$ modulo all the factors of
            $\gothP$, therefore $\vec{h}=\vec{0}$ and
            $\code{H}_{0} = \ffd{\bfs, \distrib}$. On the other hand,
            when $i=r$, the element $\vec{h}$ is uniformly distributed
            over ${\OO_{K}}/{\gothP}$, therefore $\code{H}_{r}$ is
            {\em exactly} the uniform distribution over
            ${\OO_{K}}/{\gothP}$.

			By a hybrid argument, we can turn a distinguisher
            $\code{A}$ for \FFDP{} with advantage $\varepsilon$, into
            a distinguisher between
            $(\code{H}_{i_{0}},\code{H}_{i_{0}+1})$ for some $i_{0}$
            with advantage $\geq \varepsilon/r$. Everything is
            analysed as if we knew this index $i_{0}$. In practice we
            can run $\code{A}$ concurrently with all the $r$
            instances.  \newline

          \item[Step 3. Guess and search. \label{desc:step3}] The idea
            is to perform an exhaustive search in
            ${\OO_{K}}/{\gothP_{i_{0}+1}}$ and to use $\code{A}$ to
            recover
            $\widehat{\vec{s}} \eqdef \bfs\mod\gothP_{i_{0}+1}$. Let
            $\vec{g}_{i_{0}+1} \mathop{=}\limits^{?}
            \widehat{\vec{s}}$ be our guess and set
            $\vec{g} \equiv \vec{g}_{i_{0}+1} \mod \gothP_{i_{0}+1}$
            and $\vec{0}$ otherwise. For each sample
            $(\vec{a},\vec{b})$ we compute
            $\vec{a}' \eqdef \vec{a} + \vec{v}$ and
            $\vec{b}' \eqdef \vec{b} + \vec{h} + \vec{v}\vec{g} =
            \bfa' \bfs + \bfe + \bfh'$ where
            $\bfh' = \bfh + \bfv (\bfg - \bfs)$ with
            $\vec{v}\equiv \vec{v}_{i_{0}+1}$ uniform modulo
            $\gothP_{i_{0}+1}$, and $\vec{h}$ uniform modulo the
            $\gothP_{j}$ for $j \leq i_{0}+1$ and $\vec{0}$
            otherwise. One can verify that,
            \[\left\lbrace
			\begin{array}{ll}
              \bfh' \equiv \vec{h}_{j}& \mod \gothP_{j} \text{ for } j\le i_{0}
              \\
              \bfh' \equiv (\vec{g}_{i_{0}+1} -
              \widehat{\vec{s}}) \vec{v}_{i_{0}+1}\ & \mod \gothP_{i_{0}+1} \\
              \bfh' \equiv \vec{0} & \mod \gothP_{j} \text{ for } j > i_{0}+1.
              \\
			\end{array}\right.
            \]
            If the guess $\vec{g}_{i_{0}+1}$ is correct, $(\bfa', \bfb')$ is distributed
            according to $\code{H}_{i_{0}}$. Otherwise, it is
            distributed according to $\code{H}_{i_{0}+1}$ because
            $\bfv_{i_{0}+1}$ is uniformly distributed over
            $\OO_{K}/\gothP_{i_{0}+1}$ which is a field. The
            distinguisher will succeed with probability
            $1/2+\eps/r>1/2$. It suffices to repeat the procedure
            $\Theta((r/\eps)^{2})$ times, and do a majority voting to
            know whether the guess $\vec{g}_{i_{0}+1}$ is correct or
            not. We do that for all the $q^{f\deg(Q)}$ possible
            guesses.  \newline
          \item[Step 4. Galois action. \label{desc:step4}] Since
            $K/\Fq(T)$ is Galois, for any $j\neq i_{0}$ we take
            $\sigma\in\Gal(K/\Fq(T))$ such that
            $\sigma(\gothP_{j}) = \gothP_{i_{0}}$. Now,
            $(\sigma(\bfa), \sigma(\bfa)\sigma(\bfs)+\sigma(\bfe))
            \sample \ffd{\sigma(\bfs), \distrib}$ because $\distrib$
            is Galois invariant. The above procedure enables to
            recover $\sigma(\bfs)\mod \gothP_{i_{0}}$. Applying
            $\sigma^{-1}$ yields $\bfs\mod \gothP_{j}$. Therefore, we
            are able to recover $\bfs\mod\gothP_{j}$ for any $j$. To
            compute the full secret $\bfs$ it remains to use the
            CRT.\qed
		\end{description}
		 }{}

 \iftoggle{llncs}{}
{\section{Search to Decision Reductions: Proof of Theorem
  \ref{thm:main}}\label{sec:StD}

\iftoggle{llncs}{ {\bf \noindent Basic definition: advantage.}
	Consider the following approach for solving a distinguishing problem
	between two distributions $\code{D}_{0}$ and $\code{D}_{1}$, pick
	$b\sample \{0, 1\}$ and answer $b$ regardless of the input. This
	algorithm solves this problem with probability $1/2$ which is not
	interesting. The efficiency of an algorithm $\code{A}$ solving a
	decision problem is measured by the difference between its probability
	of success and $1/2$. The relevant quantity to consider is the
	{\em advantage} defined as:

	\[
	\adv_{\code{A}}(\code{D}_{0}, \code{D}_{1}) \eqdef
	\dfrac{1}{2} \left( \Prob(\code{A}(\code{O}_{b}) =
	1 \mid b = 1) - \Prob(\code{A}(\code{O}_{b}) = 1 \mid b = 0) \right)
	\]
	where the probabilities are computed over the internal randomness of
	$\code{A}$, a uniform $b \in \{0,1\}$ and inputs according to a distribution
	$\code{D}_{b}$. The advantage of a distinguisher
	$\code{A}$ measures how good it is to solve a distinguishing problem.
	Indeed, it is classical fact that:
	\[
	\Prob(\code{A}(\code{O}_{b}) = b) = \frac{1}{2} + \adv_{\code{A}}(\code{D}_{0}, \code{D}_{1}).
	\]
\begin{remark}
		Even if it means answering $1-\code{A}(\code{O}_{b})$ instead of
		$\code{A}(\code{O}_{b})$, the advantage can always be assumed to be
		a positive quantity.
	\end{remark}

}{}

\iftoggle{llncs}{The proof of Theorem \ref{thm:main} is very similar to the one for Ring-\LWE{} and
	lattices. It uses four steps that we describe.}{In this section, we give a proof of Theorem \ref{thm:main}. It is very similar to the one for Ring-\LWE{} and
lattices. It uses four steps that we describe.} Combining them provides
the aforementioned result.
The main line of proof is as follows. We
use an hybrid argument to reduce the search domain, and then proceed
to an exhaustive search using the distinguisher to recover $\bfs$
modulo all the factors of $\gothP$. Finally, using the Chinese
Remainder Theorem (CRT) one can recover $\bfs$ completely. The key
point here is the action of the Galois group on the primes and that
the error distribution is Galois invariant.

Let $\gothP = \gothP_{1}\dots\gothP_{r}$ be the decomposition of
$\gothP$, we have
$$
r = n/f
$$
where we used the assumptions over $\gothP$ and ${K}/{\Fq(T)}$ made in
Theorem~\ref{thm:main}, namely that ${K}/{\Fq(T)}$ is a Galois
extension of degree $n$ and unramified at $\gothP$ with inertia degree
$f$. \newline

{\bf \noindent Step 1: Worst to Average Case.} Recall that in the
definition of Problem~\ref{pb:FFDP_decision} the secret $\bfs$ is
supposed to be {\em uniformly} distributed over ${\OO_{K}}/{\gothP}$,
while in the search version the secret is {\em fixed}. In other words,
the decision problem is somehow an {\em average} case problem, while
the search version should work in {\em any} case. Fortunately, this
can easily be addressed by randomizing the secret. Indeed, for any
sample $(\bfa, \bfb) \sample \ffd{\bfs, \distrib}$ with fixed secret
$\bfs$, if $\bfs'\sample {\OO_{K}}/{\gothP}$, then
$(\bfa, \bfb + \bfa \bfs')$ is now a sample from
$\ffd{\bfs+\bfs', \distrib}$ with secret $\bfs+\bfs'$ uniformly
distributed over ${\OO_{K}}/{\gothP}$.  \newline

{\bf \noindent Step 2: Hybrid argument.} Let $\code{A}$ be the
distinguisher between the uniform distribution over
${\OO_{K}}/{\goth{P}}$ and the \FFDP{} distribution with uniform
secret and error distribution $\distrib$, running in time $t$ and
having an advantage $\eps$. We use a simple hybrid argument to prove
that $\code{A}$ can also distinguish in time $t$ between two
consecutive {\em hybrid} distributions with advantage at least
$\eps/r$.

The factorization of $\gothP$ is $\gothP_{1}\dots\gothP_{r}$. A
sample $(\bfa, \bfb)$ is said to be distributed according to the
hybrid distribution $\code{H}_{i}$ if it is of the form
$(\bfa', \bfb' + \bfh)$ where
$(\bfa', \bfb') \sample \ffd{\bfs, \distrib}$ and
$\bfh\in {\OO_{K}}/{\gothP}$ is uniformly distributed modulo $\gothP_{j}$
for $j\le i$ and $\vec{0}$ modulo the other factors. Such an $\bfh$
can easily be constructed using the Chinese Remainder Theorem. In
particular, for $i=0$, $\vec{h}$ is $\vec{0}$ modulo all the factors
of $\gothP$, therefore $\vec{h}=\vec{0}$ and
$\code{H}_{0} = \ffd{\bfs, \distrib}$. On the other hand, when $i=r$,
the element $\vec{h}$ is uniformly distributed over ${\OO_{K}}/{\gothP}$,
therefore $\code{H}_{r}$ is {\em exactly} the uniform distribution
over ${\OO_{K}}/{\gothP}$.

\begin{lemma}[Hybrid argument]\label{lem:hybrid}
  There exists $i_{0}$ such that
  $\adv_{\code{A}}(\code{H}_{i_{0}}, \code{H}_{i_{0}+1}) \ge \frac{\eps}{r}$.
\end{lemma}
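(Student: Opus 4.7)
The plan is to apply the textbook telescoping / pigeonhole technique to the sequence of hybrids $\code{H}_0, \dots, \code{H}_r$. First I would verify that the endpoints of this sequence are precisely the two distributions that $\code{A}$ distinguishes: $\code{H}_0$ is the \FFDP{} distribution (with the secret already made uniform via the randomisation described in Step~1) and $\code{H}_r$ is the uniform distribution on ${\OO_K}/{\gothP} \times {\OO_K}/{\gothP}$. Both facts follow directly from the CRT description of $\bfh$, which is the zero element of ${\OO_K}/{\gothP}$ when $i = 0$ and a uniform element when $i = r$.

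Next I would introduce the shorthand $p_i \eqdef \Pr[\code{A}(\code{O}_i) = 1]$, where $\code{O}_i$ is an oracle producing independent samples from $\code{H}_i$ and the probability is taken over both the oracle and the internal randomness of $\code{A}$. Using the convention from the remark preceding the lemma that the advantage may be assumed non-negative, one has $(p_r - p_0)/2 = \adv_{\code{A}}(\code{H}_0, \code{H}_r) = \eps$. A direct telescoping then yields
\[
\sum_{i=0}^{r-1}(p_{i+1} - p_i) \;=\; p_r - p_0 \;=\; 2\eps,
\]
so by pigeonhole some $i_0 \in \{0, \dots, r-1\}$ satisfies $p_{i_0+1} - p_{i_0} \ge 2\eps/r$; equivalently, $\adv_{\code{A}}(\code{H}_{i_0}, \code{H}_{i_0+1}) \ge \eps/r$.

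There is no genuine obstacle here: the argument is the standard hybrid technique, and the only real content of the lemma is the identification of the extremal hybrids with the \FFDP{} and uniform distributions, which is immediate from the construction of $\bfh$ via the CRT. The one subtlety worth flagging, which the statement of the lemma deliberately hides, is that the pigeonhole step is non-constructive in $i_0$; the subsequent steps of the reduction will therefore need to be executed in parallel for every candidate index $i_0 \in \{0,\dots,r-1\}$, contributing a multiplicative factor $r$ to the overall running time of Theorem~\ref{thm:main}.
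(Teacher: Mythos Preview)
Your proposal is correct and follows essentially the same approach as the paper: the paper's proof also telescopes $\adv_{\code{A}}(\code{H}_{0},\code{H}_{r})=\sum_{i=0}^{r-1}\adv_{\code{A}}(\code{H}_{i},\code{H}_{i+1})$ and applies pigeonhole, and the paper likewise remarks immediately afterwards that the non-constructive choice of $i_0$ is handled by running all $r$ instances in parallel. Your explicit use of the probabilities $p_i$ makes the telescoping slightly more transparent, but the content is identical.
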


\begin{proof}
  By definition, $\adv_{\code{A}}(\code{H}_{0}, \code{H}_{r}) = \eps$.
  Furthermore, the following equality holds:
  \[
    \adv_{\code{A}}(\code{H}_{0}, \code{H}_{r}) =
    \sum_{i=0}^{r-1}\adv_{\code{A}}(\code{H}_{i}, \code{H}_{i+1}).
  \]
  Therefore, it exists $i_{0}\in \llbracket 0, r-1\rrbracket$ such
  that
  $\adv_{\code{A}}(\code{H}_{i_{0}},\code{H}_{i_{0}+1}) \ge \frac{\adv_{\code{A}}(\code{H}_{0}, \code{H}_{r})}{r} = \frac{\eps}{r}$.
\end{proof}

This hybrid argument has shown the existence of an $i_{0}$ such that
$\code{A}$ has an advantage $\varepsilon/r$ for distinguishing
distributions $\code{H}_{i_{0}}$ and $\code{H}_{i_{0}+1}$. In what
follows, everything is analysed as if we knew this index $i_{0}$. In
practice we can run $\code{A}$ concurrently with all the $r$ instances
$(\code{H}_{i},\code{H}_{i+1})$'s. Computations on the right index
$i_{0}$ will output the secret $\vec{s}$ (which can be verified) as it
will be explained afterward. Therefore, our reduction will output
$\vec{s}$ with a ``resource overhead'' given by at most a factor $r$.
\newline

{\bf \noindent Step 3: Guess and search.}  Given $i_{0}$ such as in
Lemma \ref{lem:hybrid}. The idea is to perform an exhaustive search in
${\OO_{K}}/{\gothP_{i_{0}+1}}$ and to use $\code{A}$ to recover
$\bfs\mod\gothP_{i_{0}+1}$.

\begin{lemma}\label{lem:guess_and_search}
  Let $\code{A}$ be a distinguisher with advantage $\delta$ between
  hybrid distributions $\code{H}_{i_{0}}$ and $\code{H}_{i_{0}+1}$, with
  secret $\bfs$, running in time $t$. Then there exists an algorithm
  $\code{B}$ that recovers $\bfs\mod \gothP_{i_{0}+1}$ with overwhelming probability in $n$ in time
  $O\left(q^{f\deg(Q)} \times \frac{n}{\delta^{2}}\times t \right)$.
\end{lemma}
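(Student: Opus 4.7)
The core of the algorithm $\code{B}$ is a guess-and-check procedure over the residue field $\OO_K/\gothP_{i_0+1}$, which has cardinality $q^{f\deg(Q)}$ since $\gothP_{i_0+1}$ is unramified with inertia degree $f$ above $\gothp = (Q)$. For each candidate $\bfg_{i_0+1} \in \OO_K/\gothP_{i_0+1}$, I lift it via the Chinese Remainder Theorem to an element $\bfg \in \OO_K/\gothP$ that equals $\bfg_{i_0+1}$ modulo $\gothP_{i_0+1}$ and $0$ modulo the other factors. Then, for each incoming sample $(\bfa, \bfb) \sample \ffd{\bfs, \distrib}$, I freshly sample an element $\bfv \in \OO_K/\gothP$ uniform modulo $\gothP_{i_0+1}$ and $0$ elsewhere, together with an $\bfh \in \OO_K/\gothP$ uniform modulo $\gothP_j$ for $j \le i_0$ and $0$ elsewhere, and output the randomized pair $(\bfa', \bfb') \eqdef (\bfa + \bfv, \bfb + \bfh + \bfv\bfg)$.

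The claim is that $(\bfa', \bfb')$ is distributed according to $\code{H}_{i_0}$ if $\bfg_{i_0+1} \equiv \bfs \mod \gothP_{i_0+1}$, and according to $\code{H}_{i_0+1}$ otherwise. Indeed, $\bfa'$ is uniform on $\OO_K/\gothP$ since $\bfa$ already is, and a direct rewriting gives $\bfb' = \bfa'\bfs + \bfe + \bfh'$ with $\bfh' \eqdef \bfh + \bfv(\bfg - \bfs)$. Checking component by component through the CRT isomorphism $\OO_K/\gothP \simeq \prod_j \OO_K/\gothP_j$: for $j \le i_0$, we have $\bfv \equiv 0$, so $\bfh' \equiv \bfh$ is uniform; for $j > i_0+1$, both $\bfh$ and $\bfv$ vanish, so $\bfh' \equiv 0$; finally, modulo $\gothP_{i_0+1}$ we get $\bfh' \equiv \bfv_{i_0+1}(\bfg_{i_0+1} - \bfs \mod \gothP_{i_0+1})$. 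Since $\OO_K/\gothP_{i_0+1}$ is a \emph{field} (here the unramified hypothesis is used again), multiplication by a nonzero factor preserves uniformity: if the guess is wrong this component is uniform and $(\bfa', \bfb') \sim \code{H}_{i_0+1}$, whereas a correct guess makes the component vanish, producing $\code{H}_{i_0}$.

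For each guess, I feed $m$ such randomized samples to $\code{A}$ and take a majority vote on its outputs; since $\code{A}$ has advantage $\delta$ between $\code{H}_{i_0}$ and $\code{H}_{i_0+1}$, a Chernoff/Hoeffding bound shows that $m = \Theta(n/\delta^{2})$ suffices for the majority to be correct except with probability $2^{-\Omega(n)}$. A union bound over the at most $q^{f\deg(Q)} \le q^n$ guesses yields the correct $\bfs \mod \gothP_{i_0+1}$ with overwhelming probability in $n$, and the total running time is $O(q^{f\deg(Q)} \cdot (n/\delta^2) \cdot t)$, as claimed.

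The main technical point to get right is the uniformity analysis of the $\gothP_{i_0+1}$-component when the guess is incorrect; this is exactly where one needs $\gothP_{i_0+1}$ to be \emph{prime} (so that $\OO_K/\gothP_{i_0+1}$ is a field and $\bfg_{i_0+1} - \bfs$ is invertible), which in turn relies on the unramification assumption built into the hypothesis of Theorem~\ref{thm:main}. Everything else — CRT lifts, re-randomization of the secret's translation, and Hoeffding amplification — is bookkeeping that mirrors the Ring--\LWE{} argument of \cite{LPR10}.
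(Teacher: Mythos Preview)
Your proof is correct and follows essentially the same approach as the paper: the same CRT-based transformation $(\bfa',\bfb') = (\bfa+\bfv,\ \bfb+\bfh+\bfv\bfg)$, the same component-wise analysis of $\bfh' = \bfh + \bfv(\bfg-\bfs)$, the same use of the field property of $\OO_K/\gothP_{i_0+1}$ for the incorrect-guess case, and the same Chernoff amplification with $m = \Theta(n/\delta^2)$ followed by exhaustive search over $q^{f\deg(Q)}$ candidates. The only minor addition is that you make the union bound over guesses explicit (the paper leaves it implicit); note however that your inequality $q^{f\deg(Q)} \le q^n$ holds when $\deg(Q)=1$ but not in general (one has $f\deg(Q) \le n\deg(Q)$ since $f\mid n$), so the hidden constant in $\Theta(n/\delta^2)$ should absorb a $\deg(Q)\log q$ factor if one insists on a clean union bound.
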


\begin{proof}
  Our algorithm will proceed with a {\em guess and search} technique
  using the distinguisher $\code{A}$ in hand. The idea is to {\em
    guess} the value of $\bfs\mod\gothP_{i_{0}+1}$ and transform any
  sample $(\bfa, \bfb)\sample\ffd{\bfs, \distrib}$ into a sample of
  $\code{H}_{i_{0}}$ if the guess is correct, and into a sample of
  $\code{H}_{i_{0}+1}$ if the guess is incorrect.

  \noindent {\bf Transformation:}
  Let $\vec{g}_{i_{0}+1}\in {\OO_{K}}/{\gothP_{i_{0}+1}}$. It will be our guess for
  $\widehat{\vec{s}}\eqdef\bfs\mod \gothP_{i_{0}+1}$. Let us consider now the following operations

  \begin{itemize}
    \item Take $\bfg\in {\OO_{K}}/{\gothP}$ such that
          $\bfg\equiv \vec{g}_{i_{0}+1}\mod \gothP_{i_{0}+1}$ and
          $\bfg\equiv \vec{0}\mod \gothP_{j}$ for $j\neq i_{0}+1$.
    \item Sample $\vec{h}_{j}\sample {\OO_{K}}/{\gothP_{j}}$ for
          $1\le j\le i_{0}$ and take $\bfh\in {\OO_{K}}/{\gothP}$ such that
          $\bfh\equiv \vec{h}_{j}\mod \gothP_{j}$ for $1\le j\le i_{0}$ and
          $\bfh\equiv \vec{0}\mod \gothP_{j}$ for $j\ge i_{0}+1$.
\item Sample $\vec{v}_{i_{0}+1}\sample {\OO_{K}}/{\gothP_{i_{0}+1}}$ and take
          $\bfv\in {\OO_{K}}/{\gothP}$ such that
          $\bfv\equiv \vec{v}_{i_{0}+1}\mod \gothP_{i_{0}+1}$ and
          $\bfv\equiv \vec{0}\mod \gothP_{j}$ for $j\neq i_{0}+1$.
  \end{itemize}

  All
  those operations can be done via the CRT. Now, for each sample
  $(\bfa, \bfb\eqdef \bfa\bfs + \bfe)\sample\ffd{\bfs,\distrib}$,
  define $(\bfa', \bfb')$ with
  \[
  \bfa' \eqdef \bfa + \bfv \quad\mbox{and}\quad
  \bfb' \eqdef \bfb + \bfh +\bfv\bfg.
  \]
  Note that for each sample $(\bfa, \bfb)$, the corresponding $\bfa'$ is
  still uniformly distributed over ${\OO_{K}}/{\gothP}$ and $\bfb'~=~\bfa'\bfs + \bfe + \bfh'$ with
  $\bfh' \eqdef \bfh + (\bfg-\bfs)\bfv$. Furthermore, $\vec{h}$ verifies:
  \[\left\lbrace
      \begin{array}{ll}
        \bfh' \equiv \vec{h}_{j}& \mod \gothP_{j} \text{ for } j\le i_{0} \\
        \bfh' \equiv (\vec{g}_{i_{0}+1} - \widehat{\vec{s}}) \vec{v}_{i_{0}+1}\ & \mod \gothP_{i_{0}+1} \\
        \bfh' \equiv \vec{0} & \mod \gothP_{j} \text{ for } j > i_{0}+1. \\
      \end{array}\right.
    \]
In particular, $\bfh'$ is uniformly distributed modulo
    $\gothP_{j}$ for $j\le i_{0}$ and $\vec{0}$ modulo $\gothP_{j}$ for
    $j > i_{0}+1$.

    Now, if the guess is correct, meaning
    $\vec{g}_{i_{0}+1}=\widehat{\vec{s}}$, then
    $\bfh'\equiv 0\mod \gothP_{i_{0}+1}$, hence $(\bfa',\bfb')$ is
    distributed according to $\code{H}_{i_{0}}$. On the other hand, if
    the guess is incorrect,
    $(\vec{g}_{i_{0}+1}-\widehat{\vec{s}})\neq \vec{0}$ in
    ${\OO_{K}}/{\gothP_{i_{0}+1}}$. But $\OO_{K}$ is a Dedekind domain and
    $\gothP_{i_{0}+1}$ is a prime ideal, therefore it is also maximal
    and ${\OO_{K}}/{\gothP_{i_{0}+1}}$ is in fact a {\em field}. Since
    $\vec{v}_{i_{0}+1}$ is uniformly distributed in
    ${\OO_{K}}/{\gothP_{i_{0}+1}}$, so is
    $(\vec{g}_{i_{0}+1}-\widehat{\vec{s}}) \vec{v}_{i_{0}+1}$. In
    particular, $\bfh'$ is also uniformly distributed modulo
    $\gothP_{i_{0}+1}$. Hence, $(\bfa',\bfb')$ is distributed
    according to $\code{H}_{i_{0}+1}$.

    The algorithm $\code{B}$ proceeds as follows: for each
    $(\bfa,\bfb)\sample\ffd{\bfs, \psi}$, it applies the previous
    transformation to get a sample $(\bfa', \bfb')$, and then uses the
    distinguisher $\code{A}$. Repeating the procedure $m$ times (for each guess $\vec{g}_{i_{0}+1}$), for
     $m$ large enough, and doing a majority voting allows to
    recover $\widehat{\bfs}$ with overwhelming probability. More
    precisely, it relies on the use of the Chernoff bound.

    \begin{proposition}[Chernoff bound]
      Let $(X_{j})_{1\le j\le m}$ be $m$ independent Bernouilli random
      variables with parameter $1/2+\delta$. Let $X \eqdef \sum_{j=1}^{m}X_{j}$. Then
      \[
        \Prob\left(X \le \frac{m}{2}\right) \le e^{-2m\delta^{2}}.
      \]
    \end{proposition}

    Consider $m$ trials of the guess and search procedure, and let
    $X_{j}$ denote the indicator random variable that the $j$-th run
    returns the correct value. Since $\code{A}$ has distinguishing
    advantage $\delta$, $X_{j}$ is a Bernouilli with parameter
    $\frac{1}{2}+\delta$.

    After $m$ trials, the procedure fails if and only if more than
    $m/2$ runs are wrong. By Chernoff bound, the probability that it
    happens is less than $e^{-2m\delta^{2}}$. Therefore, by choosing
    $m\ge \ln(\frac{1}{\mu})\frac{1}{2\delta^{2}}$, the above
    procedure returns the correct guess with probability at least
    $1-\mu$. Therefore if one sets $\mu = 2^{-\Theta(n)}$ (to get an
    overwhelming probability of success), it is enough to choose $m$
    as $\Theta\left( \frac{n}{\delta^{2}}\right)$. It enables to
check if our guess
  $\widehat{\vec{s}} = \vec{s} \mod \gothP_{i_{0}+1}$ is correct or
  not with overwhelming probability. To recover
  $\vec{s} \mod \gothP_{i_{0}+1}$ it remains to try all the possible
  guesses $\widehat{\vec{s}} \in {\mathcal{O}_{K}}/{\gothP_{i_{0}+1}}$.
  But the size of ${\mathcal{O}_{K}}/{\gothP_{i_{0}+1}}$ is given by
  $q^{f\deg(Q)}$, which yields the claimed time complexity.
\end{proof}

{\bf \noindent Step 4: Action of the Galois group.}
Until Step $3$, we are able to recover the secret $\bfs$ modulo one
of the factors. In order to recover the full secret, we use
the Galois group $G\eqdef \Gal({K}/{\Fq(T)})$. This last part is
{\em crucial} for the reduction to work. Recall that $G$ acts transitively
on the set of prime ideals above $\mathfrak{p}$, \ie for every
$i\neq j$, there exists $\sigma\in G$ such that
$\sigma\left( \gothP_{i} \right) = \gothP_{j}$.

\begin{lemma}
  Fix $\bfs\in {\OO_{K}}/{\gothP}$. Let $1\le i\le r$ and let $\code{A}$
  be an algorithm running in time $t$, and recovering
  $\bfs\mod \gothP_{i}$ by making queries to an oracle for
  $\ffd{\bfs, \distrib}$.
  Then there exists an algorithm $\code{B}$ running in time
  $O\left( t\times r \right)$ that recovers the full secret $\bfs$.
\end{lemma}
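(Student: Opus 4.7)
The plan is to use the transitive action of the Galois group $G \eqdef \Gal(K/\Fq(T))$ on the primes $\gothP_1, \dots, \gothP_r$ above $\gothp$ in order to reduce, for every index $j \in \{1,\dots,r\}$, the task of recovering $\bfs \mod \gothP_j$ to the task of recovering some secret modulo $\gothP_i$, which is precisely what $\code{A}$ does. Once $\bfs \mod \gothP_j$ is known for every $j$, the Chinese Remainder Theorem reconstructs the full secret $\bfs \in \OO_K / \gothP$.

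More precisely, fix $j$. By transitivity of the Galois action on the primes above $\gothp$, pick $\sigma \in G$ such that $\sigma(\gothP_j) = \gothP_i$. Since $G$ stabilises $\OO_K$ globally, $\sigma$ induces a well-defined ring automorphism of $\OO_K / \gothP$ permuting the factors $\OO_K / \gothP_\ell$. From each sample $(\bfa,\bfb = \bfa \bfs + \bfe) \sample \ffd{\bfs, \distrib}$ we construct $(\sigma(\bfa), \sigma(\bfb)) = (\sigma(\bfa), \sigma(\bfa)\sigma(\bfs) + \sigma(\bfe))$. As $\sigma$ is a bijection on $\OO_K/\gothP$, the element $\sigma(\bfa)$ is uniformly distributed, and because $\distrib$ is assumed to be Galois-invariant, $\sigma(\bfe)$ is again distributed according to $\distrib$ and is independent of $\sigma(\bfa)$. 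Hence $(\sigma(\bfa), \sigma(\bfb)) \sample \ffd{\sigma(\bfs), \distrib}$, and we may feed these transformed samples to $\code{A}$.

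Running $\code{A}$ on this transformed oracle returns $\sigma(\bfs) \mod \gothP_i$. Applying $\sigma^{-1}$ (which sends $\gothP_i$ back to $\gothP_j$) yields $\bfs \mod \gothP_j$. Repeating this procedure for each of the $r$ values of $j$ (using an appropriate $\sigma_j \in G$ each time, with $\sigma_i = \mathrm{id}$) recovers $\bfs \mod \gothP_j$ for every $j$, at a total cost of $r$ invocations of $\code{A}$, that is, in time $O(r \cdot t)$ up to the negligible overhead of applying Galois elements and performing the final CRT reconstruction.

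The only subtle point is really to notice that everything goes through because $\distrib$ is Galois-invariant; without that hypothesis the transformed error $\sigma(\bfe)$ would not follow the prescribed distribution and $\code{A}$ would receive samples outside its specification. Everything else is routine: the Galois elements $\sigma_j$ can be computed in polynomial time by the implicit assumption on efficient access to $G$, and CRT reconstruction in $\OO_K/\gothP \simeq \prod_j \OO_K/\gothP_j$ is polynomial as well, so it contributes no significant overhead to the announced running time.
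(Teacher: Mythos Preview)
Your proof is correct and follows essentially the same approach as the paper: use the transitivity of the Galois action to transport samples via $\sigma$ with $\sigma(\gothP_j)=\gothP_i$, invoke Galois-invariance of $\distrib$ so that $(\sigma(\bfa),\sigma(\bfb))\sample\ffd{\sigma(\bfs),\distrib}$, run $\code{A}$ to get $\sigma(\bfs)\bmod\gothP_i$, apply $\sigma^{-1}$ to obtain $\bfs\bmod\gothP_j$, and conclude by CRT. The only differences are cosmetic---you spell out a bit more explicitly why $\sigma$ descends to $\OO_K/\gothP$ and why the Galois and CRT overheads are negligible.
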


\begin{proof}
  We build $\code{B}$ as follows: for every factor $\gothP_{j}$ of
  $\gothP$, it chooses $\sigma\in\Gal({K}/{\Fq(T)})$ such that
  $\sigma(\gothP_{j}) = \gothP_{i}$. Then, for each sample
  $(\bfa, \bfb)\sample \ffd{\bfs, \distrib}$, it runs $\code{A}$ on
  the input $(\sigma(\bfa), \sigma(\bfb))$ to recover an
  element $\vec{s}_{j}$ and stores
  $\sigma^{-1}(\vec{s}_{j})$.

  Note that $\Gal({K}/{\Fq(T)})$ keeps the uniform distribution over
  ${\OO_{K}}/{\gothP}$. In particular, for every sample
  $(\bfa, \bfb)\sample \ffd{\bfs,\distrib}$, the corresponding
  $\sigma(\bfa)$ is also uniformly distributed over
  ${\OO_{K}}/{\gothP}$.  Furthermore, $\bfb = \bfa \bfs + \bfe$ with
  $\bfe \sample \distrib$. Therefore,
  $\sigma(\bfb) = \sigma(\bfa) \sigma(\bfs) +
  \sigma(\bfe)$.  But $\distrib$ is Galois invariant by
  assumption, and hence $\sigma(\bfe)$ is also distributed
  according to $\distrib$.  In particular,
  $(\sigma(\bfa), \sigma(\bfb))$ is a valid sample of
  $\ffd{\sigma(\bfs), \distrib}$.

  Now, our algorithm $\code{A}$ is able to recover
  $\vec{s}_{j} \eqdef \sigma(\bfs) \mod \gothP_{i}$ in time
  $t$, and
  $$\sigma^{-1}(\vec{s}_{j}) = \sigma^{-1}\left(\sigma(\bfs) \mod \gothP_{i}\right) = \bfs
  \mod \sigma^{-1}\left(\gothP_{i}\right) = \bfs \mod
  \gothP_{j}.$$
  Therefore, we are able to recover
  $\bfs \mod \gothP_{j}$ for any $1 \leq j \leq r$. To compute the
  full secret $\vec{s}$ it remains to use the Chinese Remainder
  Theorem.
  The running time of this full procedure is given by a $O(t\times r)$
  which concludes the proof.
\end{proof}

 }
\section{Cyclotomic function fields and the Carlitz module}
\label{sec:carlitz}
In Section \ref{sec:FFDP}, we introduced the generic problem \FFDP{}
and noticed that our search to decision reduction\iftoggle{llncs}{}{ given in Section
\ref{sec:StD}} needed Galois function fields. In \cite{LPR10}, it was
proposed to use cyclotomic number fields to instantiate the Ring--\LWE{}
problem. Here, we propose to instantiate \FFDP{} with the function
field analogue, namely {\em Carlitz} extensions. We give a self
contained presentation of the theory of Carlitz extensions. The
interested reader can refer to \cite[ch. 12]{R02}, \cite{NX01} and
the excellent survey \cite{Conrad-Carlitz} for further reference.

Carlitz extensions are function fields analogues of the cyclotomic
extensions of $\QQ$. A dictionary summarizing the similarities is
given in Table \ref{fig:cyclotomic_carlitz_analogy}. These extensions
were discovered by Carlitz in the late 1930s but the analogy was not
well known until the work of his student Hayes who studied them in
\cite{H74} to give an explicit construction of the abelian extensions
of the rational function field $\Fq(T)$ and prove an analogue of the
usual Kronecker-Webber theorem which states that any abelian extension
of $\QQ$ are subfields of cyclotomic number fields. This result was
generalized in the following years with the work of Drinfeld and Goss
to yield a complete solution to Hilbert twelfth problem in the
function field setting. In the number field setting, such an explicit
construction is only known for abelian extensions of $\QQ$ (cyclotomic
extensions), imaginary quadratic number fields (via the theory of
elliptic curves with complex multiplication).

The first idea that comes to mind when one wants to build cyclotomic
function fields is to adjoin roots of unity to the field $\Fq(T)$.
However, roots of unity are already {\em algebraic} over $\Fq$. In
other words, adding them only yields so--called {\em extensions of
  constants}.

\begin{example}
  Let $\zeta_{n}$ be an $n$--th root of unity in $\Fq(T)$. Note that
  it belongs to some {\em finite} extension of $\Fq$. Let $\Fqm$ be
  the extension of $\Fq$ of minimal degree such that
  $\zeta_{n}\in \Fqm$ (it can be $\Fq$ itself). Then
  \[
    \Fq(T)[\zeta_{n}] = \Fqm(T),
  \]
  and the field of constants of $\Fq(T)[\zeta_{n}]$ is $\Fqm$.
\end{example}

However, in our reduction setting, such extensions will only increase
the size of the search space in Step~$3$. More precisely, if $K$ is an
algebraic extension of $\Fq(T)$, the constant field of $K$ is always a
subfield of $\OO_{K}/\gothP$ for any prime ideal $\gothP$ of
$\OO_{K}$. But recall that in our search to decision reduction, we
need to do an exhaustive search in this quotient $\OO_{K}/\gothP$, so
we need it to be as small as possible. Henceforth, we cannot afford
constant field extensions. For Carlitz extensions, this will be
ensured by Theorem~\ref{thm:Carlitz_constant}.

\iftoggle{llncs}{ }{
\begin{example}
  As a matter of example, consider the polynomial $T^{2}+T+1$ over
  $\F2$. It is irreducible. Let $\zeta_{3}\in\F4$ be one of its roots.
  It is a cube root of $1$. Now, consider the field extension
  $K \eqdef \F2(T)(\zeta_{3}) = \F4(T)$ and let $\OO_{K}$ be the
  integral closure of $\F2[T]$ in $K$. The prime ideal
  $\mathfrak{p}~\eqdef~(T^{2}+T+1)$ of $\F2[T]$ splits into two prime
  ideals $\gothP_{1}$ and $\gothP_{2}$ in $\OO_{K}$. But
  $\OO_{K}/\gothP_{1} =~\OO_{K}/\gothP_{2} =~\F4 =~\F2[T]/\gothp$ and
  we do not win anything by considering the extension $K/\F2(T)$.
\end{example}
}

\subsection{Roots of unity and torsion}

As mentioned in the beginning of this section, it is not sufficient to
add roots of unity. One has to go deeper into the algebraic structure
that is adjoined to $\QQ$. Indeed, the set of all $m$--th roots of
unity, denoted by $\mu_{m} \subset \CC$, turns out to be an abelian group under multiplication. Moreover,
$\mu_{m}$ is in fact {\em cyclic}, generated by any {\em primitive}
root of unity.

In commutative algebra, abelian groups are {\em $\ZZ$-modules}. Here
the action of $\ZZ$ is given by exponentiation: $n\in\ZZ$ acts on
$\zeta\in\mu_{m}$ by $n\cdot \zeta \eqdef \zeta^{n}$. This action of
$\ZZ$ can in fact be extended to all $\overline{\QQ}^{\times}$. When
working with modules over a ring, it is very natural to consider the
{\em torsion elements},
\ie elements of the module that are annihilated by an element of the
ring. The torsion elements in the $\ZZ$--module
$\overline{\QQ}^{\times}$ are the $\zeta\in\overline{\QQ}^{\times}$
such that $\zeta^{m}=1$ for some $m>0$; these are precisely the roots
of unity. In other words, the cyclotomic number fields are obtained by
adjoining to $\QQ$ torsions elements of the $\ZZ$--module
$\overline{\QQ}^{\times}$.

Under the analogy summed up in Table \ref{fig:ff_analogy}, replacing
$\ZZ$ by $\Fq[T]$ and $\QQ$ by $\Fq(T)$, we would like to consider
some $\Fq[T]$--module and adjoin to $\Fq(T)$ the torsion
elements. Note that $\Fq[T]$--modules are in particular $\Fq$--vector
spaces\iftoggle{llncs}{, hence the action of $\Fq[T]$ should be linear.}{. The
natural candidate could be $\overline{\Fq(T)}$ with $\Fq[T]$ acting by
multiplication. However, the torsion elements are not very
interesting. Indeed if there is some $f \in \overline{\Fq(T)}$ and
some $a \in \Fq[T]\setminus \{0\}$ such that $af =0$ then, $f = 0$.
Therefore, for the usual action by multiplication, only $0$ is a
torsion element. Thus, we need to define {\em another} $\Fq[T]$--module
structure, in the same way that we did not consider the natural action
of $\ZZ$ by multiplication.} This new module structure can be defined
using so called {\em Carlitz polynomials}: for each polynomial
$M\in\Fq[T]$, we define its Carlitz polynomial $[M](X)$ as a
polynomial in $X$ with coefficients in $\Fq[T]$, and $M\in\Fq[T]$ will
act on $\alpha\in\overline{\Fq(T)}$ by
$M\cdot \alpha \eqdef [M](\alpha)$\iftoggle{llncs}{ with $[M](\alpha+\beta) = [M](\alpha)+[M](\beta)$.}{. In the literature, the notation
$\alpha^{M}$ can also be found to emphasize the analogy with the
action of $\ZZ$ by exponentiation, but it can be confusing.
}
\iftoggle{llncs}{}{In the same way that the action of $\ZZ$ was
multiplicative: $(\alpha\beta)^{n} = \alpha^{n}\beta^{n}$, the action
of $\Fq[T]$ will be {\em additive}:
$[M](\alpha + \beta) = [M](\alpha) + [M](\beta)$.} In other words,
$[M](X)$ should be an {\em additive polynomial}. In positive
characteristic this can easily be achieved by considering
$q$--polynomials, \ie polynomials whose monomials are only $q$--th
powers of $X$, namely of the form
\[
  P(X) = p_{0}X + p_{1}X^{q} + \cdots + p_{r}X^{q^{r}}.
\]

\iftoggle{llncs}{ }{
\begin{remark}
  $q$--polynomials with coefficients in some finite field $\Fqm$
  are also used in coding theory to build so called {\em rank metric codes}.
  However, here we consider $q$--polynomials with coefficients in
  $\Fq[T]$.
\end{remark}
}

\subsection{Carlitz polynomials}
The definition of Carlitz polynomial will proceed by induction and
linearity. Define $[1](X) \eqdef X$ and $[T](X) \eqdef X^{q} +TX$. For
$n\ge 2$, define

\[
  [T^{n}](X) \eqdef [T]([T^{n-1}](X)) = [T^{n-1}](X)^{q} + T[T^{n-1}](X).
\]

\vspace{\baselineskip}

\noindent Then, for a polynomial $M = \sum_{i=0}^{n}a_{i}T^{i}\in\Fq[T]$, define
$[M](X)$ by forcing $\Fq$--linearity:

\[
  [M](X) \eqdef \sum_{i=0}^{n}a_{i}[T^{i}](X).
\]

\begin{example}We have,
\begin{itemize}[label=\textbullet]
  \item $[T^{2}](X) = [T](X^{q} + TX) = X^{q^{2}}+(T^{q} + T)X^{q} + T^{2}X$
  \item $[T^{2}+T+1](X) = [T^{2}](X) + [T](X) + [1](X) = X^{q^{2}}+(T^{q}+T+1)X^{q} + (T^{2}+T+1)X$
\end{itemize}
\end{example}

By construction, Carlitz polynomials are additive polynomials, and
$\Fq$--linear. Furthermore, for two polynomials $M, N\in\Fq[T]$,
$[MN](X) = [M]([N](X)) = [N]([M](X))$. In particular, Carlitz
polynomials commute with each other under composition law, which is
not the case in general for $q$--polynomials.

\subsection{Carlitz module}
Endowed with this $\Fq[T]$--module structure, $\overline{\Fq(T)}$ is called the
{\em Carlitz module}.

\begin{definition}For $M\in\Fq[T]$, $M\neq 0$, let
  $\Lambda_{M} \eqdef \{\lambda \in \overline{\Fq(T)} \mid [M](\lambda) = 0\}$.
  This is the module of $M$--torsion of the Carlitz module.
\end{definition}

\begin{example}\label{ex:LambdaT}
  $\Lambda_{T} = \{\lambda \in \overline{\Fq(T)} \mid \lambda^{q} + T\lambda = 0\} = \{0\} \cup \{\lambda \mid \lambda^{q-1} = -T\}$.
\end{example}

In the same way that $\mu_{m}$ is an abelian group (\ie a
$\ZZ$--module), note that $\Lambda_{M}$ is also a submodule of the
Carlitz module: for $\lambda\in\Lambda_{M}$ and $A\in \Fq[T]$,
$[A](\lambda) \in \Lambda_{M}$. In particular, $\Lambda_{M}$ is an
$\Fq$--vector space.

\begin{example}
  The module $\Lambda_{T}$ defined in Example~\ref{ex:LambdaT} is an
  $\Fq$--vector space of dimension $1$. In particular, for
  $\lambda\in\Lambda_{T}$, and $A\in\Fq[T]$, $[A](\lambda)$ must be a
  multiple of $\lambda$. In fact the Carlitz action of $A$ on
  $\lambda$ is through the constant term of $A$: writing
  $A = TB + A(0)$ we have
  \[
    [A](\lambda) = [TB + A(0)](\lambda) = [B](\underbrace{[T](\lambda)}_{=0}) + A(0)[1](\lambda) = A(0)\lambda.
  \]
\end{example}

More generally, even if in general $\Lambda_{M}$ is not of dimension
$1$ over $\Fq$, it is always a {\em cyclic} $\Fq[T]$--module: as an
$\Fq[T]$--module it can be generated by only one element. This is
specified in the following theorem.

\begin{theorem}[{\cite[Lemma 3.2.2]{NX01}}] There exists
  $\lambda_{0}\in\Lambda_{M}$ such that
  $\Lambda_{M} = \{ [A](\lambda_{0}) \mid A\in\Fq[T]/(M) \}$ and the
  generators of $\Lambda_{M}$ are the $[A](\lambda_{0})$ for all $A$
  prime to $M$. The choice of a generator yields a non canonical
  isomorphism $\Lambda_{M}\simeq \Fq[T]/(M)$ as $\Fq[T]$--modules.
\end{theorem}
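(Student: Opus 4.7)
The plan is to establish that $\Lambda_M$ is cyclic as an $\Fq[T]$--module by comparing its size with its annihilator, then invoke the structure theorem for finitely generated modules over the principal ideal domain $\Fq[T]$.

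First, I would determine the size of $\Lambda_M$ as an $\Fq$--vector space. Unwinding the definition, $[M](X)$ is a $q$--polynomial in $X$ whose leading monomial is $X^{q^{\deg M}}$ and whose linear coefficient is $M$ itself (both facts follow by induction from $[T^i]$, with linear coefficient $T^i$). Because we are in characteristic $p$, the formal derivative $[M]'(X) = M$ is a nonzero element of $\Fq[T] \subset \overline{\Fq(T)}$. Hence $[M]$ is separable in $\overline{\Fq(T)}[X]$, so it has $q^{\deg M}$ distinct roots, giving $\#\Lambda_M = q^{\deg M}$.

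Second, I would show that the annihilator of $\Lambda_M$ in $\Fq[T]$ is exactly the ideal $(M)$. One inclusion is obvious: since Carlitz polynomials commute under composition, for any $\lambda \in \Lambda_M$ and any $A \in \Fq[T]$ we have $[AM](\lambda) = [A]([M](\lambda)) = 0$, so $(M)$ annihilates $\Lambda_M$. For the converse, suppose $A \in \Fq[T]$ annihilates $\Lambda_M$ and perform Euclidean division $A = QM + R$ with $\deg R < \deg M$. Then for every $\lambda \in \Lambda_M$,
\[
[A](\lambda) = [Q]([M](\lambda)) + [R](\lambda) = [R](\lambda),
\]
so $[R]$ vanishes identically on $\Lambda_M$. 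If $R \neq 0$, then $[R](X)$ is a nonzero polynomial of degree $q^{\deg R} < q^{\deg M} = \#\Lambda_M$, which is impossible. Therefore $R = 0$ and $M \mid A$.

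Third, I would apply the structure theorem for finitely generated modules over the PID $\Fq[T]$: there exist monic $d_1 \mid d_2 \mid \cdots \mid d_k$ in $\Fq[T]$ such that $\Lambda_M \simeq \bigoplus_{i=1}^{k} \Fq[T]/(d_i)$ as $\Fq[T]$--modules. The annihilator of this module is $(d_k)$, so by the previous step $d_k = M$ (up to units). Counting sizes gives $q^{\sum \deg d_i} = \#\Lambda_M = q^{\deg M} = q^{\deg d_k}$, forcing $d_1 = \cdots = d_{k-1} = 1$. Hence $\Lambda_M \simeq \Fq[T]/(M)$, and any element $\lambda_0 \in \Lambda_M$ corresponding to $1 \bmod M$ is a generator; equivalently, $\Lambda_M = \{[A](\lambda_0) \mid A \in \Fq[T]/(M)\}$. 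Finally, the generators of the cyclic $\Fq[T]$--module $\Fq[T]/(M)$ are exactly the units of this ring, i.e.\ the classes of polynomials coprime to $M$, which yields the description of the generators of $\Lambda_M$.

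The main obstacle is the second step: producing an element with annihilator exactly $(M)$ relies on the additive/multiplicative compatibility $[AM] = [A]\circ[M]$, together with the separability of $[M]$ to conclude that a lower-degree Carlitz polynomial cannot vanish on the full set $\Lambda_M$. Once this is in hand, the structure theorem does the rest, and the non-canonicity of the isomorphism $\Lambda_M \simeq \Fq[T]/(M)$ simply reflects the freedom in choosing which generator $\lambda_0$ corresponds to $1 \bmod M$.
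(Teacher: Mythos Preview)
Your proof is correct. Note, however, that the paper does not give its own proof of this statement: it is quoted verbatim from \cite[Lemma~3.2.2]{NX01} and used as a black box, so there is no in-paper argument to compare against. Your approach --- computing $\#\Lambda_M$ via separability of $[M](X)$, identifying the annihilator as $(M)$, and then invoking the structure theorem for finitely generated modules over the PID $\Fq[T]$ --- is the standard one and is essentially how the cited reference proceeds.
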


\begin{remark}
  The previous theorem needs to be related to the cyclotomic case:
  given the choice of a primitive $m$--th root of unity, there is a
  group isomorphism between $\mu_m$ and $\ZZ/m\ZZ$. Moreover all the
  $m$--th roots of unity are of the form $\zeta^{k}$ for
  $k\in\llbracket 0,m-1\rrbracket$ and the generators of $\mu_{m}$ are the
  $\zeta^{k}$ for $k$ prime to $m$.
\end{remark}

\subsection{Carlitz extensions} Recall that the cyclotomic number
fields are obtained as extensions
  of $\QQ$ generated by the elements of $\mu_m$. In the similar
fashion, for a polynomial $M\in\Fq[T]$, let
\[
  K_{M} \eqdef \Fq(T)(\Lambda_{M}) = \Fq(T)(\lambda_{M}),
\]
where
$\lambda_{M}$ is a generator of $\Lambda_{M}$. One of the most
important fact about the cyclotomic number field $\QQ(\zeta_{m})$ is
that it is a finite Galois extension of $\QQ$, with Galois group
isomorphic to $(\ZZ/m\ZZ)^{\times}$. There is an analogue statement
for the Carlitz extensions.

\begin{theorem}[{\cite[Th.~3.2.6]{NX01}}]\label{thm:Carlitz_Galois}
  Let $M\in\Fq[T]$, $M\neq 0$. Then $K_{M}$ is a finite Galois
  extension of $\Fq(T)$, with Galois group isomorphic to
  $(\Fq[T]/(M))^{\times}$. The isomorphism is given by
  \[
    \map{(\Fq[T]/(M))^{\times}}{\Gal(K_{M}/\Fq(T))}{A}{\sigma_{A},}
\]
  where $\sigma_{A}$ is completely determined by
  $\sigma_{A}(\lambda_{M}) = [A](\lambda_{M})$.
\end{theorem}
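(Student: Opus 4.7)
The plan is to follow the classical proof that $\QQ(\zeta_m)/\QQ$ is Galois with group $(\ZZ/m\ZZ)^\times$, translated through the number field--function field dictionary of Table~\ref{fig:ff_analogy}. Concretely, I would (i) show that $K_M$ is the splitting field of the separable polynomial $[M](X)$, hence a finite Galois extension; (ii) construct an injective group homomorphism $\Gal(K_M/\Fq(T)) \hookrightarrow (\Fq[T]/(M))^\times$ using the fact that any automorphism must send the generator $\lambda_M$ of the cyclic $\Fq[T]$-module $\Lambda_M$ to another generator; and (iii) conclude by a degree count, which is the hard part and requires irreducibility of a Carlitz-cyclotomic polynomial.

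For (i), by definition $K_M = \Fq(T)(\lambda_M)$ where $\lambda_M$ generates $\Lambda_M$ as an $\Fq[T]$-module. Since every Carlitz polynomial $[A](X)$ has coefficients in $\Fq[T] \subseteq K_M$, we have $[A](\lambda_M) \in K_M$ for all $A \in \Fq[T]$, so $\Lambda_M \subseteq K_M$ and $K_M$ is the splitting field of $[M](X)$ over $\Fq(T)$. To check separability, a quick induction on the recurrence $[T^n](X) = [T^{n-1}](X)^q + T[T^{n-1}](X)$ shows that the coefficient of $X$ in $[M](X)$ is exactly $M \neq 0$; as every other monomial of $[M](X)$ is a $q^i$-th power of $X$ with $i \ge 1$, the formal derivative $[M]'(X) = M$ is nonzero, so $[M](X)$ is separable. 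Therefore $K_M/\Fq(T)$ is finite Galois.

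For (ii), any $\sigma \in \Gal(K_M/\Fq(T))$ fixes the coefficients of $[M](X)$, hence permutes its roots, so $\sigma(\Lambda_M) \subseteq \Lambda_M$. Because $[B]$ has coefficients in $\Fq[T]$, we have $\sigma([B](\lambda)) = [B](\sigma(\lambda))$ for every $B \in \Fq[T]$ and $\lambda \in \Lambda_M$; in other words $\sigma|_{\Lambda_M}$ is an $\Fq[T]$-module automorphism of $\Lambda_M \simeq \Fq[T]/(M)$. The $\Fq[T]$-linear automorphisms of $\Fq[T]/(M)$ are exactly multiplications by units, so there exists a unique $A \in (\Fq[T]/(M))^\times$ with $\sigma(\lambda_M) = [A](\lambda_M)$. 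The assignment $\sigma \mapsto A$ is a group homomorphism because $[AB] = [A] \circ [B]$, and it is injective because $\sigma$ is determined by its value on $\lambda_M$, a generator of $K_M$ over $\Fq(T)$.

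The main obstacle is surjectivity, \ie establishing $[K_M : \Fq(T)] = |(\Fq[T]/(M))^\times|$. My plan is to reduce to the case $M = P^k$ with $P$ irreducible, using that $\Lambda_{MN} = \Lambda_M \oplus \Lambda_N$ when $\gcd(M,N) = 1$ (which follows from $|\Lambda_M| \cdot |\Lambda_N| = q^{\deg M + \deg N} = |\Lambda_{MN}|$ together with $\Lambda_M \cap \Lambda_N = \{0\}$ by a Bezout-type argument applied to the Carlitz action). This yields a compatible decomposition $K_{MN} = K_M K_N$ with linearly disjoint factors over $\Fq(T)$. In the prime-power case, I would apply Eisenstein's criterion at the prime $P \in \Fq[T]$ to the polynomial $[P^k](X)/[P^{k-1}](X)$, which admits $\lambda_{P^k}$ as a root: its leading coefficient is a unit, its constant term is $P$ up to a unit, and the intermediate coefficients lie in $(P)$ by the recursive construction of Carlitz polynomials. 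This forces the minimal polynomial of $\lambda_{P^k}$ over $\Fq(T)$ to have degree $q^{k \deg P} - q^{(k-1)\deg P} = |(\Fq[T]/(P^k))^\times|$, and combining with the multiplicative decomposition yields the desired equality for general $M$.
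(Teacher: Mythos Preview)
The paper does not prove this theorem; it is quoted verbatim from \cite[Th.~3.2.6]{NX01} as background. So there is no ``paper's proof'' to compare against. That said, your outline is the standard one (essentially that of Rosen \cite[Ch.~12]{R02} and Niederreiter--Xing \cite{NX01}), and steps~(i) and~(ii) are clean and correct.

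The only real gap is in step~(iii), specifically the claim that $K_M$ and $K_N$ are \emph{linearly disjoint} over $\Fq(T)$ when $\gcd(M,N)=1$. The module decomposition $\Lambda_{MN} = \Lambda_M \oplus \Lambda_N$ only gives $K_{MN} = K_M K_N$ as a compositum; it does not by itself yield $[K_M K_N : \Fq(T)] = [K_M:\Fq(T)]\,[K_N:\Fq(T)]$. The usual way to close this is via ramification: from your Eisenstein argument, each prime $P \mid M$ is totally ramified in $K_{P^{v_P(M)}}$, while one checks separately (this is the content of Theorem~\ref{thm:carlitz_splitting}) that $P$ is unramified in $K_N$ whenever $P \nmid N$. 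Hence $P$ is simultaneously totally ramified and unramified in $K_M \cap K_N$, forcing $K_M \cap K_N = \Fq(T)$ and thus linear disjointness. You should either invoke this ramification fact explicitly or supply an alternative argument; as written, the reduction to prime powers is incomplete.

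A minor point on the Eisenstein step: the key input is that $[P](X) \equiv X^{q^{\deg P}} \pmod{P}$ for $P$ irreducible, which is what makes all intermediate coefficients of $[P^k](X)/[P^{k-1}](X)$ fall into $(P)$. This is not entirely trivial from the recursion alone (it is a small computation using that $[P](X) \bmod P$ is an additive $q$-polynomial over $\Fq[T]/(P)$ of degree $q^{\deg P}$ annihilating every element of the residue field), and is worth stating explicitly.
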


\begin{remark}
In particular, Carlitz extensions are {\em abelian}.
\end{remark}

Another important fact about cyclotomic extensions is the simple
description of their ring of integers. Namely, for $K = \QQ(\zeta_{m})$,
we have $\OO_{K} = \ZZ[\zeta_{m}] = \ZZ[X]/(\Phi_{m}(X))$ where
$\Phi_{m}$ denotes the $m$--th cyclotomic polynomial. This property
also holds for Carlitz extensions.

\begin{theorem}[{\cite[Th.~2.9]{R02}}]
  Let $\OO_{M}$ be the integral closure of $\Fq[T]$ in $K_{M}$. Then
  $\OO_{M} =~\Fq[T][\lambda_{M}]$. In particular, let
  $P(T, X) \in \Fq[T][X]$ be the minimal polynomial of $\lambda_{M}$.
  Then,
  \[
    K_{M} = \fract{\Fq(T)[X]}{(P(T, X))} \quad \text{and} \quad
    \OO_{M} = \fract{\Fq[T][X]}{(P(T, X))}.
  \]
\end{theorem}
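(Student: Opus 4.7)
\emph{Proof plan.} The plan is to establish $\OO_M = \Fq[T][\lambda_M]$; once this ring equality is in hand, the two presentations of $K_M$ and $\OO_M$ follow formally. Indeed, since by construction $K_M = \Fq(T)(\lambda_M)$ and $P(T,X)$ is the minimal polynomial of $\lambda_M$ over $\Fq(T)$, the evaluation map yields the isomorphism $K_M \simeq \Fq(T)[X]/(P(T,X))$; its restriction to $\Fq[T][X]$ gives a surjection $\Fq[T][X]/(P(T,X)) \twoheadrightarrow \OO_M$ between torsion--free $\Fq[T]$--modules of the same rank $[K_M:\Fq(T)] = \deg_X P$, which is therefore an isomorphism. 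The inclusion $\Fq[T][\lambda_M] \subseteq \OO_M$ is the easy one: assuming $M$ is monic (which is no loss since $\Lambda_{cM} = \Lambda_M$ for $c \in \Fq^\times$), the polynomial $[M](X)$ is monic in $\Fq[T][X]$ and $\lambda_M$ is one of its roots, so $\lambda_M$ is integral over $\Fq[T]$.

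For the reverse inclusion I would proceed in two steps. \emph{First, reduce to prime powers.} Write $M = Q_1^{n_1}\cdots Q_s^{n_s}$ with distinct monic irreducibles $Q_i$. The isomorphism $\Lambda_M \simeq \Fq[T]/(M)$ together with CRT gives $\Lambda_M \simeq \bigoplus_i \Lambda_{Q_i^{n_i}}$, so $K_M$ is the compositum of the $K_{Q_i^{n_i}}$, and through Theorem~\ref{thm:Carlitz_Galois} the Galois groups factor accordingly. One then checks that each $K_{Q_i^{n_i}}/\Fq(T)$ ramifies only at $Q_i$ (and possibly at infinity), hence the discriminants over $\Fq[T]$ are pairwise coprime, and the standard compositum argument reduces the problem to establishing $\OO_{Q^n} = \Fq[T][\lambda_{Q^n}]$ for each prime power $Q^n$. \emph{Second, handle the prime power case via an Eisenstein argument.} The quotient
\[
	\Psi_{Q^n}(X) \eqdef \frac{[Q^n](X)}{[Q^{n-1}](X)} = \Psi_Q\bigl([Q^{n-1}](X)\bigr)
\]
is a polynomial of degree $\phi(Q^n) \eqdef q^{n \deg Q} - q^{(n-1)\deg Q} = [K_{Q^n}:\Fq(T)]$ which annihilates $\lambda_{Q^n}$. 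The key computation is that $[Q](X) \equiv X^{q^{\deg Q}} \pmod Q$ (since each $[T^i](X) \equiv X^{q^i} \pmod T$ generalises to any prime $Q$), which by iteration and division gives $\Psi_{Q^n}(X) \equiv X^{\phi(Q^n)} \pmod Q$; on the other hand the constant term of $\Psi_{Q^n}$ is $\Psi_Q(0) = Q$, which lies in $(Q) \setminus (Q^2)$. Hence $\Psi_{Q^n}$ is Eisenstein at $Q$, is the minimal polynomial of $\lambda_{Q^n}$, and simultaneously shows that $K_{Q^n}/\Fq(T)$ is totally ramified at $Q$ with uniformiser $\lambda_{Q^n}$. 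A classical consequence of Eisenstein's criterion is that $\Fq[T][\lambda_{Q^n}]$ is then integrally closed at $Q$; at every other finite prime, the discriminant of $\Psi_{Q^n}$ is a unit (being a power of $Q$), so the same order is already maximal, and we conclude $\OO_{Q^n} = \Fq[T][\lambda_{Q^n}]$.

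The main technical obstacle is the Eisenstein computation for $\Psi_{Q^n}$: one must track the coefficients of the iterated Carlitz polynomial modulo $Q$ accurately enough to pin down both the leading term and the constant term. The subsidiary compositum step is essentially formal but does rely on the total-ramification pattern exhibited in the prime power case to produce coprime discriminants. Both ingredients are direct translations of the classical arguments that establish $\OO_{\QQ(\zeta_m)} = \ZZ[\zeta_m]$, with $\ZZ$ replaced by $\Fq[T]$ and the cyclotomic polynomial replaced by $\Psi_{Q^n}$.
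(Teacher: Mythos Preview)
The paper does not prove this theorem: it is quoted as a known result from Rosen's textbook (the citation \cite[Th.~2.9]{R02} in the statement), with no proof supplied in the paper itself. So there is no ``paper's own proof'' to compare against.

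That said, your outline is the standard one and matches what one finds in Rosen (Chapter~12): reduce to prime powers via pairwise coprime discriminants, and for $M=Q^n$ show that $\Psi_{Q^n}(X)=[Q^n](X)/[Q^{n-1}](X)$ is Eisenstein at $Q$, which simultaneously gives irreducibility, total ramification at $Q$, and local maximality of $\Fq[T][\lambda_{Q^n}]$ at $Q$. Two small points you glossed over: first, you assert that the discriminant of $\Psi_{Q^n}$ is a power of $Q$ without justification --- this is true, but it requires an explicit computation of $\Psi_{Q^n}'(\lambda_{Q^n})$ (or of the different), not merely the Eisenstein shape; second, the compositum argument gives $\OO_M=\prod_i \Fq[T][\lambda_{Q_i^{n_i}}]$, and you still need the (easy) observation that each $\lambda_{Q_i^{n_i}}$ can be taken to be $[M/Q_i^{n_i}](\lambda_M)\in\Fq[T][\lambda_M]$ to conclude that this product sits inside $\Fq[T][\lambda_M]$. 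Both are routine, so your plan is sound.
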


\begin{example}\label{ex:LambdaTbis}
  Reconsider Example~\ref{ex:LambdaT} and the module
  $\Lambda_{T} = \{0\} \cup \{ \lambda \mid \lambda^{q-1} = -T\}$. The
  polynomial $X^{q-1}+T$ is Eisenstein in $(T)$ and therefore is
  irreducible. Hence,
  \[
    K_{T} = \fract{\Fq(T)[X]}{(X^{q-1}+T)}.
  \]
  Moreover it is Galois, with Galois group
  $(\Fq[T]/(T))^{\times} \simeq \Fq^{\times}$. A non-zero element
  $a\in\Fq^{\times}$ will act on $f(T,X)\in K_{T}$ by
  \[
    a\cdot f(T, X) \eqdef f(T, [a](X)) = f(T, aX).
  \]
  The integral closure of $\Fq[T]$ in $K_{T}$ is
  \[
    \OO_{T} \eqdef \fract{\Fq[T][X]}{(X^{q-1}+T)}
  \]
  and
  \begin{equation}\label{eq:O_T/(T+1)}
    \fract{\OO_{T}}{((T+1)\OO_{T})} =
    \fract{\Fq[T][X]}{(T+1, X^{q-1}+T)} =
    \fract{\Fq[X]}{(X^{q-1}-1)}.
  \end{equation}
\end{example}

Finally, the following theorem characterizes the splitting behaviour
of primes in Carlitz extensions. A very similar result holds for
cyclotomic extensions.

\begin{theorem}[{\cite[Th.~12.10]{R02}}]\label{thm:carlitz_splitting}
  Let $M\in\Fq[T]$, $M\neq 0$, and let $Q\in\Fq[T]$ be a monic,
  irreducible polynomial. Consider the Carlitz extension $K_{M}$ and
  let $\OO_{M}$ denote its ring of integers. Then,
  \begin{itemize}[label=\textbullet]
    \item If $Q$ divides $M$, then $Q\OO_{M}$ is totally ramified.
    \item Otherwise, let $f$ be the smallest integer $f$ such that
          $Q^{f} \equiv 1 \mod M$. Then $Q\OO_{M}$ is unramified and
          has inertia degree $f$. In particular, $Q$ splits completely
          if and only if $Q\equiv 1 \mod M$.
  \end{itemize}
\end{theorem}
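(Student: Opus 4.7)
The plan follows the classical proof of the splitting behavior of primes in the cyclotomic number fields $\QQ(\zeta_m)/\QQ$, adapted to the Carlitz framework. The crucial algebraic ingredient is a Carlitz analogue of the Fermat-type congruence:
\[
  [Q](X) \equiv X^{q^{\deg Q}} \pmod{Q\,\Fq[T][X]} \quad \text{for every monic } Q \in \Fq[T].
\]
Once this congruence is available, the rest of the proof is a standard manipulation with Eisenstein polynomials, Dedekind domains, and Frobenius elements in abelian extensions, using the explicit description $\OO_M = \Fq[T][\lambda_M]$ and Theorem~\ref{thm:Carlitz_Galois}.

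The key congruence would be proved by working in the twisted polynomial ring $\Fq[T]\{\tau\}$ with commutation relation $\tau a = a^q\tau$, in which the Carlitz module is the $\Fq$-algebra homomorphism $\Fq[T] \to \Fq[T]\{\tau\}$ sending $T$ to $T + \tau$. Applying this homomorphism to $Q(T)$ and reducing modulo $Q$, the image $Q(\bar T + \tau) \in (\Fq[T]/Q)\{\tau\}$ can be expanded and simplified using $Q(\bar T) = 0$ together with the Frobenius twist relation to give $\tau^{\deg Q}$, which is exactly the desired congruence. A more pedestrian alternative is to proceed by induction on $\deg Q$, using the $\Fq$-linearity $[M] = \sum a_i [T^i]$ and the recursion $[T^n] = [T] \circ [T^{n-1}]$ to track the $q$-power coefficients modulo $Q$.

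For the unramified case $Q \nmid M$, I would pick any prime $\mathfrak{P}$ of $\OO_M$ above $Q$ and identify the Frobenius automorphism of $\mathfrak{P}/Q$ with the element $\sigma_Q \in \Gal(K_M/\Fq(T))$ attached to $Q \bmod M$ under the isomorphism of Theorem~\ref{thm:Carlitz_Galois}. Since $\OO_M$ is generated as an $\Fq[T]$-algebra by $\lambda_M$, it suffices to check that $\sigma_Q(\lambda_M) = [Q](\lambda_M) \equiv \lambda_M^{q^{\deg Q}} \pmod{\mathfrak{P}}$, which is immediate from the key congruence applied at $X = \lambda_M$. This identifies $\sigma_Q$ as the Frobenius, the residue extension is separable (so $Q$ is unramified at $\mathfrak{P}$), and the inertia degree $f$ equals the order of $\sigma_Q$ in $(\Fq[T]/M)^\times$, which is by definition the smallest $f \ge 1$ with $Q^f \equiv 1 \pmod M$. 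Complete splitting corresponds to $f = 1$, i.e.\ $Q \equiv 1 \pmod M$.

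For the ramified case $Q \mid M$, I would reduce to the prime-power case by compositum: writing $M = Q^e N$ with $\gcd(Q, N) = 1$, the field $K_M$ is the compositum of $K_{Q^e}$ and $K_N$, and $K_N/\Fq(T)$ is unramified at $Q$ by the previous case, so the ramification of $Q$ in $K_M$ coincides with its ramification in $K_{Q^e}$. For $M = Q$, the minimal polynomial of a generator of $\Lambda_Q$ is $\Phi_Q(X) \eqdef [Q](X)/X$, which has constant term $Q$, leading coefficient $1$, and all middle coefficients in $(Q)$ by the key congruence, hence is Eisenstein at $Q$ and defines a totally ramified extension of degree $q^{\deg Q} - 1 = \#(\Fq[T]/(Q))^\times = [K_Q : \Fq(T)]$. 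For $e \ge 2$, an analogous argument built on the identity $[Q](\lambda_{Q^e}) = \lambda_{Q^{e-1}}$ gives an Eisenstein polynomial for $\lambda_{Q^e}$ over the totally ramified extension $K_{Q^{e-1}}/\Fq(T)$, and one concludes by induction. The main obstacle of the whole proof is the key Carlitz congruence: conceptually transparent via the twisted polynomial ring, but requiring careful bookkeeping of the non-commutativity to justify the formal manipulation.
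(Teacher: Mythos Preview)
The paper does not give a proof of this theorem: it is stated as background material and attributed to Rosen's textbook \cite[Th.~12.10]{R02}. There is therefore nothing in the paper to compare your argument against.

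That said, your sketch is essentially the standard proof one finds in Rosen or in Conrad's survey \cite{Conrad-Carlitz}: the Carlitz--Fermat congruence $[Q](X) \equiv X^{q^{\deg Q}} \pmod{Q}$ pins down the Frobenius at an unramified $Q$ as $\sigma_Q$ under the isomorphism of Theorem~\ref{thm:Carlitz_Galois}, giving the inertia degree as the multiplicative order of $Q$ modulo $M$; and the Eisenstein shape of $[Q](X)/X$ (and its iterates) handles the totally ramified prime-power case. One small point worth flagging: as literally stated in the paper, the first bullet (``$Q \mid M$ implies $Q\OO_M$ is totally ramified'') is only correct when $M$ is a power of $Q$. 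Your compositum argument correctly shows that for general $M = Q^e N$ with $\gcd(Q,N)=1$, the ramification index of $Q$ in $K_M$ equals $[K_{Q^e}:\Fq(T)]$, which is strictly less than $[K_M:\Fq(T)]$ when $N$ is nonconstant --- so your proof does not (and should not) establish total ramification in $K_M$ in that generality. This is an imprecision in the paper's paraphrase of Rosen rather than a gap in your argument.
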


Note that in Ring--\LWE{}, the prime modulus $q$ is often chosen such
that $q\equiv 1 \mod m$ so that it splits completely in the cyclotomic
extension $\QQ(\zeta_{m})$.

\begin{example}
  In the previous example, $T+1 \equiv 1 \mod T$ and therefore $(T+1)$
  splits completely in $\OO_{T}$. Indeed,
  \[
    \fract{\OO_{T}}{((T+1)\OO_{T})} = \fract{\Fq[X]}{(X^{q-1}-1)} = \prod_{\alpha \in \Fq^{\times}} \fract{\Fq[X]}{(X-\alpha)}
  \]
  is a product of $q-1$ copies of $\Fq$.
\end{example}

It is crucial for the applications that the constant field of $K$
be not too big because, in the search--to--decision reduction, it
determines the search space in Step $3$ of the proof of
Theorem~\ref{thm:main}. The following non-trivial theorem gives the
field of constants of Carlitz extensions.

\begin{theorem}[{\cite[Cor.~of ~Th.~12.14]{R02}}]\label{thm:Carlitz_constant} Let $M\in\Fq[T]$,
  $M\neq 0$. Then $\Fq$ is the full {\em constant field} of $K_{M}$.
\end{theorem}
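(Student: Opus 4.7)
I would argue by contradiction, exploiting the tension between the total ramification of primes dividing $M$ (Theorem \ref{thm:carlitz_splitting}) and the fact that a constants extension is everywhere unramified. Set $\ell \eqdef K_{M} \cap \overline{\F{}}_q$, the full constant field of $K_M$; it is a finite extension of $\Fq$, say $\ell = \Fqm$, and I aim to show $m = 1$. Since $K_M \supseteq \ell(T)$, we obtain a tower
\[
  \Fq(T) \subset \Fqm(T) \subset K_M.
\]

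The case $K_M = \Fq(T)$ is trivial, so I may assume $[K_M:\Fq(T)]\ge 2$. Let $Q\in\Fq[T]$ be any monic irreducible factor of $M$ and set $\gothp \eqdef Q\Fq[T]$. By Theorem \ref{thm:carlitz_splitting}, $\gothp$ is totally ramified in $\OO_M$: there is a unique prime $\gothQ$ of $\OO_M$ lying above $\gothp$, with ramification index $e(\gothQ|\gothp) = [K_M:\Fq(T)]$. Let $\gothQ'$ denote the unique prime of $\Fqm[T]$ below $\gothQ$ and above $\gothp$. Multiplicativity of ramification indices in the tower gives
\[
  [K_M:\Fq(T)] \;=\; e(\gothQ|\gothp) \;=\; e(\gothQ|\gothQ')\,\cdot\,e(\gothQ'|\gothp),
\]
while the general inequality $e(\gothQ|\gothQ')\le [K_M:\Fqm(T)] = [K_M:\Fq(T)]/m$ forces $e(\gothQ'|\gothp)\ge m$.

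To conclude, it suffices to establish that $e(\gothQ'|\gothp) = 1$, i.e.\ that the constants extension $\Fqm(T)/\Fq(T)$ is unramified at $\gothp$; this is the one ingredient not contained in the excerpt and will be the main obstacle. It can either be invoked as a standard fact (see \cite[Ch.~3]{S09}), or checked directly as follows: since finite fields are perfect, $\Fqm/\Fq$ is separable, hence
\[
  \fract{\Fqm[T]}{Q\Fqm[T]} \;\simeq\; \Fqm \otimes_{\Fq} \fract{\Fq[T]}{(Q)}
\]
is a finite étale $\Fqm$-algebra, so it is a product of fields. Equivalently, $Q\Fqm[T]$ factors as a product of \emph{distinct} prime ideals, which is precisely the absence of ramification. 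Plugging $e(\gothQ'|\gothp)=1$ into the displayed equation yields $m\le 1$, hence $m = 1$, and the theorem follows.
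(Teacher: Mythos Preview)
Your argument is correct. The paper, however, does not supply a proof of this statement at all: it is quoted as \cite[Cor.~of~Th.~12.14]{R02} and used as a black box. So there is no ``paper's own proof'' to compare against beyond the reference to Rosen.

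That said, your approach is essentially the standard one and is close in spirit to how such results are typically established (including in Rosen): exhibit a prime of $\Fq(T)$ that is totally ramified in $K_M$, and use that a nontrivial constant field extension $\Fqm(T)/\Fq(T)$ is everywhere unramified to derive a contradiction via multiplicativity of ramification indices in the tower $\Fq(T)\subset\Fqm(T)\subset K_M$. Two minor remarks. First, your reduction to the nontrivial case is sound: if $[K_M:\Fq(T)]\ge 2$ then $M$ cannot be a nonzero constant (since then $\Lambda_M=\{0\}$ and $K_M=\Fq(T)$), so a monic irreducible factor $Q\mid M$ exists and Theorem~\ref{thm:carlitz_splitting} applies. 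Second, there is no hidden circularity: the total ramification of primes dividing $M$ in Theorem~\ref{thm:carlitz_splitting} is obtained by an Eisenstein-type argument on the Carlitz polynomial and does not presuppose knowledge of the constant field, so you may legitimately invoke it here.
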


The similarities between Carlitz function fields and cyclotomic number
fields are summarized in Table \ref{fig:cyclotomic_carlitz_analogy}.

\begin{table}[!ht]
  \centering
      \[
      \begin{array}{|c|c|}
        \hline
        \QQ & \Fq(T) \\
        \ZZ & \Fq[T] \\
        \text{Prime numbers } q \in \ZZ & \text{Irreducible polynomials } Q\in\Fq[T]\\
            & \\
        \mu_{m} = \ideal{\zeta} \simeq \ZZ/m\ZZ \text{ (groups) }& \Lambda_{M} =\ideal{\lambda}\simeq \Fq[T]/(M) \text{ (modules) }\\
            & \\
        d \mid m \Leftrightarrow \mu_{d} \subset \mu_{m} \text{ (subgroups) } & D\mid M \Leftrightarrow \Lambda_{D} \subset \Lambda_{M} \text{ (submodules) }\\
            & \\
        a \equiv b \mod m \Rightarrow \zeta^{a} = \zeta^{b} & A \equiv B \mod M \Rightarrow [A](\lambda) = [B](\lambda)\\
& \\
        K = \QQ[\zeta] & K = \Fq(T)[\lambda] \\
        \OO_{K} = \ZZ[\zeta] & \OO_{K} = \Fq[T][\lambda] \\
            & \\
        \Gal(K/\QQ) \simeq (\ZZ/m\ZZ)^{\times} & \Gal(K/\Fq(T))\simeq (\Fq[T]/(M))^{\times}\\
            & \\
        \textbf{Cyclotomic} & \textbf{Carlitz}\\
        \hline
      \end{array}
    \]
    \caption{\label{fig:cyclotomic_carlitz_analogy} Analogies between
      cyclotomic and Carlitz}
\end{table}

 \section{Applications}\label{sec:applications}
In the current section, we present two applications of our proof
techniques. It provides search to decision reductions to generic
problems whose hardness assumption has been used to assess the
security of some cryptographic designs. The first application concerns
Oblivious Linear Evaluation (OLE) which is a crucial primitive for
secure multi-party computation. The second one is an authentication
protocol called \lapin{}. Both designs rely on the hardness of
variants of the so-called Learning Parity with Noise (\LPN{}) problem.

\subsection{\LPN{} and its structured variants}
Let us start this subsection by the definitions of the distribution
that is involved in the \LPN{} problem.

\begin{definition}[Learning Parity with Noise (\LPN{}) distribution]
  Let $k$ be a positive integer, $\bfs\in \Fq^k$ be a uniformly
  distributed vector and $p \in [0, \frac{1}{2})$. A sample
  $(\vec{a},b)\in\Fq^{k} \times \Fq$ is distributed according to the
  \LPN{} distribution with secret $\vec{s}$ if
  \begin{itemize}[label=\textbullet]
    \item $\vec{a}$ is uniformly distributed over $\Fq^{k}$;
    \item $b \eqdef \langle \vec{a},\vec{s} \rangle + e$ where
          $\inner{ \cdot\ \! , \cdot }$ denotes the canonical inner
          product over $\Fq^k$ and $e$ is a $q$--ary Bernouilli random
          variable with parameter $p$, namely $\mathbb{P}(e=0) = 1-p$
          and $\mathbb{P}(e=a) = \frac{p}{q-1}$ for
          $a\in\Fq^{\times}$.
  \end{itemize}
	A sample drawn according to this distribution will be denoted
  $ (\bfa, \inner{ \bfa, \bfs } + e) \sample \DLPN{\bfs}{p}$.
\end{definition}

\begin{remark}\label{rem:q-ary Bernouilli}
  This definition is a generalization of the usual \LPN{} distribution
  defined over $\F2$. In this situation, the error distribution is a
  usual Bernouilli: $\Prob(e=0) = 1 - p$ and $\Prob(e=1) = p$.
\end{remark}

\iftoggle{llncs}{}{
\begin{remark}\label{rem:ntuple_LPN}
  Sometimes in the literature, the distribution is directly defined
  for $n$ samples, leading to $(\bfG, \vec{s}\cdot \bfG + \bfe)$ where
  $\bfG$ is drawn uniformly at random over the space
  $\Fq^{k \times n}$ of $k \times n$ matrices whose coefficients lie
  in $\Fq$ and $\bfe \eqdef (e_1, \dots, e_n)$ where the $e_i$'s are
  independent Bernouilli random variables with parameter $p$.
\end{remark}

The security of many cryptosystems in the literature rests on the LPN
assumption which informally asserts that it is hard to distinguish a
sample $(\bfa, \inner{ \bfa, \bfs } + e) \sample \DLPN{\bfs}{p}$ from
a sample $(\bfa, t)$ where both $\bfa$ and $t$ are drawn uniformly at random.

\begin{remark}\label{rem:LPN=decoding}
  Note that, according to Remark~\ref{rem:ntuple_LPN}, when
  considering a fixed number $n$ of samples, the LPN assumption is
  nothing but the decision version of the decoding problem, namely
  distinguishing noisy codewords of a random code from uniformly
  random vectors.
\end{remark}
}

Similarly to the \LWE{} problem, structured versions of \LPN{} have been
defined (\cite{HKLPP12,DP12,BCGIKS20}).

\begin{definition}[Ring--\LPN{} distribution]
  Fix a positive integer $r$, a public polynomial $f(X) \in~\Fq [X]$
  of degree $r$ and $\vec{s} \in \Fq[X]/(f(X))$ be a uniformly distributed
  polynomial. A sample $(\vec{a}, \vec{b})$ is distributed according to the
  \RLPN{} distribution with secret $\vec{s}$ if
\begin{itemize}[label=\textbullet]
	\item $\vec{a}$ is drawn
	uniformly at random over $\Fq[X]/(f(X))$;
  \item $\vec{b} \eqdef \vec{a}\vec{s}+\vec{e}$ where
    $\vec{e} \eqdef e_0+ e_1X + \cdots + e_{r - 1}X^{r - 1} \in
    \Fq[X]/(f(X))$ has coefficients $e_{i}$'s which are independent
    $q$--ary Bernouilli random variables with parameter $p$.
\end{itemize}
A sample drawn according to this distribution will be denoted
$ (\vec{a},\vec{a}\vec{s}+\vec{e}) \sample \DRLPN{\vec{s}}{p}$.
\end{definition}

Note that the map
\[
  \map{\fract{\Fq[X]}{(f(X))}}{\fract{\Fq[X]}{(f(X))}}{\vec{m}(X)}{\vec{a}(X)\vec{m}(X)
    \mod f(X)}
\]
can be represented in the canonical basis by an $r \times r$ matrix
$\bfA$. Using this point of view, one sample of \RLPN{} can be
regarded as $r$ specific samples of \LPN{}.

\iftoggle{llncs}{\ }{
\begin{definition}[Module--\LPN{} distribution]
  Fix positive integers $r$ and $d$, a public polynomial
  $f(X) \in \Fq [X]$ of degree $r$ and $\bfs$ be uniformly distributed
  over $(\Fq[X]/(f(X)))^{d}$. A sample $(\bfa, \bfb)$ is distributed
  according to the \MLPN{} distribution with secrets $\bfs$ if
\begin{itemize}
	\item $\bfa$ is drawn
	uniformly at random over $(\Fq[X]/(f(X)))^{d}$;
	\item $\vec{b} \eqdef \inner{\bfa, \bfs} + \vec{e} = \sum_{i=1}^{d}\vec{a}_{i}\vec{s}_{i}+\vec{e}$,
        where
        $\vec{e} \eqdef e_0+ e_1X + \cdots + e_{r - 1}X^{r - 1} \in \Fq[X]/(f(X))$
        has coefficients $e_{i}$'s which are independent Bernouilli
        random variables with parameter $p$.
\end{itemize}
A sample drawn according to this distribution will be denoted
$ (\bfa,\vec{b}) \sample \DMLPN{\vec{s}}{p}$
\end{definition}

In the above definitions, the noise
distribution is chosen independently on each coefficient of
$\vec{e}(X)\in\Fq[X]/(f(X))$. One can also consider the situation where the
coefficients of $\vec{e}(X)$ are chosen to form a vector of $\Fq^{r}$ of
fixed Hamming weight $t$. This point of view is closer to the usual
decoding problem, and is the one adopted in \cite{BCGIKS20}.

To conclude, note that the choice of the noise in the Ring-- and
Module--\LPN{} distribution is made relatively to the canonical basis
${(X^{i})}_{0\leq i \leq r-1}$ since it seems to be the most natural
one. However, one could have made another choice. This will be
discussed in the sequel.
}

\iftoggle{llncs}{}{
\subsection{Relation with decoding problems}
According to Remark~\ref{rem:LPN=decoding}, distinguishing $n$ samples
of an \LPN{} distribution $\DLPN{\vec{s}}{p}$ from the uniform
distribution is equivalent to distinguishing noisy codewords of a
known random code from uniformly random words. It can be regarded as
the decision version of the decoding problem for a code of fixed
dimension (here the length of the secret $\vec{s}$) and whose length
$n$ goes to infinity and hence whose rate goes to zero. Similarly,
distinguishing samples from a distribution $\DRLPN{\bfs}{p}$ and a
uniform one is equivalent to the decision version of the decoding
problem of structured codes whose basis is block-wise defined as
\[
  \begin{pmatrix}
    \bfA_1 & \cdots & \bfA_m
  \end{pmatrix}
\]
where, for any $i \in \llbracket 1,n\rrbracket$, $\bfA_i$ is the
matrix representation in the canonical basis of $\Fq[X]/(f(X))$ of the
multiplication by some random element $\bfa_i \in \Fq[X]/(f(X))$. In
particular, considering the case $f(X) = X^{\ell} - 1$, we recover,
the decoding problem for $\ell$--quasi--cyclic codes.
\newline
}

{\bf \noindent Search to decision.}
Here we present search to decision reductions in two different
settings corresponding to two choices of the modulus $f(X)$ in the
Ring--\LPN{} problem. Both have been used in the literature for
specific applications that are quickly recalled.
\newline

{\bf \noindent A $q$--ary version of \textup{Ring--LPN} with a totally
  split modulus $f$.}
In \cite{BCGIKS20}, the authors introduce
Ring--\LPN{} over
the finite field $\Fq$ and with a modulus $f$ which is totally split,
{\ie} has distinct roots, all living in the ground field $\Fq$.
\newline

{\bf \noindent Motivation: Oblivious Linear Evaluations for secure
  Multi Party Computation (MPC).}
A crucial objective in modern secure MPC is to be able to generate
efficiently many random pairs $(u, r), (v, s)$ such that $u, v, r$ are
uniformly distributed over $\Fq$, with the correlation $uv = r+s$.

In \cite{BCGIKS20}, the authors propose a construction of such pairs
$(\bfu, \bfr), (\bfv, \bfs)$ of elements in a ring $\mathcal{R}$,
where $\mathcal R = \Fq[X]/(f(X))$ such that $f$ is split with simple
roots in $\Fq$. Using the Chinese remainder Theorem, one deduces
$\deg f$ pairs $(u_i, r_i), (v_i, s_i)$ with
$u_i, v_i, r_i, s_i \in \Fq$. The pseudo-randomness of
$\vec{u},\vec{v}$ rests on the hardness of the Ring--\LPN{}
assumption.

\bigskip
{\bf \noindent Search to decision reduction in the \cite{BCGIKS20}-case.}
Consider the case of Ring--\LPN{} over $\mathcal R = \Fq[X]/(f(X))$,
where
\[
  f(X) \eqdef \prod_{a \in \Fq^{\times}} (X-a) = X^{q-1}-1.
\]
Let us re-introduce the Carlitz function field of
Examples~\ref{ex:LambdaT} and~\ref{ex:LambdaTbis}, namely
\[
  K_T = \fract{\Fq(T)[X]}{(X^{q-1} + T)}.
\]
According to Equation \eqref{eq:O_T/(T+1)} in
Example~\ref{ex:LambdaTbis}, we have
\[
  \fract{\OO_T}{(T+1)\OO_T} \simeq \fract{\Fq[X]}{(X^{q-1}-1)},
\]
which is precisely the ring we consider for the Ring \LPN{} version of
\cite{BCGIKS20}.  Therefore, instantiating our \FFDP{} problem with
this function field, modulus $T+1$, ideal $\gothP \eqdef (T+1)\OO_{K}$ and
applying Theorem~\ref{thm:main}, we directly obtain the following search to
decision reduction.

\begin{theorem}[Search to decision reduction for totally-split Ring--LPN{}]\label{thm:StD_LPN}
  Let $K_{T}$ be the Carlitz extension of $T$--torsion over $\Fq$, and
  denote by $\OO_{T}$ its ring of integers. Consider the ideal
  $\gothP\eqdef~(T+1)\OO_{K_{T}}$. Then $\gothP$ splits completely in
  $q-1$ factors $\gothP_{1}\dots\gothP_{q-1}$ and
  \[
    \OO_{K}/\gothP \simeq \prod_{i=1}^{q-1}\OO_{K}/\gothP_{i} \simeq \Fq \times \cdots \times \Fq.
  \]
  Let $\distrib$ denote the uniform distribution over polynomials in
  $\Fq[X]/(X^{q-1}-1)$ of fixed Hamming weight, or the $q$--ary
  Bernouilli distribution. Let $\bfs\in\Fq[X]/(X^{q-1}-1)$. Suppose
  that we have access to $\ffd{\bfs,\distrib}$ and that there exists a
  distinguisher between the uniform distribution over
  $\Fq[X]/(X^{q-1}-1)$ and $\ffd{\bfs, \distrib}$ with uniform secret
  and error distribution $\distrib$, running in time $t$ and having
  advantage $\eps$.

  Then there exists an algorithm that recovers $\bfs$ with overwhelming probability (in $q$) in time
  $$O\left( q^{5}\times \frac{1}{\varepsilon^{2}} \times t\right).$$
\end{theorem}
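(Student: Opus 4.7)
The plan is to view this theorem as a direct instantiation of the general search to decision reduction stated in Theorem~\ref{thm:main}, applied to the Carlitz extension $K_T = \Fq(T)[X]/(X^{q-1}+T)$ with modulus $Q(T) \eqdef T+1$. I must therefore (i) identify the relevant invariants $n$, $f$, $r$, $\deg(Q)$ appearing in Theorem~\ref{thm:main}, (ii) verify that the three hypotheses needed (Galois extension, unramified modulus, Galois-invariant error distribution) are satisfied here, and (iii) translate the statement of $\ffd{\bfs,\distrib}$ into $\DRLPN{\bfs}{p}$ via the ring isomorphism recalled in Example~\ref{ex:LambdaTbis}.

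For (i), Theorem~\ref{thm:Carlitz_Galois} gives $[K_T:\Fq(T)] = |(\Fq[T]/(T))^{\times}| = q-1$, so $n = q-1$. Since $T+1 \equiv 1 \pmod{T}$, Theorem~\ref{thm:carlitz_splitting} ensures that $(T+1)\OO_{K_T}$ is unramified and splits completely into $r = q-1$ prime factors $\gothP_1, \dots, \gothP_{q-1}$, each with residue field $\Fq$; in particular $f = 1$. Combined with $\deg(Q) = 1$, the complexity in Theorem~\ref{thm:main} becomes
\[
  O\!\left(\frac{(q-1)^4}{1^3}\cdot\frac{1}{\eps^2}\cdot q^{1\cdot 1}\cdot t\right) = O\!\left(q^{5}\cdot\frac{1}{\eps^2}\cdot t\right),
\]
matching the announced bound. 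The CRT decomposition $\OO_{K_T}/\gothP \simeq \prod_i \OO_{K_T}/\gothP_i \simeq \Fq^{q-1}$ is then the one stated.

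For (ii), the first two hypotheses (Galois, unramified) are immediate from Theorem~\ref{thm:Carlitz_Galois} and Theorem~\ref{thm:carlitz_splitting}. The only real verification is Galois invariance of $\distrib$. Under the isomorphism $\OO_{K_T}/\gothP \simeq \Fq[X]/(X^{q-1}-1)$, the Galois group $\Gal(K_T/\Fq(T)) \simeq \Fq^{\times}$ acts, according to Example~\ref{ex:LambdaTbis}, by $a\cdot f(X) = f(aX)$, i.e.\ by the $\Fq$-linear map $\sum c_i X^i \mapsto \sum c_i a^i X^i$. This map is a coordinate-wise rescaling by non-zero scalars, so it preserves the Hamming weight of each vector of coefficients. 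Consequently, the uniform distribution on weight-$t$ words is invariant; and the independent $q$-ary Bernoulli distribution is also invariant because for $a \in \Fq^\times$ and $e_i$ a $q$-ary Bernoulli variable with parameter $p$, the variable $a^i e_i$ has the same distribution (it equals $0$ iff $e_i = 0$, and takes each non-zero value in $\Fq$ with probability $p/(q-1)$).

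The proof then reduces to a one-line invocation of Theorem~\ref{thm:main}: the \FFDP{} problem on $(K_T,\gothP)$ is literally the Ring-\LPN{} problem on $\Fq[X]/(X^{q-1}-1)$ with the error distribution $\distrib$, and all hypotheses are met, so the search to decision reduction applies with the stated parameters. No step looks delicate once the instantiation is set up; the only point worth being careful about is the verification of Galois invariance for the fixed-weight distribution, since it requires noting that the Galois action is a coordinate rescaling (not a permutation of the monomial basis), but rescaling by units still preserves the support and hence the Hamming weight, so the argument goes through.
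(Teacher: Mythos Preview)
Your proposal is correct and follows essentially the same approach as the paper: both proofs reduce the theorem to a direct instantiation of Theorem~\ref{thm:main} with $K = K_T$, $Q = T+1$, $n = q-1$, $f = 1$, $\deg Q = 1$, and devote the main effort to checking that the Galois action $m(X) \mapsto m(bX)$ preserves the Hamming support and hence leaves $\distrib$ invariant. Your write-up is in fact slightly more explicit than the paper's in spelling out the complexity computation and in separately verifying invariance for both the fixed-weight and the $q$-ary Bernoulli noise.
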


\begin{proof}
The only thing that remains to be proved is that the error
distribution is Galois invariant.
According to Theorem~\ref{thm:Carlitz_Galois} and
Example~\ref{ex:LambdaTbis}, the Galois group of $K_T/\Fq(T)$ is
isomorphic to $(\Fq[T]/(T))^{\times} \simeq \Fq^{\times}$. Furthermore,
we proved that an element $b\in\Fq^{\times}$ acts on
$f(T, X)\in K_{T}$ by
\[
  b\cdot f(T, X) = f(T, [b](X)) = f(T, bX).
\]
\iftoggle{llncs}{}{
For this example it can actually be understood directly using {\em
  Kummer theory} (see \cite[Proposition 3.7.3]{S09}). Indeed, $\Fq$,
and hence $\Fq(T)$, contains the $(q-1)$--th roots of unity. Moreover,
$K_{T}$ is nothing but the extension of $\Fq(T)$ spanned by a
primitive $(q-1)$--th root of $-T$. Therefore, it is a {\em Kummer
  extension} and the action of the Galois group is characterized by
$X\mapsto \zeta\cdot X$ for every $(q-1)$--th root of unity $\zeta$.
But here, the set of $(q-1)$--th roots of unity is {\em precisely}
$\Fq^{\times}$.
}The Galois action on $K_T$ and $\OO_T$ induces an action of
$\Fq^{\times}$ on
\[
  \fract{\OO_T}{(T+1)\OO_T} \simeq \fract{\Fq[X]}{(X^{q-1}-1)}
\]
by
$b \cdot m(X) \eqdef m(b X)$. Note that, this operation has no
incidence on the Hamming weight of $m$: it actually {\em does not
  change its Hamming support}. Therefore, we easily see here that
Galois action keeps the noise distribution invariant.
\end{proof}

\begin{remark} Note that our search to decision reduction could have
  been performed here without introducing the function field and only
  considering the ring $\Fq[X]/(X^{q-1} - 1)$. Recall that the first
  ingredient of the reduction is to decompose this ring by the Chinese
  Remainder Theorem. Here it would give the product
  $\prod_{a \in \Fq^{\times}} \Fq[X]/(X-a)$. The final step of the
  reduction requires the introduction of a group action which induces
  a permutation of the factors in
  $\prod_{a \in \Fq^{\times}} \Fq[X]/(X-a)$. It is precisely what the
  group action $b \cdot m(X) = m(b X)$ does: it sends the
  factor $\Fq[X]/(X-a)$ onto $\Fq[X]/(X - b^{-1}a)$.  However,
  introducing this action on the level of $\Fq[X]/(X^{q-1}-1)$ does
  not look very natural. It turns out that the introduction of
  function fields permits to interpret this action in terms of a
  Galois one.
\end{remark}

\iftoggle{llncs}{
  \begin{remark}
    If we replace the Carlitz extension $K$ by some subfield of
    invariants under the action of a given subgroup of the Galois
    group, it is possible to extend the result to the case where
    $f(X) = \prod_{a \in H}(X-a)$ where $H$ is some subgroup of
    $\Fq^{\times}$. It is even possible to treat the case where the
    roots of $f$ form a coset of a given subgroup of $\Fq^{\times}$.
  \end{remark}
}
{
\noindent {\bf Case H is a strict subgroup of $\Fq^{\times}$.} Now
  assume the polynomial $f$ has the form
\[
  f(X) = \prod_{a\in H}(X-a)
\]
with $H$ being a strict subgroup of $\Fq^{\times}$. To instantiate our search to decision reduction we need to find a group
action that keeps the noise distribution invariant.

\begin{lemma}
  There exists a Galois function field $K$ with its ring of integers
  $\OO_{K}$ such that $H=~\Gal(K/\Fq(T))$ and
  $\OO_{K}/(T+1)\OO_{K} = \Fq[X]/(f(X))$. Moreover, the action of $H$
  keeps the Hamming support invariant.
\end{lemma}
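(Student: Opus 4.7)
The plan is to realize $K$ as a subfield of the Carlitz extension $K_T$ of Examples~\ref{ex:LambdaT} and~\ref{ex:LambdaTbis}. Write $h \eqdef |H|$, so that $h \mid q-1$ and $H$ is the unique subgroup of $\Fq^{\times}$ of order $h$, namely the $h$-th roots of unity. Let $N \subset \Fq^{\times}$ denote the unique subgroup of order $(q-1)/h$. Under the identification $\Gal(K_T/\Fq(T)) \simeq \Fq^{\times}$ of Theorem~\ref{thm:Carlitz_Galois}, I would take $K$ to be the fixed field $K_T^N$. Since $\Fq^{\times}$ is abelian, $N$ is normal, so $K/\Fq(T)$ is Galois with group $\Fq^{\times}/N$, which is canonically identified with $H$ via the surjective morphism $b \mapsto b^{(q-1)/h}$ with kernel $N$.

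To make this concrete, set $\mu \eqdef \lambda^{(q-1)/h}$, so that $\mu^h + T = 0$ and $\sigma_b(\mu) = b^{(q-1)/h}\mu$, which equals $\mu$ exactly when $b \in N$. Hence $K = \Fq(T)(\mu)$. The polynomial $Y^h + T$ is Eisenstein at $(T)$, hence irreducible, so $[K:\Fq(T)] = h$; moreover $\gcd(h, p) = 1$ since $p \nmid q - 1$, so the polynomial is separable, and as all $h$-th roots of unity already lie in $\Fq$, the extension $K/\Fq(T)$ is indeed Galois of degree $h$. The residue computation then proceeds, assuming $\OO_K = \Fq[T][\mu]$, via
\[
  \OO_K/(T+1)\OO_K \simeq \Fq[\mu]/(\mu^h - 1) \simeq \Fq[X]/(X^h - 1) = \Fq[X]/(f(X)),
\]
using that $f(X) = \prod_{a \in H}(X-a) = X^h - 1$.

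The main obstacle I foresee is justifying the equality $\OO_K = \Fq[T][\mu]$. My plan is to first argue $\OO_K = K \cap \OO_T$, using that $\OO_T = \Fq[T][\lambda]$ is the integral closure of $\Fq[T]$ in $K_T \supset K$, and then to expand an arbitrary element of $K \cap \OO_T$ in the $\Fq[T]$-basis $(\lambda^i)_{0 \le i \le q-2}$: the requirement of being fixed by all $\sigma_b$ with $b \in N$ forces every coefficient of $\lambda^i$ with $i$ not divisible by $(q-1)/h$ to vanish, so the element lies in $\Fq[T][\mu]$. Once this reduction is in place, the preservation of the Hamming support is immediate: under the identification $\OO_K/(T+1)\OO_K \simeq \Fq[X]/(X^h-1)$, the action of $\zeta \in H$ sends $X \mapsto \zeta X$, which multiplies each monomial by a nonzero scalar without altering its support.
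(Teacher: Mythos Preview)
Your approach is correct and follows the same overall strategy as the paper: take $K$ to be the fixed field $K_T^N$ for the unique subgroup $N \subset \Fq^{\times}$ of order $(q-1)/h$, so that $\Gal(K/\Fq(T)) \simeq \Fq^{\times}/N \simeq H$.

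The difference lies in how the isomorphism $\OO_K/(T+1)\OO_K \simeq \Fq[X]/(f(X))$ is obtained. The paper never computes $\OO_K$ explicitly; it simply observes that since $(T+1)$ splits completely in $\OO_T$, it also splits completely in every intermediate ring of integers, so $\OO_K/(T+1)\OO_K$ is a product of $h = [K:\Fq(T)]$ copies of $\Fq$, hence isomorphic to $\Fq[X]/(f(X))$. Your argument instead identifies $K = \Fq(T)(\mu)$ with $\mu^h + T = 0$, proves $\OO_K = \Fq[T][\mu]$ by computing the $N$-invariants of $\OO_T = \Fq[T][\lambda]$ in the basis $(\lambda^i)$, and then reduces modulo $T+1$ directly. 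The paper's route is shorter and avoids the ring-of-integers computation entirely; your route is more explicit and has the advantage that the Galois action on the quotient is visibly $X \mapsto \zeta X$, making the Hamming-support invariance immediate rather than something inherited ``by the same argument'' from the $H = \Fq^{\times}$ case.
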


\begin{proof}
  Consider again the Carlitz extension $K_{T}$. It has a {\em cyclic}
  Galois group $G\simeq \Fq^{\times}$ of cardinality $q-1$. Let
  $h \eqdef \#H$. It divides $q-1$. Since $G$ is cyclic, it has a {\em
    unique} subgroup $N$ of cardinality $\frac{q-1}{h}$. Let
  $L \eqdef K_{T}^{N}$ be the fixed field of $N$. Since $(T+1)$ splits
  completely in $\OO_{T}$, it also splits completely in any
  intermediate field. In particular, it splits completely in
  $\OO_{L}\eqdef \OO_{T}^{H}$ the ring of integers of $L$, we have
  $$
  \OO_{L}/(T+1)\OO_{L} \simeq \Fq[X]/(f(X)).
  $$
  Now, since $G$ is {\em
    abelian}, $N$ is {\em normal} in $G$. In particular, $L/\Fq(T)$ is
  a Galois extension, with Galois group $G/N \simeq H$. By the same
  argument as in the previous paragraph, the action of $H$ on
  $\Fq[X]/(f(X))$ only permutes the factors, but keeps the {\em
    supports} invariant. In particular, the noise distribution is not
  moved under the action of $H$.
\end{proof}

This lemma immediately implies a search to decision reduction analogue
to theorem \eqref{thm:StD_LPN}.

\begin{remark}
  When the roots of $f(X)$ do not form a subgroup of $G$, but a {\em
    coset} $bH$ instead, \ie $f(X) = \prod_{\alpha\in bH}(X-\alpha)$,
  then it suffices to perform a translation by $b$ prior:
  $X\mapsto bX$ also keeps all the distributions invariant (including
  the uniform), and $\Fq[X]/(f(X))$ is mapped onto $\Fq[X]/(g(X))$
  where $g(X) \eqdef \prod_{\alpha\in H}(X - \alpha)$, which yields
  the result by seeing the action of $H$ as arising from a Galois
  action.
\end{remark}
}

{\bf \noindent Ring--\LPN{} with a modulus $f$ splitting in irreducible
  polynomials of the same degree.}
Another cryptographic design whose security rests on the Ring--\LPN{}
assumption is an authentication protocol named \lapin{}
\cite{HKLPP12}.  In the conclusion of their article, the authors
mention that
\begin{center}
  \begin{minipage}{0.9\textwidth}
  ``{\em it would be particularly interesting to find out
  whether there exists an equivalence between the decision and the
  search versions of the problem similar to the reductions that exist
  for \LPN{} and Ring--\LWE{}}''.
  \end{minipage}
\end{center}
For this
protocol, the problem is instantiated with the binary field $\F{2}$
and with a modulus polynomial $f$ which splits as a product of $m$
distinct irreducible polynomials
\[
  f(X) = f_1(X) \cdots f_m(X).
\]
In this setting and using our techniques, we can provide a search to
decision reduction when the $f_i$'s have all the same degree $d$.
Furthermore, for the reduction to run in polynomial time, we need to
have $d = O(\log (\deg f))$. In this setting, the Chinese Reminder
Theorem entails that
\[
  \fract{\F{2}[X]}{(f(X))} \simeq \prod_{i=1}^m
  \fract{\F{2}[X]}{(f_i(X))},
\]
and the right--hand side is a product of $m$ copies of $\F{2^d}$.
Such a product can be realised as follows. Consider a function field
$K$ which is a Galois extension of $\F{2}(T)$ with Galois group $G$
and denote by $\OO_K$ the integral closure of $\F{2}[T]$ in
$K$. Suppose that the ideal $(T)$ of $\F{2}[T]$ is unramified in $\OO_K$
with inertia degree $d$. Then $T\OO_K$ splits into a product of prime
ideals:
\[
T\OO_K = \mathfrak{P}_1 \cdots \mathfrak{P}_m
\qquad
\text{and}
\qquad
  \fract{\OO_K}{T\OO_K} \simeq \prod_{i=1}^m \fract{\OO_K}{\mathfrak{P}_i},
\]
where, here again, the right--hand side is a product of $m$ copies of
$\F{2^d}$.

Next, the idea is now to apply Theorem~\ref{thm:main} in this setting.
However, there is here a difficulty since for our search to decision
reduction to hold, the noise should arise from a Galois invariant
distribution. Thus, if we want the noise distribution to be Galois
invariant we need to have a Galois invariant $\F{2}$--basis of the
algebra $\OO_K/T\OO_K$. The first question should be whether such a
basis exists. The existence of such a basis can be deduced from deep
results of number theory due to Noether \cite{Noether32,C94} and
asserting the existence of local normal integral bases at non ramified
places. Here we give a pedestrian proof resting only on basic facts of
number theory. Since this result also holds for larger finite fields,
from now on, the underlying field is not supposed to be $\F2$ anymore.

\begin{proposition}\label{prop:base_normale}
  Let $K/\Fq(T)$ be a finite Galois extension of Galois group $G$
  and $\OO_K$ be the integral closure of $\Fq[T]$ in $K$. Let
  $Q \in \F{q}[T]$ be an irreducible polynomial such that the
  corresponding prime ideal is unramified and has inertia degree $d$.
  Denote by $\mathfrak{P_1} \cdots \mathfrak{P}_m$ the decomposition
  of the ideal $Q\OO_K$. Then, $G$ acts on the finite dimensional
  algebra $\OO_K/Q\OO_K$ and there exists $\vec{x} \in \OO_K/Q\OO_K$
  such that $(\sigma (\vec{x}))_{\sigma \in G}$ is an $\Fq$--basis of
  $\OO_K/Q\OO_K$.
\end{proposition}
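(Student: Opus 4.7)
The plan is to reduce the existence of a Galois-invariant basis of $\OO_K/Q\OO_K$ to the classical normal basis theorem for the finite field extension $\F{q^d}/\Fq$, combining it with the transitive action of $G$ on the set of primes above $Q$ and the identification of each decomposition group with $\Gal(\F{q^d}/\Fq)$.

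First, I would use the Chinese Remainder Theorem to identify $\OO_K/Q\OO_K$ with $\prod_{i=1}^{m} \OO_K/\mathfrak{P}_i$, each factor being isomorphic to $\F{q^d}$ since $Q$ is unramified with inertia degree $d$. Because $K/\Fq(T)$ is Galois, the group $G$ acts transitively on $\{\mathfrak{P}_1,\dots,\mathfrak{P}_m\}$, so fixing $\mathfrak{P}_1$ I can choose coset representatives $\tau_1 = \mathrm{id}, \tau_2, \dots, \tau_m$ of $G/D_1$ with $\tau_i(\mathfrak{P}_1) = \mathfrak{P}_i$, where $D_1 = D_{\mathfrak{P}_1/\mathfrak{p}}$ is the decomposition group. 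By the isomorphism recalled in~\eqref{eq:decomp_isom}, $D_1$ acts on $\OO_K/\mathfrak{P}_1 \simeq \F{q^d}$ as $\Gal(\F{q^d}/\Fq)$, i.e., via powers of the Frobenius.

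Next, I would apply the classical normal basis theorem for $\F{q^d}/\Fq$ to produce $y \in \OO_K/\mathfrak{P}_1$ such that $\{y, y^q, \dots, y^{q^{d-1}}\} = \{\delta(y) : \delta \in D_1\}$ is an $\Fq$-basis of $\OO_K/\mathfrak{P}_1$. Using CRT, I lift $y$ to an element $\vec{x} \in \OO_K/Q\OO_K$ whose component at $\mathfrak{P}_1$ equals $y$ and whose components at the other $\mathfrak{P}_j$ vanish.

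The key computation is then to trace the action of an arbitrary $\sigma \in G$ on $\vec{x}$ under the CRT identification. Writing $\sigma = \tau_i \delta$ with $\delta \in D_1$, one checks that $\sigma(\vec{x})$ is supported only at coordinate $i$ (because $\sigma$ sends the zero-components at $\mathfrak{P}_j$, $j\neq 1$, to zero-components at the primes $\sigma(\mathfrak{P}_j) = \mathfrak{P}_k$ with $k \neq i$), and that its value at $\mathfrak{P}_i$ is the image of $\delta(y)$ under the isomorphism $\OO_K/\mathfrak{P}_1 \xrightarrow{\tau_i} \OO_K/\mathfrak{P}_i$ induced by $\tau_i$. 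As $\delta$ runs through $D_1$, the elements $\delta(y)$ form an $\Fq$-basis of $\OO_K/\mathfrak{P}_1$, hence $\tau_i(\delta(y))$ form an $\Fq$-basis of $\OO_K/\mathfrak{P}_i$. Letting $i$ vary, the $md = n$ elements $\{\sigma(\vec{x})\}_{\sigma \in G}$ form an $\Fq$-basis of $\prod_i \OO_K/\mathfrak{P}_i \simeq \OO_K/Q\OO_K$.

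The main obstacle is bookkeeping the interaction between two different levels of the Galois action: the permutation action of coset representatives $\tau_i$ on the primes, and the residual Frobenius action of $D_1$ on a single residue field. Once the support pattern of $\sigma(\vec{x})$ is correctly identified via the transitivity and the definition of $D_1$, the conclusion follows directly from the normal basis theorem applied to $\F{q^d}/\Fq$.
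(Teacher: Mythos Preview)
Your proof is correct and follows essentially the same approach as the paper: both apply the normal basis theorem to $\OO_K/\mathfrak{P}_1 \simeq \F{q^d}$, lift a generator to $(y,0,\dots,0)$ via the CRT, and show that its $G$--orbit is a basis by combining the transitive action on primes with the identification of the decomposition group with $\Gal(\F{q^d}/\Fq)$. The only difference is cosmetic: the paper verifies the basis property by contradiction (assuming the span is proper in some factor and deriving a dependence relation that contradicts the normal basis choice), whereas you give the direct coset-by-coset computation of $\sigma(\vec{x})$.
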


\begin{proof}
  Consider the decomposition group $D_{\mathfrak{P}_1/Q}$.  As
  explained Section~\ref{sec:prereqFF} and in particular in Equation
  (\ref{eq:decomp_isom}), since $Q\OO_K$ is unramified, this
  decomposition group is isomorphic to
$\Gal (\OO_K/Q\OO_K,\Fq) = \Gal(\F{q^d},\Fq)$.  This entails in
  particular that $\# D_P = d$.

  According to the Chinese Remainder Theorem,
  \[
    \fract{\OO_K}{Q\OO_K} \simeq \fract{\OO_K}{\mathfrak{P}_1} \times \cdots
    \times \fract{\OO_K}{\mathfrak{P}_m}.
  \]
  Next, from the
  Normal basis Theorem (see for instance \cite[Thm.~2.35]{LN97}), there
  exists $\vec{a} \in {\OO_K}/{\mathfrak{P}_1}$ such that
  $(\sigma(\vec{a}))_{\sigma \in D_{\mathfrak{P}_1/Q}}$ is an $\Fq$--basis of
  ${\OO_K}/{\mathfrak{P}_1}$. Now, let
  \[
    \vec{b} \eqdef (\vec{a},0,\dots,0) \in \prod_{i=1}^m \OO_K/\mathfrak{P_i}
    \simeq \OO_K / Q\OO_K.
  \]
  We claim that $(\sigma(\vec{b}))_{\sigma \in G}$ is an $\Fq$--basis of
  $\OO_K/Q\OO_K$.
  Indeed, denote by $V$ the $\Fq$--span of
  $\{\sigma(\vec{b}) ~|~ \sigma \in G\}$ and suppose that $V$ is a proper
  subspace of ${\OO_K}/{Q \OO_K}$. Then, there exists
  $i \in \llbracket 1,m \rrbracket$ such that
  \[
    V \cap  \fract{\OO_K}{\mathfrak{P}_i}
    \varsubsetneq \fract{\OO_K}{\mathfrak{P}_i},
  \]
  where we denote by ${\OO_K}/{\mathfrak{P}_i}$ the subspace
  $ \{0\} \times \cdots \times \{0\} \times
  {\OO_K}/{\mathfrak{P}_i } \times \{0\} \times \cdots \times
  \{0\}$ of $\prod_i \OO_K/\mathfrak{P}_i$.

  Since $G$ acts transitively on the $\mathfrak{P}_i$'s, there exists
  $\sigma_0 \in G$ such that
  $\sigma_0(\mathfrak{P}_1) = \mathfrak{P}_i$.  Then,
  $\sigma_0(\vec{b}) \in V \cap \OO_K/\mathfrak{P}_i$ and so does
  $\sigma \sigma_0(\vec{b})$ for any $\sigma \in D_{\mathfrak{P}_i/P}$. Since
  $V \cap {\OO_K}/{\mathfrak{P}_i} \varsubsetneq
  {\OO_K}/{\mathfrak{P}_i}$, then $\dim_{\Fq}V < d$ while
  $\# D_{\mathfrak{P}_i/P} = d$. Hence, there exist nonzero elements
  ${(\lambda_\sigma)}_{\sigma \in D_{\mathfrak{P}_i/P}} \in \Fq^d$ such that
  \begin{equation}\label{eq:Galois_relation}
    \sum_{\sigma \in D_{\mathfrak{P}_i/P}} \lambda_{\sigma} \sigma \sigma_0(\vec{b}) = 0.
  \end{equation}
  Applying $\sigma_0^{-1}$ to (\ref{eq:Galois_relation}), we get
  \[
    \sum_{\sigma \in D_{\mathfrak{P}_i/P}} \lambda_{\sigma}
    \sigma_0^{-1} \sigma \sigma_0(\vec{b}) =  0.
  \]
  As mentioned in Section~\ref{sec:prereqFF}, we
  have $\sigma_0^{-1}D_{\mathfrak{P}_i/Q}\sigma_0 = D_{\mathfrak{P}_1/Q}$
  and we deduce that the above sum is
  in ${\OO_K}/{\mathfrak{P}_1}$ and, since $\vec{a}$ is a generator of a normal
  basis of $\Fq$, we deduce that the $\lambda_{\sigma}$'s are all zero.  A
  contradiction.
\end{proof}

The previous proposition asserts the existence of a {\em normal}
$\Fq$--basis of the space ${\OO_K}/{Q\OO_K}$, \ie a Galois invariant
basis.  For any such basis, ${(\vec{b}_{\sigma})}_{\sigma \in G}$ one can
define a Galois noise distribution by sampling linear combinations of
elements of this basis whose coefficients are independent Bernouilli
random variables. Our Ring--\LPN{} distribution is hence defined as
pairs $(\bfa,\bfb) \in {\OO_K}/{Q\OO_K} \times {\OO_K }/{Q\OO_K}$ such
that $\bfa$ is drawn uniformly at random and $\bfb = \bfa\bfs+\bfe$
where $\bfe$ is a noise term drawn from the previously described
distribution.

\begin{definition}[Galois modulus]
  Let $r$ and $d$ be positive integers. A polynomial $f(X)\in\Fq[X]$
  of degree $r$ is called a Galois modulus of inertia $d$ if there
  exists a Galois function field $K/\Fq(T)$ and a polynomial
  $Q(T)\in\Fq[T]$ of degree one such that $\Fq[X]/(f(X)) \simeq \OO_{K}/Q\OO_{K}$
  and the ideal $Q\OO_{K}$ has inertia degree $d$ and does not ramify.
\end{definition}

  This definition entails that for a polynomial $f(X)\in\Fq[X]$ to be
  a Galois modulus, it needs to factorize in $\Fq[X]$ as a product of
  distinct irreducible polynomials of same degree $d$.

  Carlitz extensions permit to easily exhibit many Galois moduli of
given inertia $d$.  Indeed, let $M(T) \in \Fq[T]$ be any divisor of
$T^d-1$ which vanishes at least at one primitive $d$--th root of
unity.  Set
\[
  r \eqdef
  \frac{\#{\left(\fract{\Fq[X]}{(M(X))}\right)}^{\times}}{d}\cdot
\]
Then, any polynomial $f(X) \in \Fq[X]$ which is a product of $r$
distinct irreducible polynomials of degree $d$ is a Galois modulus.
Indeed, $\Fq[X]/(f(X))$ is isomorphic to a product of $r$ copies of
$\F{2^d}$ and, since the multiplicative order of $T$ modulo $M(T)$ is
$d$, from Theorem~\ref{thm:carlitz_splitting} so does $\OO_M/T\OO_M$.

\begin{example}\label{ex:Galois_modulus}
  The polynomial $f(X) \eqdef X^{63}+X^{7}+1 \in \F2[X]$ is a {\em
    Galois} modulus of inertia $9$. Indeed, let
  $M(T) \eqdef T^{6}+T^{3}+1$ and consider $K_{M}$ the Carlitz
  extension of $M$--torsion. Denote by $\OO_{M}$ the integral closure
  of $\F2[T]$ in $\OO_{M}$. Then $T^{9} \equiv 1 \mod M$ and $9$ is
  the smallest integer that has this property. By Theorem
  \ref{thm:carlitz_splitting}, the ideal $T\OO_{M}$ splits into $7$
  ideals $\gothP_{1},\dots,\gothP_{7}$ and has inertia $9$. Hence,
  $\OO_{M}/(T\OO_{M})\simeq \Fq[X]/(f(X)).$
\end{example}

\begin{remark} The polynomial $f(X)$ of Example
  \ref{ex:Galois_modulus} is also {\em lightness-preserving} in the
  sense of \cite[Def 2.22]{DP12} which can be used to instantiate
  Ring-LPN.
\end{remark}

We are now ready to define a new noise distribution which is Galois
invariant for Ring--\LPN{}.  We propose to consider it in \lapin{} as
it enables to apply our search to decision reduction. In the following
definition, $\code{B}$ denotes a normal basis whose existence is
ensured by Proposition~\ref{prop:base_normale}.  Note that $\code B$
need not be exactly the normal basis constructed in the proof of
Proposition~\ref{prop:base_normale}. This is discussed further, after
the statement of Theorem~\ref{thm:stdpenr}.

\begin{definition}[Normal Ring--\LPN{} distribution] Let $r, d$ be
  positive integers, $p\in [0, \frac{1}{2})$ and let
  $f(X)\in\Fq[X]$ be a {\em Galois} modulus of degree $r$ with inertia
  $d$. Denote by
$\code{B}\eqdef (\sigma(\bfc)(X))_{\sigma\in
    G_{f}}$ the normal basis of $\Fq[X]/(f(X))$ where
  $G_{f}$ is the Galois group of the related function field.

  A sample $(\bfa, \bfb)$ is distributed according to the {\em Normal}
  \RLPN{} distribution relatively to basis $\code{B}$, with secret
  $\bfs$ if
  \begin{itemize}[label=\textbullet]
    \item $\bfa$ is drawn uniformly at random over $\Fq[X]/(f(X))$;
    \item $\bfb \eqdef \bfa\bfs + \bfe$, where
          $\bfe(X) \eqdef \sum_{\sigma\in G_{f}}e_{\sigma}\sigma(\bfc)(X) \in \Fq[X]/(f(X))$
          has coefficients $e_{i}$'s which are independent $q$--ary
          Bernouilli random variables with parameter $p$.
  \end{itemize}

\end{definition}

\begin{theorem}\label{thm:stdpenr}
	The decision Ring--\LPN{} is equivalent to its search version for the normal Ring--\LPN{} distribution.
\end{theorem}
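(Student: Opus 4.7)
The plan is to realise the normal Ring--LPN{} problem as an instance of \FFDP{} and deduce the equivalence from Theorem~\ref{thm:main}. First I would unpack the definition of Galois modulus: it provides a Galois function field $K/\Fq(T)$ and a degree--one polynomial $Q(T) \in \Fq[T]$ such that $\Fq[X]/(f(X)) \simeq \OO_K/Q\OO_K$ with $\gothP \eqdef Q\OO_K$ unramified of inertia degree $d$. Under this isomorphism, the normal \RLPN{} distribution with secret $\bfs$ corresponds exactly to $\ffd{\bfs,\psi}$, where $\psi$ is the distribution on $\OO_K/\gothP$ whose coordinates in the normal basis $\code{B} = (\sigma(\bfc))_{\sigma \in G_f}$ are iid $q$--ary Bernouilli variables of parameter $p$. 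In particular a distinguisher (resp. solver) against the normal Ring--LPN{} distribution is nothing but a distinguisher (resp. solver) against $\ffd{\bfs,\psi}$.

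The key technical step will be to verify that $\psi$ is invariant under $G_f \eqdef \Gal(K/\Fq(T))$. Given $\tau \in G_f$ and $\bfe = \sum_{\sigma \in G_f} e_\sigma \sigma(\bfc)$ sampled from $\psi$, applying $\tau$ simply permutes the basis vectors:
\[
  \tau(\bfe) = \sum_{\sigma \in G_f} e_\sigma\, (\tau\sigma)(\bfc)
             = \sum_{\sigma' \in G_f} e_{\tau^{-1}\sigma'}\, \sigma'(\bfc).
\]
Since the coefficients $(e_\sigma)_{\sigma \in G_f}$ are iid, the reindexing $\sigma \mapsto \tau^{-1}\sigma$ does not alter their joint law, so $\tau(\bfe)$ is distributed according to $\psi$. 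Observe that this argument depends only on the normality of $\code{B}$---namely, that $G_f$ permutes it as a set---so it applies to \emph{any} normal basis of $\OO_K/\gothP$ and not only to the one exhibited in the proof of Proposition~\ref{prop:base_normale}.

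With all three hypotheses of Theorem~\ref{thm:main} in place---$K/\Fq(T)$ Galois, $\gothP$ unramified, and $\psi$ invariant under $G_f$---the non--trivial direction (decision implies search) follows directly from that theorem. For the converse direction, a search oracle for $\ffd{\bfs,\psi}$ yields a distinguisher in the standard way: recover a candidate secret $\bfs'$ from polynomially many samples, then check on a fresh batch whether the residues $\bfb - \bfa \bfs'$ have biased coordinates in $\code{B}$; an honest \FFDP{} oracle passes the test with overwhelming probability while a uniform one does not. The main conceptual obstacle is therefore not in the reduction itself but in the preliminary change of basis: the noise distribution of Ring--LPN{} written in the monomial basis ${(X^i)}_{0 \leq i \leq r-1}$ is \emph{not} Galois invariant, whereas the normal basis variant is---this is exactly what unlocks the machinery of Theorem~\ref{thm:main}, mirroring in the function field setting the role played by the Minkowski embedding in the Ring--\LWE{} reduction of \cite{LPR10}.
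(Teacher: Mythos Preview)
Your proposal is correct and follows exactly the approach the paper sets up: the definition of Galois modulus and of the normal Ring--\LPN{} distribution are tailored precisely so that Theorem~\ref{thm:main} applies, and your verification that the noise in a normal basis is Galois invariant (because the Galois action merely permutes iid coordinates) is the one ingredient the paper leaves implicit. The paper does not spell out a separate proof of Theorem~\ref{thm:stdpenr}, so your write-up is in fact more detailed than the original, but the strategy is identical.
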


  Let us discuss further the choice of the noise distribution and
  hence that of a Galois-invariant basis.  In \cite{HKLPP12}, the
  authors discuss the case of Ring--\LPN{} when the modulus $f$ splits
  and mention that in this situation, the Ring--\LPN{} problem reduces
  to a smaller one by projecting the samples onto a factor
  ${\Fq[X]}/{(f_i(X))}$ of the algebra
  ${\Fq[X]}/{(f(X))}$. The projection onto such a factor,
  reduces the size of the inputs but increases the rate of the noise.

  It should be emphasized that the Galois invariant basis constructed
  in the proof of Proposition~\ref{prop:base_normale} yields a noise
  which is partially cancelled when applying the projection
  ${\OO_K}/{Q\OO_K} \rightarrow {\OO_K}/{\mathfrak{P}_i}$,
  hence, this choice of normal basis might be inaccurate. On
  the other hand, Proposition~\ref{prop:base_normale} is only an
  existence result and it turns out actually that a random element of
  ${\OO_K}/{Q\OO_K}$ generates a normal basis with a high
  probability.  Indeed, the existence of such a normal basis can be
  reformulated as ${\OO_K}/{Q\OO_K}$ is a free $\Fq[G]$--module
  of rank $1$ and a generator $\bfa \in {\OO_K}/{Q\OO_K}$ is an
  $\Fq[G]$--basis of ${\OO_K}/{Q\OO_K}$. Now, any other element
  of $\Fq[G]^{\times} \bfa$ is also a generator of a normal basis.
  Consequently, the probability that a uniformly random element of
  ${\OO_K}/{Q\OO_K}$ is a generator of a normal basis is
  \[
    \frac{\# \Fq[G]^\times}{\# \Fq[G]}\cdot
  \]
  If for instance, $G$ is cyclic of order $N$ prime to $q$. Then
  $X^N-1$ splits into a product of distinct irreducible factors
  $u_1\cdots u_r$ and $\Fq[G] \simeq {\Fq[X]}/{(X^N-1)} \simeq \prod_i
  {\Fq[X]}/{(u_i(X))}$. In this context, the probability that a uniformly
  random element of ${\OO_K}/{Q\OO_K}$ generates a normal basis is
  \[
    \frac{\prod_{i=1}^r (q^{\deg u_i}-1)}{q^N}\cdot
  \]

 \section*{Conclusion}
We introduced a new formalism to study generic problems useful in
cryptography based on structured codes. This formalism rests on the
introduction of function fields as counterparts of the number fields
appearing in cryptography based on structured lattices. Thanks to this
new point of view, we succeeded in producing the first search to
decision reduction in the spirit of Lyubashevsky, Peikert and Regev's
one for Ring-\LWE{}. We emphasize that such reductions were completely
absent in cryptography based on structured codes and we expect them to
be a first step towards further search to decision reductions.

If one puts into perspective our current assessment with lattice-based
cryptography, \cite{LPR10} focuses on cyclotomic number fields, and
defined the error distribution to be a Gaussian over $\RR^{n}$ through
the Minkowski embedding. Furthermore, the modulus $q$ was chosen to
split completely. Then, following this result, \cite{LS15} uses a
``switching modulus'' technique in order to relax the arithmetic
assumption on the prime modulus, so that it can be arbitrarily chosen.
Finally, the search to decision reduction has been proved in
\cite{RSW18} to hold even when the extension is not Galois, using the
Oracle with Hidden Center Problem (OHCP) technique from \cite{PRS17}.
Note that this powerful technique has been used recently to provide a
search to decision reduction in the context of NTRU \cite{PS21}. Even
though our work does not reflect these recent progresses, we believe,
as it was shown by our instantiations, that the introduction of the
function field framework paves the way for using these techniques in
the code setting in order to get a full reduction applying
to cryptosystems such as \HQC{} or \BIKE.

\bibliographystyle{alphaurl}

\begin{thebibliography}{AASA{\etalchar{+}}20}

\bibitem[AAB{\etalchar{+}}19]{AABBBDGZCH19}
Carlos {Aguilar Melchor}, Nicolas Aragon, Slim Bettaieb, Loïc Bidoux, Olivier
  Blazy, Jean-Christophe Deneuville, Philippe Gaborit, Gilles Z{\'e}mor, Alain
  Couvreur, and Adrien Hauteville.
\newblock Rank quasi cyclic {(RQC)}.
\newblock Second round submission to the NIST post-quantum cryptography call,
  April 2019.
\newblock URL: \url{https://pqc-rqc.org}.

\bibitem[AAB{\etalchar{+}}21a]{AABBBBDGGGMMPSTZ21}
Carlos {Aguilar Melchor}, Nicolas Aragon, Paulo Barreto, Slim Bettaieb,
  Lo{\"i}c Bidoux, Olivier Blazy, Jean-Christophe Deneuville, Philippe Gaborit,
  Shay Gueron, Tim G{\"u}neysu, Rafael Misoczki, Edoardo Persichetti, Nicolas
  Sendrier, Jean-Pierre Tillich, and Gilles Z{\'e}mor.
\newblock {BIKE}.
\newblock Round 3 Submission to the NIST Post-Quantum Cryptography Call,
  v.~4.2, September 2021.
\newblock URL: \url{https://bikesuite.org}.

\bibitem[AAB{\etalchar{+}}21b]{AABBBDGPZ21a}
Carlos {Aguilar Melchor}, Nicolas Aragon, Slim Bettaieb, Loïc Bidoux, Olivier
  Blazy, Jean-Christophe Deneuville, Philippe Gaborit, Edoardo Persichetti,
  Gilles Z{\'{e}}mor, and Jurjen Bos.
\newblock {HQC}.
\newblock Round 3 Submission to the NIST Post-Quantum Cryptography Call, June
  2021.
\newblock \url{https://pqc-hqc.org/doc/hqc-specification_2021-06-06.pdf}.

\bibitem[AASA{\etalchar{+}}20]{NIST20}
Gorjan Alagic, Jacob Alperin-Sheriff, Daniel Apon, David Cooper, Quynh Dang,
  John Kelsey, Yi-Kai Liu, Carl Miller, Dustin Moody, Rene Peralta, et~al.
\newblock Status report on the second round of the {NIST} post-quantum
  cryptography standardization process, 2020.

\bibitem[ABD{\etalchar{+}}19]{ABDGHRTZABBBO19}
Nicolas Aragon, Olivier Blazy, Jean-Christophe Deneuville, Philippe Gaborit,
  Adrien Hauteville, Olivier Ruatta, Jean-Pierre Tillich, Gilles Z{\'{e}}mor,
  Carlos {Aguilar Melchor}, Slim Bettaieb, Lo{\"i}c Bidoux, Bardet Magali, and
  Ayoub Otmani.
\newblock {ROLLO} (merger of {Rank-Ouroboros, LAKE and LOCKER}).
\newblock Second round submission to the NIST post-quantum cryptography call,
  March 2019.
\newblock URL: \url{https://pqc-rollo.org}.

\bibitem[AD97]{AD97}
Mikl{\'{o}}s Ajtai and Cynthia Dwork.
\newblock A public-key cryptosystem with worst-case/average-case equivalence.
\newblock In {\em Proceedings of the Twenty-Ninth Annual {ACM} Symposium on the
  Theory of Computing, El Paso, Texas, USA, May 4-6, 1997}, pages 284--293,
  1997.
\newblock URL: \url{http://doi.acm.org/10.1145/258533.258604}, \href
  {https://doi.org/10.1145/258533.258604} {\path{doi:10.1145/258533.258604}}.

\bibitem[AHI{\etalchar{+}}17]{AHIKV17}
Benny Applebaum, Naama Haramaty, Yuval Ishai, Eyal Kushilevitz, and Vinod
  Vaikuntanathan.
\newblock Low-complexity cryptographic hash functions.
\newblock In {\em {ITCS}}, volume~67 of {\em LIPIcs}, pages 7:1--7:31. Schloss
  Dagstuhl - Leibniz-Zentrum f{\"{u}}r Informatik, 2017.

\bibitem[{Ale}03]{A03}
{Alekhnovich, Michael}.
\newblock {More on Average Case vs Approximation Complexity}.
\newblock In {\em {44th Symposium on Foundations of Computer Science {(FOCS}
  2003), 11-14 October 2003, Cambridge, MA, USA, Proceedings}}, pages 298--307.
  {IEEE} Computer Society, 2003.
\newblock \href {https://doi.org/10.1109/SFCS.2003.1238204}
  {\path{doi:10.1109/SFCS.2003.1238204}}.

\bibitem[BCG{\etalchar{+}}20]{BCGIKS20}
Elette Boyle, Geoffroy Couteau, Niv Gilboa, Yuval Ishai, Lisa Kohl, and Peter
  Scholl.
\newblock Efficient pseudorandom correlation generators from ring-{LPN}.
\newblock In Daniele Micciancio and Thomas Ristenpart, editors, {\em Advances
  in Cryptology - CRYPTO}, pages 387--416, Cham, 2020. Springer International
  Publishing.

\bibitem[BH08]{BP08}
Peter Beelen and Tom H{\o}holdt.
\newblock The decoding of algebraic geometry codes.
\newblock In {\em Advances in algebraic geometry codes}, volume~5 of {\em Ser.
  Coding Theory Cryptol.}, pages 49--98. World Sci. Publ., Hackensack, NJ,
  2008.

\bibitem[BLVW19]{BLVW19}
Zvika Brakerski, Vadim Lyubashevsky, Vinod Vaikuntanathan, and Daniel Wichs.
\newblock Worst-case hardness for {LPN} and cryptographic hashing via code
  smoothing.
\newblock In Yuval Ishai and Vincent Rijmen, editors, {\em Advances in
  Cryptology - {EUROCRYPT} 2019 - 38th Annual International Conference on the
  Theory and Applications of Cryptographic Techniques, Darmstadt, Germany, May
  19-23, 2019, Proceedings, Part {III}}, volume 11478 of {\em LNCS}, pages
  619--635. Springer, 2019.
\newblock \href {https://doi.org/10.1007/978-3-030-17659-4\_21}
  {\path{doi:10.1007/978-3-030-17659-4\_21}}.

\bibitem[CFS01]{CFS01}
Nicolas Courtois, Matthieu Finiasz, and Nicolas Sendrier.
\newblock How to achieve a {McEliece}-based digital signature scheme.
\newblock In {\em Advances in Cryptology - ASIACRYPT~2001}, volume 2248 of {\em
  LNCS}, pages 157--174, Gold Coast, Australia, 2001. Springer.

\bibitem[Cha96]{C94}
Robin~J. Chapman.
\newblock A simple proof of {N}oether's {T}heorem.
\newblock {\em Glasgow Math. J.}, 38:49--51, 1996.

\bibitem[Con]{Conrad-Carlitz}
Keith Conrad.
\newblock Carlitz extensions.
\newblock URL:
  \url{https://kconrad.math.uconn.edu/blurbs/gradnumthy/carlitz.pdf}.

\bibitem[CR21]{CR21}
Alain Couvreur and Hugues Randriambololona.
\newblock {\em Algebraic geometry codes and some applications}, chapter~15,
  pages 307--361.
\newblock {CRC press}, 2021.

\bibitem[DP12]{DP12}
Ivan Damg{\aa}rd and Sunoo Park.
\newblock Is public-key encryption based on {LPN} practical?
\newblock {\em {IACR} Cryptol. ePrint Arch.}, page 699, 2012.
\newblock URL: \url{http://eprint.iacr.org/2012/699}.

\bibitem[DRT21]{DRT21}
Thomas {Debris-Alazard}, Maxime Remaud, and Jean{-}Pierre Tillich.
\newblock Quantum reduction of finding short code vectors to the decoding
  problem.
\newblock preprint, November 2021.
\newblock arXiv:2106.02747.
\newblock URL: \url{https://arxiv.org/abs/2106.02747}.

\bibitem[DST19]{DST19a}
Thomas {Debris-Alazard}, Nicolas Sendrier, and Jean-Pierre Tillich.
\newblock Wave: A new family of trapdoor one-way preimage sampleable functions
  based on codes.
\newblock In {\em Advances in Cryptology - ASIACRYPT~2019}, LNCS, Kobe, Japan,
  December 2019. Springer.

\bibitem[FS96]{FS96}
Jean-Bernard Fischer and Jacques Stern.
\newblock An efficient pseudo-random generator provably as secure as syndrome
  decoding.
\newblock In Ueli Maurer, editor, {\em Advances in Cryptology - EUROCRYPT'96},
  volume 1070 of {\em LNCS}, pages 245--255. Springer, 1996.

\bibitem[Gab05]{G05}
Philippe Gaborit.
\newblock Shorter keys for code based cryptography.
\newblock In {\em Proceedings of the 2005 International Workshop on Coding and
  Cryptography ({WCC} 2005)}, pages 81--91, Bergen, Norway, March 2005.

\bibitem[Gop81]{G81}
Valerii~D. Goppa.
\newblock Codes on algebraic curves.
\newblock {\em Dokl. Akad. Nauk SSSR}, 259(6):1289--1290, 1981.
\newblock In Russian.

\bibitem[Hay74]{H74}
David~R Hayes.
\newblock Explicit class field theory for rational function fields.
\newblock {\em Transactions of the American Mathematical Society}, 189:77--91,
  1974.

\bibitem[HKL{\etalchar{+}}12]{HKLPP12}
Stephan Heyse, Eike Kiltz, Vadim Lyubashevsky, Christof Paar, and Krzysztof
  Pietrzak.
\newblock {Lapin: An efficient authentication protocol based on Ring-LPN}.
\newblock In Anne Canteaut, editor, {\em {Fast Software Encryption - 19th
  International Workshop, FSE 2012, Washington, DC, USA, March 19-21, 2012}},
  volume 7549 of {\em LNCS}, pages 346--365, Washington DC, United States,
  2012. {Springer}.

\bibitem[HP95]{HP95}
Tom {H{\o}holdt} and Ruud {Pellikaan}.
\newblock On the decoding of algebraic--geometric codes.
\newblock {\em IEEE Trans. Inform. Theory}, 41(6):1589--1614, Nov 1995.

\bibitem[LN97]{LN97}
Rudolf Lidl and Harald Niederreiter.
\newblock {\em Finite fields}, volume~20 of {\em Encyclopedia of Mathematics
  and its Applications}.
\newblock Cambridge University Press, Cambridge, second edition, 1997.
\newblock With a foreword by P. M. Cohn.

\bibitem[LPR10]{LPR10}
Vadim Lyubashevsky, Chris Peikert, and Oded Regev.
\newblock On ideal lattices and learning with errors over rings.
\newblock In {\em Advances in Cryptology - EUROCRYPT2010}, volume 6110 of {\em
  LNCS}, pages 1--23. Springer, 2010.
\newblock URL: \url{http://dx.doi.org/10.1007/978-3-642-13190-5_1}, \href
  {https://doi.org/10.1007/978-3-642-13190-5\_1}
  {\path{doi:10.1007/978-3-642-13190-5\_1}}.

\bibitem[LS15]{LS15}
Adeline Langlois and Damien Stehl{\'e}.
\newblock {Worst-case to average-case reductions for module lattices}.
\newblock {\em Des. Codes Cryptogr.}, 75:565--599, 2015.
\newblock URL: \url{https://hal.archives-ouvertes.fr/hal-01240452}.

\bibitem[Lyu11]{L11a}
Vadim Lyubashevsky.
\newblock Search to decision reduction for the learning with errors over rings
  problem.
\newblock In {\em {ITW}}, pages 410--414. {IEEE}, 2011.

\bibitem[McE78]{M78}
Robert~J. McEliece.
\newblock {\em A Public-Key System Based on Algebraic Coding Theory}, pages
  114--116.
\newblock Jet Propulsion Lab, 1978.
\newblock DSN Progress Report 44.

\bibitem[MR04]{MR04}
D.~Micciancio and O.~Regev.
\newblock Worst-case to average-case reductions based on {G}aussian measures.
\newblock In {\em 45th Annual {IEEE} {S}ymposium on {F}oundations of {C}omputer
  {S}cience}, pages 372--381, 2004.
\newblock \href {https://doi.org/10.1109/FOCS.2004.72}
  {\path{doi:10.1109/FOCS.2004.72}}.

\bibitem[MTSB12]{MTSB12}
Rafael Misoczki, Jean-Pierre Tillich, Nicolas Sendrier, and Paulo S. L.~M.
  Barreto.
\newblock {MDPC}-{McEliece}: New {McEliece} variants from moderate density
  parity-check codes, 2012.
\newblock URL: \url{http://eprint.iacr.org/2012/409}, \href
  {https://doi.org/10.1109/ISIT.2013.6620590}
  {\path{doi:10.1109/ISIT.2013.6620590}}.

\bibitem[{Nie}01]{NX01}
{Niederreiter, Harald and Xing, Chaoping}.
\newblock {\em Rational points on curves over finite fields: theory and
  applications}, volume 288.
\newblock Cambridge University Press, 2001.

\bibitem[Noe32]{Noether32}
Emmy Noether.
\newblock Normalbasis bei {K}\"orpern ohne {H}\"ohere {V}erzweigung.
\newblock {\em J. Reine Angew. Math.}, 167:147--152, 1932.

\bibitem[PMS21]{PS21}
Alice Pellet-Mary and Damien Stehl{\'e}.
\newblock {On the hardness of the NTRU problem}.
\newblock In {\em {Asiacrypt 2021 - 27th Annual International Conference on the
  Theory and Applications of Cryptology and Information Security}}, Advances in
  Cryptology -- ASIACRYPT 2021. Lecture Notes in Computer Science, vol 13090.,
  Singapore, Singapore, December 2021.
\newblock URL: \url{https://hal.archives-ouvertes.fr/hal-03348022}, \href
  {https://doi.org/10.1007/978-3-030-92062-3\_1}
  {\path{doi:10.1007/978-3-030-92062-3\_1}}.

\bibitem[PRSD17]{PRS17}
Chris Peikert, Oded Regev, and Noah Stephens-Davidowitz.
\newblock Pseudorandomness of ring-{LWE} for any ring and modulus.
\newblock In {\em Proceedings of the 49th Annual ACM SIGACT Symposium on Theory
  of Computing}, pages 461--473, 2017.

\bibitem[Reg05]{R05}
Oded Regev.
\newblock On lattices, learning with errors, random linear codes, and
  cryptography.
\newblock In {\em Proceedings of the 37th Annual {ACM} Symposium on Theory of
  Computing, Baltimore, MD, USA, May 22-24, 2005}, pages 84--93, 2005.
\newblock URL: \url{http://doi.acm.org/10.1145/1060590.1060603}, \href
  {https://doi.org/10.1145/1060590.1060603}
  {\path{doi:10.1145/1060590.1060603}}.

\bibitem[Ros02]{R02}
Michael Rosen.
\newblock {\em Number {{Theory}} in {{Function Fields}}}.
\newblock Graduate {{Texts}} in {{Mathematics}}. {Springer}, 2002.

\bibitem[RSW18]{RSW18}
Miruna Rosca, Damien Stehl{\'e}, and Alexandre Wallet.
\newblock On the ring-{LWE} and polynomial-{LWE} problems.
\newblock In {\em Annual International Conference on the Theory and
  Applications of Cryptographic Techniques}, pages 146--173. Springer, 2018.

\bibitem[Sen11]{S11}
Nicolas Sendrier.
\newblock Decoding one out of many.
\newblock In {\em Post-Quantum Cryptography~2011}, volume 7071 of {\em LNCS},
  pages 51--67, 2011.

\bibitem[SSTX09]{SSTX09}
Damien Stehl{\'{e}}, Ron Steinfeld, Keisuke Tanaka, and Keita Xagawa.
\newblock Efficient public key encryption based on ideal lattices.
\newblock In Mitsuru Matsui, editor, {\em Advances in Cryptology - {ASIACRYPT}
  2009, 15th International Conference on the Theory and Application of
  Cryptology and Information Security, Tokyo, Japan, December 6-10, 2009.
  Proceedings}, volume 5912 of {\em LNCS}, pages 617--635. Springer, 2009.
\newblock \href {https://doi.org/10.1007/978-3-642-10366-7\_36}
  {\path{doi:10.1007/978-3-642-10366-7\_36}}.

\bibitem[Ste93]{S93}
Jacques Stern.
\newblock A new identification scheme based on syndrome decoding.
\newblock In D.R. Stinson, editor, {\em Advances in Cryptology - CRYPTO'93},
  volume 773 of {\em LNCS}, pages 13--21. Springer, 1993.

\bibitem[Sti09]{S09}
Henning Stichtenoth.
\newblock {\em Algebraic function fields and codes}, volume 254 of {\em
  Graduate Texts in Mathematics}.
\newblock Springer-Verlag, Berlin, second edition, 2009.

\bibitem[TVZ82]{TVZ82}
Michael~A. Tsfasman, Sergei~G. Vl{\u{a}}du\c{t}, and T.~Zink.
\newblock Modular curves, {Shimura} curves, and {Goppa} codes, better than
  {Varshamov-Gilbert} bound.
\newblock {\em Math. Nach.}, 109(1):21--28, 1982.

\bibitem[YZW{\etalchar{+}}19]{YZWGL19}
Yu~Yu, Jiang Zhang, Jian Weng, Chun Guo, and Xiangxue Li.
\newblock Collision resistant hashing from sub-exponential learning parity with
  noise.
\newblock In {\em {ASIACRYPT} {(2)}}, volume 11922 of {\em Lecture Notes in
  Computer Science}, pages 3--24. Springer, 2019.

\end{thebibliography}
\newcommand{\etalchar}[1]{$^{#1}$}

\end{document}